\newtheorem{theorem}{Theorem}[section]
\newtheorem{lemma}[theorem]{Lemma}
\newtheorem{prop}[theorem]{Proposition}
\newtheorem{cor}[theorem]{Corollary}
\newtheorem{rmk}[theorem]{Remark}
\newtheorem{defn}[theorem]{Definition}
\newtheorem{hyp}[theorem]{Assumption}
\newcounter{defn}
\numberwithin{equation}{section}
\newcommand{\sect}{\vspace{3mm} \setcounter{equation}{0} \setcounter{defn}{0}
\section}
\renewcommand{\Re}{{\rm Re }\;}
\renewcommand{\Im}{{\rm Im }\;}
\newcommand{\be}{\begin{equation}}
\newcommand{\ee}{\end{equation}}
\newcommand{\bea}{\begin{eqnarray}}
\newcommand{\eea}{\end{eqnarray}}
\newcommand{\beas}{\begin{eqnarray*}}
\newcommand{\eeas}{\end{eqnarray*}}
\def\t{\tilde}
\newcommand{\beq}{\begin{equation}}
\newcommand{\eeq}{\end{equation}}
\newcommand{\beqs}{\begin{equation*}}
\newcommand{\eeqs}{\end{equation*}}
\newcommand{\bal}{\begin{aligned}}
\newcommand{\eal}{\end{aligned}}
\begin{document}

\title{TUNNELING EFFECT IN TWO DIMENSIONS WITH VANISHING MAGNETIC FIELDS}

\author{KHALED ABOU ALFA}

\address{ Laboratoire de Math{\'e}matiques Jean Leray\\
Universit{\'e} de Nantes \\
2 rue de la Houssini{\`e}re\\
BP 92208 F-44322 Nantes Cedex 3, France }
\address{\textit{Email-address}: khaled.abou-alfa@univ-nantes.fr}

\subjclass[2020]{35P15, 35J10, 81Q10, 81Q15}
\keywords{Schr{\"o}dinger equation, magnetic Laplacian, Agmon estimates, eikonal equation, pseudo-differential operator, tunnel effect}

\begin{abstract} In this paper, we consider the semiclassical 2D magnetic Schr{\"o}dinger operator in the case where the magnetic field vanishes along a smooth closed curve. Assuming that this curve has an axis of symmetry, we prove that semi-classical tunneling occurs. The main result is an expression of the splitting of the first two eigenvalues and an explicit tunneling formula. 
\end{abstract}

\maketitle

\sect{Introduction}

\subsection{Motivation}
We consider two functions $\mathbf{A}:\mathbb{R}_x^d\longrightarrow \mathbb{R}^d$ and $\mathbf{V}:\mathbb{R}_x^d\longrightarrow\mathbb{R}$ corresponding to the magnetic potential and the electric potential respectively. These two potentials provide an electromagnetic field $(E,B)$ defined by
$$ E=\nabla \mathbf{V} \text{  and   } B=\nabla\times \mathbf{A}. $$
Considering the Schr{\"o}dinger equation
\begin{equation}
    \mathrm{i}\hbar \partial_t\Psi =\left(\left( -\mathrm{i}\hbar\nabla+\mathbf{A} \right)^2+\mathbf{V}\right)\Psi,
    \label{Sch1}
\end{equation}
for $t>0$, $x\in\mathbb{R}^d$ and $\Psi$ a normalized solution of \eqref{Sch1}, $\left| \Psi(x,t) \right|^2$ is then the probability density of presence of the particle at point $x$ and at time $t$. Here, the small parameter $\hbar$ stands for the Planck's constant which is $\hbar\sim 10^{-34}J.s$ in physical situations. This parameter is considered here as a strictly positive semi-classical parameter close to $0^+$, in the spirit of the so called semi-classical analysis.
 
A particular solution of equation \eqref{Sch1} is then
$$ \Psi(x,t)=\varphi(x)\mathrm{e}^{-\frac{\mathrm{i}\lambda t}{\hbar}}, $$
where $\lambda$ and $\varphi$ verify 
$$ \left(\left( -\mathrm{i}\hbar\nabla+\mathbf{A} \right)^2+\mathbf{V}\right)\varphi=\lambda\varphi. $$

We are interested here in the determination of such a so-called eigenpair $(\lambda,\varphi)$ in the semiclassical limit ($\hbar\rightarrow0$).

In some cases (where there are symmetries), the first two lowest eigenvalues may be exponentially close (w.r.t $\hbar$), leading to what is called tunneling effect. The tunneling effect is an important physical phenomenon. Mathematically, this phenomenon was studied in particular in the $80$'s by Helffer and Sj\"ostrand in the case where the magnetic potential $\mathbf{A}=0$ and the electric potential has non-degenerate minima \cite{helffer1984multiple,helffer1985multiple,helffer1985puits,helffer1987effet}. They proved that the ground states are concentrated near the minima of the potential $\mathbf{V}$.

This article deals with the same tunneling question, but when $\mathbf{V}=0$ and in a particular geometric situation. A first answer to this type of question was found by Bonnaillie-H{\'e}rau-Raymond \cite{BHR-purely} in the case where the magnetic field is constant in a open, bounded and regular domain of $\mathbb{R}^2$ with the Neumann condition on the boundary. In that work, the authors found an explicit expression of the difference between the first two eigenvalues, leading to the first explicit tunneling formula in a pure magnetic situation. A second case was studied by Fournais-Helffer-Kachmar \cite{fournais2022tunneling} in the case where the magnetic field is a piecewise constant function with a jump discontinuity along a symmetric curve.

In this paper, we work with a variable magnetic field in $\mathbb{R}^2$. We prove that, under some symmetry and small variability conditions on the magnetic field, the tunneling effect also occurs. Note this work is the first one providing tunneling effect results in the case where the magnetic field is variable. 

\subsection{Semiclassical magnetic Laplacian}
The purely magnetic Laplacian in $\mathbb{R}^2$ is defined by
$$ \mathcal{L}_{\hbar}=\left( -\mathrm{i}\hbar\nabla+\mathbf{A} \right)^2,\,\,\,\,\,\,\operatorname{Dom}\left( \mathcal{L}_{\hbar} \right)=\left\{ \psi \in \operatorname{L}^2\left( \mathbb{R}^2\right):\mathcal{L}_{\hbar}\psi\in \operatorname{L}^2\left( \mathbb{R}^2 \right)   \right\}, $$
with $\mathbf{A}=(A_1,A_2)\in\mathcal{C}^{\infty}(\mathbb{R}^2,\mathbb{R}^2)$. Note that by gauge invariance, this self adjoint operator is unitarily equivalent to 
$$ \left( -\mathrm{i}\hbar\nabla+\mathbf{A}+\nabla\phi \right)^2, $$
for any suitable real valued function $\phi$. This gauge transformation ensures that the spectrum of $\mathcal{L}_{\hbar}$ depends only on the magnetic field $B=\nabla\times \mathbf{A}$. We assume that $\underset{\left| x \right|\rightarrow +\infty}{\text{lim}}B(x)=+\infty$, to ensure that the resolvent of $\mathcal{L}_{\hbar}$ is compact. In this case, we can consider the non-decreasing sequence of eigenvalues $\left( \lambda_n(\hbar) \right)_{n\in\mathbb{N}^{\ast}}$.

In this paper, we will focus on a variable magnetic field that vanishes to order $k\geq1$ on a smooth compact connected curve $\Gamma$ of $\mathbb{R}^2$. In Section \ref{section2}, the tubular coordinates $(s,t)$ in the neighborhood of the zero curve $\Gamma$ are defined in detail, where $s$ is the arc length of $\Gamma$ and $t$ is the normal distance to $\Gamma$. With this tubular coordinates and the diffeomorphism $\Phi$ defined in \eqref{Phii}, we define the function $\gamma$ on $\Gamma$ by
\begin{equation}
 \gamma(s):=\frac{1}{k!}\left( \partial^k_{t^k}\left( B\circ\Phi \right)(s,0) \right).
    \label{gammaaa}
\end{equation}
The objective is then to find an explicit approximation of the difference between the first two eigenvalues $\lambda_2(\hbar)-\lambda_1(\hbar)$ of $\mathcal{L}_{\hbar}$, thanks to $\gamma$ and other geometric quantities. 

An important toy model in our context is the so-called generalized Montgomery operator, which is the self-adjoint realization, on $L^2(\mathbb{R},dt)$, of the following operator
$$ \mathfrak{h}_{\xi}^{[k]}=D_{t}^2+\left( \xi-\frac{t^{k+1}}{k+1}\right)^2,\,\,\,\,\,\,k\in \mathbb{N}^{\ast}. $$
The spectrum of this operator was already studied in \cite{montg}, in which it is proven that the function $\mathbb{R}\ni\xi\mapsto\nu^{[k]}(\xi)$ admits a unique non-degenerate minimum at $\xi_0^{[k]}$ and that $\nu^{[k]}(\xi_0^{[k]}) > 0$, where $\nu^{[k]}(\xi)$ is the eigenvalue of $\mathfrak{h}_{\xi}^{[k]}$. This function will be crucial is the monic result.

The spectrum of the magnetic Laplacian $\mathcal{L}_{\hbar}$ has been the subject of many works \cite{bernoff1998onset,bonnaillie2005fundamental,fournais2010spectral,helffer2001magnetic,helffer2003upper,lu1999eigenvalue}, particularly in the context of superconductivity, in which the asymptotic description of the third critical field associated with the Ginzburg-Landau functional is related to the ground state energy of the magnetic Laplacian.

In this paper we will follow the strategy of Helffer and Sj\"ostrand which has been recently applied to understand the tunneling effect for the Neumann realization in a bounded domain. This strategy has been already used in the paper \cite{BHR-purely} by Bonnaillie-H{\'e}rau-Raymond and in paper \cite{fournais2022tunneling} by Fournais-Helffer-Kachmar.

Earlier spectral rigorous results were obtained in the case of the magnetic laplacian with vanishing magnetic field \cite{montgomery1995hearing,helffer2009spectral,helffer1996semiclassical,dombrowski-Raymond}. Helffer and Morame exhibited normal Agmon estimates which allow to show the localization of the eigenfunctions in the neighborhood of the zero curve $\Gamma$ \cite{helffer1996semiclassical}. Helffer and Kordyukov found the first term of the asymptotic expansion of the groundstate energy of $\mathcal{L}_{\hbar}$ in \cite{helffer2009spectral}, and the following asymptotic formula is established
\begin{equation}
\lambda_1(\hbar)=\gamma_0^{\frac{2}{k+2}}\nu^{[k]}\left( \xi_0^{[k]} \right)\hbar^{\frac{2k+2}{k+2}}+o\left( \hbar^{\frac{2k+2}{k+2}} \right),
    \label{int1}
\end{equation}
where $\gamma_0>0$ is the minimum of the function $\gamma$ on $\Gamma$. In \cite{dombrowski-Raymond}, Dombrowski and Raymond also proved that when the function $\gamma$ has a unique and non-degenerate minimum $\gamma_0>0$ at $s=0$ on $\Gamma$,  the first eigenfunctions are somehow localized near the point $s=0$. They established that, for $k=1$ and for all $n\geq1$,
\begin{equation}
\lambda_n(\hbar)=\theta_0^n\hbar^{\frac{4}{3}}+\theta_1^n\hbar^{\frac{5}{3}}+o\left( \hbar^{\frac{5}{3}} \right),
    \label{int2}
\end{equation}
with $\theta_0^n:=\gamma_0^{\frac{2}{3}}\nu^{[1]}\left( \xi_0^{[1]} \right)$ and $\theta_1^n:=\gamma_0^{\frac{2}{3}}C_0+\gamma_0^{\frac{2}{3}}(2n-1)\left( \frac{2\nu^{[1]}\left( \xi_0^{[1]}\right) \left(\nu^{[1]} \right)''\left( \xi_0^{[1]}\right)\gamma_0}{3\gamma''(0)} \right)$, where $C_0$ is a constant.

An open question for the magnetic Laplacian was whether the eigenfunctions have a similar approximation as the eigenvalues in \eqref{int2}, i.e. whether we can approximate the eigenfunctions by asymptotics of the form
\begin{equation}
\mathrm{e}^{-\frac{\Phi(s)}{\hbar^{\alpha}}}\displaystyle\sum_{j\geq1}a_{j}(s,t)\hbar^j,
    \label{int3}
\end{equation}
for some $\alpha>0$. A positive answer to this question was found by Bonnaillie-H{\'e}rau-Raymond in \cite{BHR-BKW}, in which they give a formal WKB expansions for the eigenfunctions of the magnetic Laplacian. The function $\Phi$ that appears in \eqref{int3} is a solution of an equation called ``eikonal equation''. In papers \cite{BHR-purely} and \cite{fournais2022tunneling}, the eikonal equation has explicit solutions and thus the function $\Phi$ can be found explicitly as a function of the curvature of the boundary. This shows that the tunneling effect is linked the curvature. However, in this paper, the situation is different. 

In this work, the eikonal equation is given by (see section \ref{eikn}):
\begin{equation}
\gamma(\sigma)^{\frac{2}{k+2}}\nu^{[k]}\left( \xi_0^{[k]}+\mathrm{i}\Phi'(\sigma) \right)=\gamma_0^{\frac{2}{k+2}}\nu^{[k]}\left( \xi_0^{[k]} \right),
    \label{int4}
\end{equation}
where $\gamma_0:=\underset{s\in\Gamma}{\text{min}}\gamma(s)>0$. We note that in this equation, we implicitly use an holomorphic extension, in a complex neighborhood of $\xi_0^{[k]}$, of the function $\mathbb{R}\ni\xi\mapsto\nu^{[k]}(\xi)$ associated to the Montgomery operator $\mathfrak{h}_{\xi}^{[k]}$. For the solution of this eikonal equation to be a priori well defined, we shall assume that 
$$ \left\| 1-\frac{\gamma_0}{\gamma} \right\|_{\infty} $$
is sufficiently small.

The eikonal equation \eqref{int4} is an implicit complex equation, and a priori its solution is an unknown complex valued function. This induces difficulties not appearing in \cite{BHR-purely} and \cite{fournais2022tunneling}.
\subsection{ Main result}
We work under the following assumptions on the geometry and the potential.
\begin{hyp}
It is assumed that the magnetic field $B$ vanishes out exactly to order $k\geq1$ on a closed, smooth, non empty compact and connected curve $\Gamma\subset\mathbb{R}^2$. It is further assumed that 
\begin{enumerate}
    \item[(i)] $B$ is symmetric with respect to the $x_2$-axis and therefore $\Gamma$ also;
    \item[(ii)]the function $\gamma$ on $\Gamma$ admits a unique non-degenerate minimum $\gamma_0 >0$ which is reached only at two distinct symmetric points $a_1,a_2\in\Gamma$. We suppose that $s_r$ and $s_l$ are the respective arc lengths for $a_1$ and $a_2$;
    \item[(iii)] $ \left\| 1-\frac{\gamma_0}{\gamma} \right\|_{\infty} $ is sufficiently small.
\end{enumerate}
\label{Hyp0}
\end{hyp}

We define the so-called Agmon distance attached to the two wells as
\begin{equation}
    \mathrm{S}=\text{min}\left\{ \mathrm{S}_u,\mathrm{S}_d \right\} ,
    \label{S}
\end{equation}
with ``up" and ``down" constants $\mathrm{S}_u$ and $\mathrm{S}_d$ defined by
\begin{equation}
\mathrm{S}_u=\displaystyle \int_{[s_r,s_l]}\gamma(s)^{\frac{1}{k+2}} \mathfrak{D}(s) ds\,\,\text{and}\,\,\,\mathrm{S}_d=\displaystyle \int_{[s_l,s_r]}\gamma(s)^{\frac{1}{k+2}} \mathfrak{D}(s) ds,
    \label{Sud}
\end{equation}
where, $\mathfrak{D}$ is a positive function defined on $\Gamma$ which will be defined later in \eqref{M}.

Let $L=\frac{\left|\Gamma\right|}{2}$. We define the two constants $\mathrm{A}_u$ and $\mathrm{A}_d$ by
\begin{equation}
\mathrm{A}_u:=\operatorname{exp}\left( -\int_{s_r}^{0}\Re\left(\frac{\mathfrak{V}_r'(s)+2\mathfrak{R}_r(s)-2\delta_{1,1}^{[k]}}{2\mathfrak{V}_r(s)}\right)ds \right),
    \label{Au}
\end{equation}
and
\begin{equation}
\mathrm{A}_d:=\operatorname{exp}\left( -\int_{s_l}^{L}\Re\left(\frac{\mathfrak{V}_l'(s)+2\mathfrak{R}_l(s)-2\delta_{1,1}^{[k]}}{2\mathfrak{V}_l(s)}\right)ds \right),
    \label{Ad}
\end{equation}
where $\delta_{1,1}^{[k]}$ is the second term of the asymptotic decomposition of the ground state energy (see Theorem \ref{BKW}), and the functions $\mathfrak{V}_r$, $\mathfrak{V}_l$, $\mathfrak{R}_r$ and $\mathfrak{R}_l$ are defined in remarks \ref{rmk} and \ref{rmk2}.

Let us state the main theorem of this paper, which gives an optimal estimate of the tunneling effect when the magnetic field vanishes along a curve $\Gamma$.
\begin{theorem}
Under Assumption \ref{Hyp0},  there exists $\varepsilon>0$ such that if 
$$\underset{s\in [-L,L]}{\operatorname{sup}}\left| 1-\frac{\gamma_0}{\gamma} \right|<\varepsilon,$$
then the difference between the first two eigenvalues of $\mathcal{L}_{\hbar}$ is given by
$$ \lambda_2(\hbar)-\lambda_1(\hbar)=2\left| \tilde{w}_{l,r} \right|+\mathrm{e}^{-\frac{\mathrm{S}}{\hbar^{1/(k+2)}}}\mathcal{O}(\hbar^2), $$
with
\begin{equation}
\tilde{w}_{l,r}=\zeta^{1/2}\pi^{-1/2}\hbar^{\frac{2k+3}{k+2}}\left(\overline{\mathfrak{V}_r(0)}\mathrm{A}_u\mathrm{e}^{-\frac{\mathrm{S_u}}{\hbar^{1/(k+2)}}}\mathrm{e}^{\mathrm{i}Lf(\hbar)}+\overline{\mathfrak{V}_r(-L)}\mathrm{A}_d\mathrm{e}^{-\frac{\mathrm{S_d}}{\hbar^{1/(k+2)}}}\mathrm{e}^{-L\mathrm{i}f(\hbar)} \right),
    \label{sum}
\end{equation}
where $\mathrm{S}=\operatorname{min}\left\{ \mathrm{S_u},\mathrm{S}_d \right\}$, $\zeta$ is a constant defined in \eqref{zeta}, and
\begin{enumerate}
    \item[1)]the function $\mathfrak{V}_r$ is introduced in remark \ref{rmk1};
    \item[2)]$\mathrm{A}_u$, $\mathrm{A}_d$ are defined in \eqref{Au}, \eqref{Ad} and $\mathrm{S_u}$, $\mathrm{S_d}$ are defined in \eqref{Sud};
    \item[3)]$f(\hbar)=\beta_0/\hbar- \hbar^{\frac{-1}{k+2}}\displaystyle \int_{-L}^{0}\gamma(s)^{\frac{1}{k+2}}\left( \xi_0^{[k]}-\Im \varphi_r(s) \right)ds-\alpha_0$, with
    \begin{enumerate}
        \item[(i)]the constant $\alpha_0$ defined in \eqref{alpha0};
        \item[(ii)]$\varphi_r$ an exact solution of the eikonal equation for the right well introduced in Lemma \ref{Lemme};
        \item[(ii)]the constant $\beta_0$ defined as
        \begin{equation}
            \beta_0:=\frac{1}{\left|\Gamma \right|}\displaystyle\int_{\Omega}B(x)dx,
            \label{beta0}
        \end{equation}
        where $\Omega$ is the open domain formed by the interior of $\Gamma$.
    \end{enumerate}
    
\end{enumerate}

\label{Thm0}
\end{theorem}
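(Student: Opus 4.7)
The plan is to follow the Helffer-Sj{\"o}strand double-well strategy as adapted to purely magnetic problems in \cite{BHR-purely,fournais2022tunneling}, exploiting the $x_2$-axis symmetry together with the WKB approximations constructed earlier in the paper. First, for each well $a_j$ I would assemble a localized WKB quasimode of the form \eqref{int3} whose phase is given by the corresponding exact solution $\varphi_r,\varphi_l$ of the eikonal equation \eqref{int4} - whose existence is guaranteed by Lemma \ref{Lemme} once $\|1-\gamma_0/\gamma\|_\infty$ is small - and whose amplitudes are produced by the transport equations, the leading terms being $\mathfrak{V}_r,\mathfrak{V}_l$ and the first subprincipal corrections giving rise to $\mathfrak{R}_r,\mathfrak{R}_l$. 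Truncating these WKB series at a sufficiently high order produces quasimodes $f_r,f_l$ concentrated near $a_1,a_2$ with error $\mathcal{O}(\hbar^\infty)$ relative to the WKB weight, and $L^2$-normalization fixes the constant $\zeta$ in \eqref{sum}.

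Next I would establish tangential Agmon estimates along $\Gamma$. Starting from the normal Agmon bounds of Helffer-Morame \cite{helffer1996semiclassical}, which already confine eigenfunctions in an $\hbar^{1/(k+2)}$-tubular neighborhood of $\Gamma$, a standard Agmon argument with weight $\exp\bigl(d_{\mathrm{Ag}}(s)/\hbar^{1/(k+2)}\bigr)$, where $d_{\mathrm{Ag}}$ is the distance along $\Gamma$ induced by the density $\gamma(s)^{1/(k+2)}\mathfrak{D}(s)$ appearing in \eqref{Sud}, yields exponential decay of the true eigenfunctions away from $\{a_1,a_2\}$. Combined with the quasimode construction and the two-dimensional spectral gap extracted from \eqref{int2}, this implies that the spectral subspace attached to $\lambda_1(\hbar),\lambda_2(\hbar)$ lies within $\mathcal{O}(\mathrm{e}^{-\mathrm{S}/\hbar^{1/(k+2)}}\hbar^\infty)$ of $\operatorname{span}\{f_r,f_l\}$.

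I would then form the $2\times 2$ interaction matrix of $\mathcal{L}_\hbar$ on this approximately invariant subspace, after symmetrization and antisymmetrization under the reflection $x_1\mapsto -x_1$. The diagonal entries coincide with $\lambda_1(\hbar)$ modulo exponentially small errors, and the splitting reduces to $\lambda_2(\hbar)-\lambda_1(\hbar)=2|\tilde{w}_{l,r}|$ up to terms of order $\mathrm{e}^{-\mathrm{S}/\hbar^{1/(k+2)}}\mathcal{O}(\hbar^2)$. A Green-type integration by parts then rewrites $\tilde{w}_{l,r}$ as a boundary integral localized near the midpoints of the two arcs joining $a_1$ to $a_2$, namely the "upper" arc through $s=0$ and the "lower" arc through $s=\pm L$; this is the geometric origin of the two-term structure in \eqref{sum}.

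The main obstacle, on which I would spend most of the work, is the explicit evaluation of these two boundary integrals. Unlike in \cite{BHR-purely,fournais2022tunneling}, where the eikonal equation has a real solution and the tunneling phase is purely decaying, here \eqref{int4} forces $\varphi_r$ to be genuinely complex-valued, so the WKB phase yields simultaneously the Agmon weights $\mathrm{e}^{-\mathrm{S}_u/\hbar^{1/(k+2)}}$, $\mathrm{e}^{-\mathrm{S}_d/\hbar^{1/(k+2)}}$ and the oscillatory factors $\mathrm{e}^{\pm\mathrm{i}Lf(\hbar)}$. A complex Laplace / stationary-phase computation, performed using the holomorphic extension of $\nu^{[k]}$ in a complex neighborhood of $\xi_0^{[k]}$, produces the prefactor $\zeta^{1/2}\pi^{-1/2}\hbar^{(2k+3)/(k+2)}$ together with the amplitude factors $\mathrm{A}_u,\mathrm{A}_d$ coming from the transport equations, which explains their expressions \eqref{Au}, \eqref{Ad} involving $\mathfrak{V}'/(2\mathfrak{V})$ and the subprincipal data $\mathfrak{R},\delta^{[k]}_{1,1}$. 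The flux phase $\beta_0/\hbar$ in $f(\hbar)$ is then tracked through the gauge changes required to glue the two one-well constructions: the circulation of $\mathbf{A}$ around $\Gamma$ equals $\int_\Omega B\,dx=|\Gamma|\beta_0$ by Stokes' theorem, and the remaining semiclassical corrections in $f(\hbar)$ come from integrating $\Im\varphi_r$ along $\Gamma$ and the constant $\alpha_0$ from the subprincipal amplitude. Diagonalizing the $2\times 2$ interaction matrix then yields the announced formula for $\lambda_2(\hbar)-\lambda_1(\hbar)$.
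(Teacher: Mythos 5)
Your global architecture (WKB quasimodes attached to the two wells, localization along $\Gamma$, a $2\times2$ interaction matrix, evaluation of the off-diagonal term by integration by parts at $\sigma=0$ and $\sigma=-L$, flux phase $\beta_0$ via Stokes) is the same as the paper's, but there is a genuine gap at the heart of your second paragraph: the optimal tangential decay cannot be obtained by ``a standard Agmon argument'' with the weight $\exp\left(d_{\mathrm{Ag}}(s)/\hbar^{1/(k+2)}\right)$. Because the eikonal equation \eqref{int4} is complex, the classical energy identity with a real Lipschitz weight only yields decay at the non-optimal rate associated with the sub-solution $\Phi(\sigma)=\sqrt{\nu^{[k]}(\xi_0^{[k]})/2}\int_{s_r}^{\sigma}\sqrt{\gamma_r(\tilde\sigma)^{2/(k+2)}-\gamma_0^{2/(k+2)}}\,d\tilde\sigma$ (this is exactly the content of the remark following Assumption \ref{hyp2}), which is strictly smaller than the true Agmon distance $\Phi_r=\int\gamma^{1/(k+2)}\Re\varphi_r\,d\tilde\sigma$. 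With only that rate, the remainders in the interaction matrix are of size $\mathrm{e}^{-(1-\epsilon)\mathrm{S}/\hbar^{1/(k+2)}}$ and are not negligible against the main term, which is itself of order $\hbar^{\frac{2k+3}{k+2}}\mathrm{e}^{-\mathrm{S}/\hbar^{1/(k+2)}}$; moreover your claimed error $\mathcal{O}\left(\mathrm{e}^{-\mathrm{S}/\hbar^{1/(k+2)}}\hbar^{\infty}\right)$ for the distance between the spectral subspace and $\operatorname{span}\{f_r,f_l\}$ is stronger than what even the optimal theory provides. The paper's Sections \ref{section5}--\ref{Section 7} exist precisely to close this gap: conjugation by $\mathrm{e}^{\Phi/h}$ with $\Phi$ a sub-solution built from the exact complex eikonal solution (Proposition \ref{poidss}), rewriting of the conjugated operator as an $h$-pseudo-differential operator with operator-valued symbol, frequency truncation and a Grushin reduction combined with Fefferman--Phong to obtain the tangential coercivity of Theorem \ref{Theo}, removal of the cutoff (Theorem \ref{theo2}), and only then the optimal single- and double-well estimates (Corollary \ref{single}, Proposition \ref{double}) together with the weighted comparison $\mathrm{e}^{\Phi_r/h}\left(\Psi_{h,r}^{[k]}-u_{h,r}^{[k]}\right)=\mathcal{O}(h^{\infty})$ of Proposition \ref{App}. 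Note also that the smallness of $\left\|1-\frac{\gamma_0}{\gamma}\right\|_{\infty}$ is used there to keep the conjugated symbol coercive (so that $\Phi'$ and $\varphi_r$ stay small), not only to solve the eikonal equation as your first paragraph suggests.

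Two smaller points. The paper's basis is built from the gauge-twisted, truncated \emph{true} single-well ground states $u_{h,r}^{[k]},u_{h,l}^{[k]}$, and the WKB functions enter only at the last step through Proposition \ref{App}; if you work directly with truncated WKB quasimodes, the evaluation of the boundary integrals with an $\mathcal{O}(\hbar^2)\mathrm{e}^{-\mathrm{S}/\hbar^{1/(k+2)}}$ remainder still requires the same exponentially weighted comparison with the true single-well eigenfunctions, so the pseudo-differential block cannot be bypassed. Finally, the reflection symmetry must be implemented antilinearly, $Uf(\sigma,\tau)=\overline{f(-\sigma,\tau)}$ as in \eqref{sym}, since the magnetic operator does not commute with the plain reflection; the reduction of the splitting to $2\left|w_{l,r}\right|$ then follows the interaction-matrix analysis of \cite{BHR-circle} rather than a symmetric/antisymmetric decomposition.
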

\begin{rmk}
We have two situations in this theorem:
\begin{enumerate}
    \item[1.] If $\mathrm{S}_u\neq\mathrm{S}_d$, only one term in the sum \eqref{sum} defining $\tilde{w}_{l,r}$ is predominent and $\tilde{w}_{l,r}$ is not zero for $\hbar$ small enough.
    \item[2.] If $\mathrm{S}_u=\mathrm{S}_d$, the situation is different: due to the circulation, the interaction term $\tilde{w}_{l,r}$ can vanish for some parameters $\hbar$ and in this case the spectral gap is of order $\mathcal{O}\left( \hbar^2 \mathrm{e^{-\frac{\mathrm{S}}{\hbar^{1/(k+2)}}}} \right)$.
\end{enumerate}
\label{rmk3}
\end{rmk}
The second case in remark \ref{rmk3} occurs when the magnetic field is symmetric with respect to the $x_1$-axis, i.e 
\begin{equation}
B(x_1,x_2)=B(x_1,-x_2)\,\,\,\,\,\text{for all}\,\,\,\,\,\,\,(x_1,x_2)\in\mathbb{R}^2.
    \label{B}
\end{equation}
In this case, we have
$$ \mathrm{A}_u=\mathrm{A}_d:=\mathrm{A}\,\,,\,\,\,\,\,\mathfrak{V}_r(0)=\mathfrak{V}_r(-L):=\mathfrak{V}_0\,\,\,\,\text{and}\,\,\,\,\,\mathrm{S}_u=\mathrm{S}_d:=\mathrm{S}, $$
and we get the following corollary.
\begin{cor}
If $B$ verifies \eqref{B} and under Assumption \ref{Hyp0},  there exists $\varepsilon>0$ such that if
$$\underset{s\in [-L,L]}{\operatorname{sup}}\left| 1-\frac{\gamma_0}{\gamma} \right|<\varepsilon,$$
then the difference between the first two eigenvalues of $\mathcal{L}_{\hbar}$ is given by
$$ \lambda_2(\hbar)-\lambda_1(\hbar)=4\zeta^{1/2}\pi^{-1/2}\hbar^{\frac{2k+3}{k+2}}\left| \mathfrak{V}_0 \right|\mathrm{A}\mathrm{e}^{-\frac{\mathrm{S}}{\hbar^{1/(k+2)}}}\left| \cos \left( f(\hbar) \right) \right|+\hbar^2\mathcal{O}\left( \mathrm{e^{-\frac{\mathrm{S}}{\hbar^{1/(k+2)}}}} \right). $$
\end{cor}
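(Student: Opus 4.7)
The plan is to follow the Helffer--Sj\"ostrand two-well strategy already implemented in the pure magnetic setting by \cite{BHR-purely,fournais2022tunneling}: construct sharp quasi-modes attached to each well, establish exponential decay of the true eigenfunctions, reduce the first two eigenvalues to a $2\times 2$ interaction matrix, and finally compute its off-diagonal coefficient with precision. The novelty here, imposed by the complex nature of the eikonal equation \eqref{int4}, will force one to work throughout with the holomorphic extension of $\nu^{[k]}$ near $\xi_0^{[k]}$.

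First I would produce, via the WKB construction of Theorem \ref{BKW}, approximate eigenfunctions $\psi_r^{\mathrm{WKB}}$ and $\psi_l^{\mathrm{WKB}}$ localized near $a_1$ and $a_2$ respectively, of the form
\begin{equation*}
\psi_r^{\mathrm{WKB}}(s,t)=e^{-\varphi_r(s)/\hbar^{1/(k+2)}}\sum_{j\ge 0}a_{j,r}(s,t)\,\hbar^{j},
\end{equation*}
where $\varphi_r$ is the exact solution of \eqref{int4} provided by Lemma \ref{Lemme} and the amplitudes $a_{j,r}$ are determined by the associated sequence of transport equations (the leading one giving rise to the functions $\mathfrak{V}_r$, $\mathfrak{R}_r$ and the constants $\mathrm{A}_u$, $\mathrm{A}_d$). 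The symmetry assumption lets me deduce $\psi_l^{\mathrm{WKB}}$ from $\psi_r^{\mathrm{WKB}}$ by reflection combined with a gauge change. Multiplying by smooth cut-offs $\chi_r,\chi_l$ supported in the right/left half-tubular neighborhood of $\Gamma$ produces $L^2$ quasi-modes $\tilde\psi_r$, $\tilde\psi_l$. Tangential Agmon estimates with weight $e^{\mathrm{Re}\,\varphi_r/\hbar^{1/(k+2)}}$, combined with the normal Agmon decay of \cite{helffer1996semiclassical}, then yield the exponential concentration of the true eigenfunctions $u_1,u_2$ near $\{a_1,a_2\}$ and, via the min-max principle, identify the spectral subspace for the first two eigenvalues with $\mathrm{span}(\tilde\psi_r,\tilde\psi_l)$ up to an $e^{-\mathrm{S}/\hbar^{1/(k+2)}}\hbar^\infty$ error.

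The spectral problem thus reduces to diagonalizing the $2\times 2$ matrix $M(\hbar)=\bigl(\langle\tilde\psi_\alpha,(\mathcal{L}_\hbar-\lambda_1(\hbar))\tilde\psi_\beta\rangle\bigr)_{\alpha,\beta\in\{l,r\}}$, whose diagonal entries are negligible by construction and whose off-diagonal one is exactly $\tilde w_{l,r}$. To evaluate it I would apply Green's formula on the right half-plane $\{x_1>0\}$, moving the operator onto $\tilde\psi_l$ (negligible there) and leaving a boundary integral along the symmetry axis $\{x_1=0\}$. That axis meets $\Gamma$ at exactly two points, one on each of the two arcs joining the wells, of tangential Agmon distances $\mathrm{S}_u$ and $\mathrm{S}_d$; a Laplace/stationary-phase expansion at each crossing yields the two summands of \eqref{sum}, with amplitudes $\overline{\mathfrak{V}_r(0)}\mathrm{A}_u$ and $\overline{\mathfrak{V}_r(-L)}\mathrm{A}_d$ coming from the transport equations. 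The relative phase between the two contributions originates in the magnetic circulation along the axis: by Stokes' theorem and \eqref{beta0} it is governed by the flux $|\Gamma|\beta_0$ and produces the factor $e^{\pm iLf(\hbar)}$, while $\alpha_0$ arises as a sub-principal correction of the transport system. Diagonalizing $M(\hbar)$ finally gives $\lambda_2(\hbar)-\lambda_1(\hbar)=2|\tilde w_{l,r}|+\mathcal{O}(e^{-\mathrm{S}/\hbar^{1/(k+2)}}\hbar^2)$.

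The main obstacle, absent from \cite{BHR-purely,fournais2022tunneling}, is the complex-valued nature of $\varphi_r$: since $\mathrm{Im}\,\varphi_r\not\equiv 0$, the Agmon weight $\mathrm{Re}\,\varphi_r$ is strictly smaller than the full WKB phase, and every transport, normalization and boundary computation has to be carried out with the holomorphic extension of $\xi\mapsto\nu^{[k]}(\xi)$. Ensuring the very existence of $\varphi_r$ (and hence of the quasi-modes) is what imposes the smallness hypothesis $\|1-\gamma_0/\gamma\|_\infty<\varepsilon$, and tracking $\mathrm{Im}\,\varphi_r$ faithfully through the Laplace expansion is precisely what produces the explicit phase $f(\hbar)$ and the constants $\alpha_0$, $\mathrm{A}_u$, $\mathrm{A}_d$ entering \eqref{sum}.
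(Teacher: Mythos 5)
Your proposal is, in substance, a sketch of the proof of Theorem \ref{Thm0} (the general two-well interaction formula), along the same Helffer--Sj\"ostrand/BHR lines as the paper: one-well WKB quasimodes, Agmon localization, reduction to a $2\times 2$ interaction matrix, and evaluation of the off-diagonal coefficient as two boundary contributions on the symmetry axis, at $\sigma=0$ and $\sigma=-L$. But the statement you were asked to prove is the corollary, and the step that actually constitutes its proof is missing: nowhere do you use the extra symmetry hypothesis \eqref{B}. Under that symmetry one has $\mathrm{A}_u=\mathrm{A}_d=:\mathrm{A}$, $\mathfrak{V}_r(0)=\mathfrak{V}_r(-L)=:\mathfrak{V}_0$ and $\mathrm{S}_u=\mathrm{S}_d=\mathrm{S}$, so the two summands in \eqref{sum} share the common factor $\overline{\mathfrak{V}_0}\,\mathrm{A}\,\mathrm{e}^{-\mathrm{S}/\hbar^{1/(k+2)}}$ and their conjugate phase factors $\mathrm{e}^{\pm \mathrm{i}Lf(\hbar)}$ add up to a cosine; this is what produces the prefactor $4\zeta^{1/2}\pi^{-1/2}\hbar^{\frac{2k+3}{k+2}}\left|\mathfrak{V}_0\right|\mathrm{A}$, the interference factor $\left|\cos\right|$, and the possibility that the gap degenerates to $\mathcal{O}\bigl(\hbar^2\mathrm{e}^{-\mathrm{S}/\hbar^{1/(k+2)}}\bigr)$ when the cosine vanishes (Remark \ref{rmk3}). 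Your argument stops at $\lambda_2(\hbar)-\lambda_1(\hbar)=2\left|\tilde w_{l,r}\right|+\mathcal{O}\bigl(\hbar^{2}\mathrm{e}^{-\mathrm{S}/\hbar^{1/(k+2)}}\bigr)$ and never derives the corollary's formula from it.

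A second, more technical concern: you invoke ``tangential Agmon estimates with weight $\mathrm{e}^{\Re\varphi_r/\hbar^{1/(k+2)}}$'' as if they were routine. In this paper they are the core difficulty: because the eikonal solution is complex, the standard Agmon identity only yields non-optimal weights, and the optimal tangential decay (Corollary \ref{single}, Propositions \ref{double} and \ref{App}) is obtained through the Grushin reduction with operator-valued symbols of Sections \ref{section5}--\ref{section6} (coercivity of the effective symbol via Fefferman--Phong, sub-solutions of the eikonal equation as weights, removal of the frequency cutoff), and this is exactly where the smallness hypothesis (iii) of Assumption \ref{Hyp0} is used quantitatively. Relatedly, your WKB phase is not $\varphi_r$ itself but $\Phi_r-\mathrm{i}\mathfrak{g}_r$ with $\Phi_r'=\gamma_r^{1/(k+2)}\Re\varphi_r$ and $\mathfrak{g}_r'=\gamma_r^{1/(k+2)}\bigl(\xi_0^{[k]}-\Im\varphi_r\bigr)$ (Theorem \ref{BKW}), and the crossing contributions are computed by evaluating the WKB data at $\sigma=0,-L$ together with the Feynman--Hellmann identity in $\tau$ (the Laplace-type Gaussian integral only enters the normalization constant $K_0$ in Remark \ref{rmk}), not by a stationary-phase expansion at the crossings.
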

\subsection{Organization of the paper}
In Section \ref{section2}, we explain the spectral reduction scheme, using Agmon's normal estimates and tubular coordinates in the neighborhood of the zero curve $\Gamma$, which allows us to replace the operator $\mathcal{L}_{\hbar}$ by the rescaled operator $\mathcal{N}_h^{[k]}$ with a new semi-classical parameter $h=\hbar^{\frac{1}{k+2}}$. The localization near $\Gamma$ allows us to reduce to the study a straight model, and we introduce reduced left and right ``one well'' models (see Section \ref{section3}). In Section \ref{section4}, we construct the WKB expansions for the ground state of the ``right well'' operator $\mathcal{N}_{h,r}^{[k]}$. In Section \ref{section5}, we conjugate by an exponential and reduce the dimension (at least formally) using a Grushin method. We then choose the exponential weight as a perturbation of  the solution of the eikonal equation and, to keep ellipticity, we have to use the hypothesis of ``soft" variation of the function $\gamma$. With these assumptions, the Agmon weight is uniformly controlled and we are reduced to a perturbation problem near the minimum of the Montgomery operator. To ensure that the frequency variable $\xi$ is bounded, we truncate this variable in a neighborhood of $\gamma_0^{1/(k+2)}\xi_0^{[k]}$ and consider the operator with truncated symbol $\text{Op}_h^\text{w}p_h$. Using the Grushin reduction method, we show tangential coercivity (see Theorem \ref{Theo}) following \cite{keraval}. In Section \ref{section6}, we prove Theorem \ref{theo2}. It consists in particular in removing the cutoff function which was introduced in Section \ref{section5}. In Section \ref{Section 7}, we show optimal tangential estimates using Theorem \ref{theo2} (see Corollary \ref{single}). We also establish tangential estimates for the double-well operator $\mathcal{N}_h^{[k]}$ (see Proposition \ref{double}), and establish WKB approximations of the first eigenfunctions of operator $\mathcal{N}_h^{[k]}$ (see Proposition \ref{App}). In Section \ref{section9}, we prove Theorem \ref{Thm0}. WKB approximations allow the analysis of an interaction matrix whose eigenvalues measure the tunneling effect.

\section{A reduction to a tubular neighborhood of the cancellation curve}
\label{section2}
The following Agmon estimates can be found in \cite[Proposition 5.1]{helffer1996semiclassical}. These estimates show the exponential localization of the eigenfunctions of $\mathcal{L}_{\hbar}$ near the zero curve.

\begin{prop}
Let $E>0$. There exist $C,\hbar_0,\alpha>0$ such that, for all $\hbar\in (0,\hbar_0)$, and all eigenpairs $(\lambda,\psi)$ of $\mathcal{L}_{\hbar}$ with $\lambda  \leq E\hbar^{2\frac{k+1}{k+2}}$,
$$ \int_{\mathbb{R}^2}\mathrm{e}^{2\frac{\alpha \text{dist}(x,\Gamma)^{\frac{k+2}{2}}}{\sqrt{\hbar}}}\left| \psi \right|^2 dx \leq C\left\|\psi\right\|^2, $$
and
$$ \int_{\mathbb{R}^2}\mathrm{e}^{2\frac{\alpha \text{dist}(x,\Gamma)^{\frac{k+2}{2}}}{\sqrt{\hbar}}}\left| \left( -\mathrm{i}\hbar\nabla+\mathbf{A} \right) \psi \right|^2 dx\leq C \hbar^{2\frac{k+1}{k+2}} \left\|\psi\right\|^2. $$
\label{estimations normale d'agmon}
\end{prop}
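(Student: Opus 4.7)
The plan is to apply the classical magnetic Agmon scheme, tuned to the vanishing order $k$ of $B$ on $\Gamma$. First I would record the fundamental magnetic Agmon identity: for any real Lipschitz weight $\Phi$,
$$\mathrm{Re}\,\langle \mathcal{L}_{\hbar}\psi,\,\mathrm{e}^{2\Phi}\psi\rangle \;=\; \|(-\mathrm{i}\hbar\nabla+\mathbf{A})(\mathrm{e}^{\Phi}\psi)\|^2 \;-\; \hbar^2\,\||\nabla\Phi|\,\mathrm{e}^{\Phi}\psi\|^2,$$
obtained by the substitution $u=\mathrm{e}^{\Phi}\psi$ and integration by parts. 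For an eigenpair with $\lambda\leq E\hbar^{2(k+1)/(k+2)}$, the left-hand side is bounded by $E\hbar^{2(k+1)/(k+2)}\|\mathrm{e}^{\Phi}\psi\|^2$, so the task reduces to a quantitative lower bound on $\|(-\mathrm{i}\hbar\nabla+\mathbf{A})(\mathrm{e}^{\Phi}\psi)\|^2$ that beats $\hbar^2\||\nabla\Phi|\,\mathrm{e}^{\Phi}\psi\|^2$ outside a thin tube around $\Gamma$.

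The key input is the $2$D pointwise magnetic bound $\|(-\mathrm{i}\hbar\nabla+\mathbf{A})u\|^2\geq \hbar\int_{\mathbb{R}^2}|B(x)|\,|u|^2\,dx$, which follows from the commutation $[-\mathrm{i}\hbar\partial_1+A_1,-\mathrm{i}\hbar\partial_2+A_2]=-\mathrm{i}\hbar B$. Since $B$ vanishes to exact order $k$ on $\Gamma$, there is $c>0$ with $|B(x)|\geq c\,\mathrm{dist}(x,\Gamma)^{k}$ in a fixed tubular neighborhood, and $|B|\to\infty$ at infinity. I would then take precisely the Agmon weight appearing in the statement, $\Phi(x)=\alpha\,\mathrm{dist}(x,\Gamma)^{(k+2)/2}/\sqrt{\hbar}$, which is Lipschitz and satisfies
$$|\nabla\Phi(x)|^2 \;\leq\; C\,\alpha^2\,\mathrm{dist}(x,\Gamma)^{k}/\hbar,$$
so that the Agmon error $\hbar^2|\nabla\Phi|^2\leq C\alpha^2\,\hbar\,\mathrm{dist}(x,\Gamma)^{k}$ is of the same order as the effective magnetic confinement $\hbar|B|$. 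Choosing $\alpha$ small enough absorbs the error into the magnetic lower bound.

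To close the estimate I would use a smooth partition of unity $\chi_1^2+\chi_2^2=1$ with $\chi_1$ supported in a thin tube $\{\mathrm{dist}(x,\Gamma)\leq c_1\hbar^{1/(k+2)}\}$ and $\chi_2$ supported away from $\Gamma$, together with the IMS localization identity for $\mathcal{L}_{\hbar}$. On $\mathrm{supp}\,\chi_1$ the weight $\Phi$ is bounded, so $\|\mathrm{e}^{\Phi}\chi_1\psi\|\leq C\|\psi\|$ trivially. On $\mathrm{supp}\,\chi_2$, one has $\hbar|B(x)|\geq c'\hbar^{2(k+1)/(k+2)}\gg \lambda$ for $\mathrm{dist}(x,\Gamma)$ sufficiently large compared to $\hbar^{1/(k+2)}$, so after absorbing the Agmon error one can further absorb $\lambda\|\mathrm{e}^{\Phi}\chi_2\psi\|^2$ into $\frac{1}{2}\hbar\int|B|\,|\mathrm{e}^{\Phi}\chi_2\psi|^2\,dx$, up to a remainder controlled by the tube contribution. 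Combining the two regions yields the first inequality; the second inequality on $(-\mathrm{i}\hbar\nabla+\mathbf{A})\psi$ is then a direct consequence of reinserting the first bound into the Agmon identity.

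The main obstacle is the critical transition scale $\mathrm{dist}(x,\Gamma)\sim\hbar^{1/(k+2)}$, where $\hbar|B|$ is only of the same order as the admissible energy $E\hbar^{2(k+1)/(k+2)}$ and where the weight $\Phi$ transitions from $O(1)$ to unbounded. Choosing the cutoff radius $c_1\hbar^{1/(k+2)}$ and the constant $\alpha$ jointly small, so that both the Agmon error and the energy contribution can be absorbed into a single fraction of $\hbar\int|B|\,|\mathrm{e}^{\Phi}\psi|^2$, is the delicate point of the argument.
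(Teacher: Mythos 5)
The paper does not actually prove Proposition \ref{estimations normale d'agmon}: it is quoted from Helffer--Morame \cite[Proposition 5.1]{helffer1996semiclassical}. Your sketch follows the standard Agmon scheme that underlies that reference (Agmon identity for a Lipschitz weight, a magnetic lower bound playing the role of an effective potential $\hbar|B|$, the weight $\alpha\,\mathrm{dist}(x,\Gamma)^{(k+2)/2}/\sqrt{\hbar}$, and an IMS localization at the scale $\hbar^{1/(k+2)}$), so the overall route is the right one; but as written it contains a genuine flaw in the key step.

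The inequality you call the key input is false in the present setting: the commutator identity only gives
$$\left\|(-\mathrm{i}\hbar\nabla+\mathbf{A})u\right\|^2\;\geq\;\hbar\left|\int_{\mathbb{R}^2}B\,|u|^2\,dx\right|,$$
and one cannot put $|B|$ inside the integral when $B$ changes sign on the support of $u$. That is exactly what happens here: $B$ vanishes to exact order $k$ along the closed curve $\Gamma$ and, for odd $k$ (in particular the basic case $k=1$), it has opposite signs on the two sides of $\Gamma$, so the global bound $\|(-\mathrm{i}\hbar\nabla+\mathbf{A})u\|^2\geq\hbar\int|B||u|^2$ on which your absorption argument rests does not hold. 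The repair is standard but must be made: apply the inequality only to functions supported where $B$ has constant sign, e.g.\ split your exterior cutoff $\chi_2$ into the two connected components of the complement of the tube (inside and outside $\Gamma$); on each component $\bigl|\int B|v|^2\bigr|=\int|B||v|^2$, and cross terms vanish since the supports are disjoint. Two further points need tightening. First, the tube radius must be $C_1\hbar^{1/(k+2)}$ with $C_1$ \emph{large}, not small: on $\operatorname{supp}\chi_2$ one then has $\hbar|B|\geq c\,C_1^{k}\hbar^{2(k+1)/(k+2)}$, which dominates both the energy $\lambda\leq E\hbar^{2(k+1)/(k+2)}$ and the IMS error $O(C_1^{-2}\hbar^{2(k+1)/(k+2)})$; only $\alpha$ is taken small. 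Second, your absorption $\hbar^2|\nabla\Phi|^2\leq C\alpha^2\hbar\,\mathrm{dist}(x,\Gamma)^k\lesssim\hbar|B|$ uses $|B|\gtrsim\mathrm{dist}(x,\Gamma)^k$ on all of $\mathbb{R}^2$; this is guaranteed near $\Gamma$ by the exact vanishing order, but at large distances it is a growth assumption that does not follow from $B\to+\infty$ alone, so you should either cap the weight beyond a fixed distance (which suffices for every use of the proposition in Section \ref{section2}, where only a $\delta$-neighborhood of $\Gamma$ matters) or state the required growth of $B$ explicitly.
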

Since the first eigenfunctions are concentrated in the neighborhood of $\Gamma$ then we can deduce that we can work in a small neighborhood of $\Gamma$ of size $\delta$ small enough. For this reason, we consider the $\delta$-neighborhood of the curve $\Gamma$
$$ \Omega_{\delta}:=\left\{ x\in\mathbb{R}^2:\text{dist}(x,\Gamma)<\delta \right\}. $$
Here $\delta$ normally depends on $\hbar$ which we will specify later. We consider the quadratic form $Q_{\hbar,\delta}$ defined for all $\psi\in \mathcal{V}_{\delta}=H_0^1\left( \Omega_{\delta}\right)$,
$$ Q_{\hbar,\delta}=\int_{\Omega_{\delta}}\left| \left( -\mathrm{i}\hbar\nabla+\mathbf{A} \right) \psi \right|^2dx. $$
The associated self-adjoint operator is
$$ \mathcal{L}_{\hbar,\delta}=\left( -\mathrm{i}\hbar\nabla+\mathbf{A} \right)^2, $$
with domain
$$ \text{Dom}\left( \mathcal{L}_{\hbar,\delta} \right)=\left\{ \psi\in H^2\left( \Omega_{\delta} \right):\psi(x)=0,\,\,\,\text{on}\,\, \,\,\left\{ x\in\mathbb{R}^2 : \text{dist}\left( x,\Gamma \right)=\delta\right\} \right\}. $$
This operator is self-adjoint with compact resolvent, and we can consider the non-decreasing sequence of eigenvalues $ \left( \lambda_n(\hbar,\delta) \right)_{n\geq 1} $. We will follow the same reduction strategy as \cite{BHR-purely}. 
\begin{prop}
Let $n\geq 1$. There exist $C,\hbar_0,\beta>0$ such that, for all $\hbar\in(0,\hbar_0)$ and $\delta\in(0,\delta_0)$,
$$ \lambda_n(\hbar)\leq\lambda_n(\hbar,\delta)\leq \lambda_n(\hbar)+C\mathrm{e}^{-\frac{\beta\delta^{\frac{k+2}{2}}}{\sqrt{\hbar}}}. $$
\label{R1}
\end{prop}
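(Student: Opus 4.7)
The plan is to prove the two inequalities separately. The lower bound $\lambda_n(\hbar)\leq\lambda_n(\hbar,\delta)$ is an immediate consequence of the min-max principle: any $\psi\in\mathcal{V}_\delta=H^1_0(\Omega_\delta)$ extends by zero to an element of the form domain of $\mathcal{L}_\hbar$ with unchanged $L^2$-norm and unchanged magnetic quadratic form, so every Rayleigh quotient achievable on an $n$-dimensional subspace of $\mathcal{V}_\delta$ is also achievable on $\mathbb{R}^2$, whence the min-max quantity on $\mathbb{R}^2$ is no larger than the one on $\Omega_\delta$.

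For the upper bound, the idea is to truncate the first $n$ eigenfunctions of $\mathcal{L}_\hbar$ so that they become admissible test functions for $\mathcal{L}_{\hbar,\delta}$. Let $(\psi_j)_{1\leq j\leq n}$ be an $L^2$-orthonormal family of eigenfunctions associated with $\lambda_1(\hbar)\leq\cdots\leq\lambda_n(\hbar)$. A rough upper bound (e.g. from a WKB quasimode or the asymptotics of Helffer--Kordyukov \cite{helffer2009spectral} and Dombrowski--Raymond \cite{dombrowski-Raymond}) shows that $\lambda_n(\hbar)=O(\hbar^{2(k+1)/(k+2)})$, so Proposition \ref{estimations normale d'agmon} is applicable to these $\psi_j$ for some uniform constants $C,\hbar_0,\alpha>0$ once $\hbar$ is small enough. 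Pick a smooth cutoff $\chi_\delta$ with $\chi_\delta\equiv 1$ on $\Omega_{\delta/2}$, $\operatorname{supp}\chi_\delta\subset\Omega_\delta$, and $|\nabla\chi_\delta|\leq C\delta^{-1}$; set $\tilde\psi_j:=\chi_\delta\psi_j\in\mathcal{V}_\delta$.

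On $\Omega_\delta\setminus\Omega_{\delta/2}$, the Agmon estimates give
$$\int_{\Omega_\delta\setminus\Omega_{\delta/2}}\left(|\psi_j|^2+|(-\mathrm{i}\hbar\nabla+\mathbf{A})\psi_j|^2\right)dx\leq C\mathrm{e}^{-2\alpha(\delta/2)^{(k+2)/2}/\sqrt{\hbar}},$$
which immediately controls $\|\psi_j-\tilde\psi_j\|^2$. Expanding
$$(-\mathrm{i}\hbar\nabla+\mathbf{A})\tilde\psi_j=\chi_\delta(-\mathrm{i}\hbar\nabla+\mathbf{A})\psi_j-\mathrm{i}\hbar(\nabla\chi_\delta)\psi_j,$$
and estimating the cross term by Cauchy--Schwarz, one obtains $Q_{\hbar,\delta}(\tilde\psi_j)\leq\lambda_j(\hbar)+C\mathrm{e}^{-\beta\delta^{(k+2)/2}/\sqrt{\hbar}}$ for any $\beta<2^{1-(k+2)/2}\alpha$. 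The Gram matrix of $(\tilde\psi_j)_{1\leq j\leq n}$ differs from $I_n$ by a matrix with entries of the same exponentially small order, so $(\tilde\psi_j)$ is linearly independent and spans an $n$-dimensional subspace $F_n\subset\mathcal{V}_\delta$. Applying the min-max principle for $\mathcal{L}_{\hbar,\delta}$ to $F_n$ gives the desired upper bound, up to absorbing the error into a slightly smaller exponent $\beta$.

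The main technical step is the bookkeeping in the cross term, namely
$$-2\hbar\,\Im\left\langle(\nabla\chi_\delta)\psi_j,\chi_\delta(-\mathrm{i}\hbar\nabla+\mathbf{A})\psi_j\right\rangle,$$
which must absorb the polynomial loss $|\nabla\chi_\delta|\sim\delta^{-1}$ as well as the factors of $\hbar$ into the exponentially small factor supplied by Proposition \ref{estimations normale d'agmon}. Since the Agmon bound controls both $\psi_j$ and $(-\mathrm{i}\hbar\nabla+\mathbf{A})\psi_j$ on the support of $\nabla\chi_\delta$, and polynomial prefactors of the form $\delta^{-2}\hbar^{1+2(k+1)/(k+2)}$ are dominated by any slightly smaller exponential $\mathrm{e}^{-\beta\delta^{(k+2)/2}/\sqrt{\hbar}}$ (for $\hbar$ small), no obstruction arises, and the value of $\beta$ in the statement may be chosen arbitrarily in $(0,2^{1-(k+2)/2}\alpha)$.
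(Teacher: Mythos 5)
Your proposal follows the paper's proof essentially verbatim: extension by zero plus the min--max principle for the lower bound, and an Agmon-based cutoff $\chi_\delta\psi_j$ of the first $n$ eigenfunctions, again combined with min--max, for the upper bound (your remark that one first needs the rough bound $\lambda_n(\hbar)=\mathcal{O}(\hbar^{2(k+1)/(k+2)})$ to invoke Proposition \ref{estimations normale d'agmon} is a point the paper leaves implicit). The only step to tighten is that min--max over $F_n$ requires controlling the quadratic form on arbitrary linear combinations, hence also the off-diagonal entries $\left\langle(-\mathrm{i}\hbar\nabla+\mathbf{A})\tilde\psi_j,(-\mathrm{i}\hbar\nabla+\mathbf{A})\tilde\psi_k\right\rangle$, which are exponentially small by the same Agmon estimates together with the relation $\left\langle(-\mathrm{i}\hbar\nabla+\mathbf{A})\psi_j,(-\mathrm{i}\hbar\nabla+\mathbf{A})\psi_k\right\rangle=0$ for $j\neq k$ --- precisely the orthogonality the paper uses to treat general combinations $\psi=\sum_j\beta_j\psi_j$ directly.
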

\begin{proof}
The proof is similar to that of \cite{BHR-purely}, except the power of $\hbar$. We first then prove first inequality. Let $\psi_n\in\mathcal{V}_{\delta}$ the eigenfunction associated with $\lambda_n(\hbar,\delta)$. Since $\psi_n=0$ on $\left\{ x\in\mathbb{R}^2:\text{dist}(x,\Gamma)=\delta \right\}$, we can extend it by $0$ on $\mathbb{R}^2$ to obtain a function $\tilde{\psi_n}$ defined on $\mathbb{R}^2$ which satisfies
$$ Q_{\hbar}\left( \tilde{\psi}_n \right)=Q_{\hbar,\delta}\left( \psi_n \right)=\lambda_n(\hbar,\delta). $$
Then by min-max principle $\lambda_n(\hbar)\leq \lambda_n(\hbar,\delta)$.

We now show the second inequality in proposition \ref{R1}. Let $\left( \psi_j \right)_{1\leq j \leq n }$ be an orthonormal family of eigenfunctions associated with $\left( \lambda_j(\hbar) \right)_{1\leq j \leq n}$ and let
$$ \chi_{\delta}(x):=\chi\left( \frac{\text{dist}(x,\Gamma)}{\delta} \right), $$
where $\chi$ is a smooth cut off function, which is equal to $0$ on $[1,+\infty[$, and is equal to $1$ on $[0,1/2[$. We define
$$ \mathcal{E}(\hbar,\delta):=\underset{1\leq j \leq n}{\text{Span}}\chi_{\delta}\psi_j\subset \mathcal{V}_{\delta}. $$
Let $\tilde{\psi}$ be a function of $\mathcal{E}(\hbar,\delta)$. This function is written in the forme
$$ \tilde{\psi}=\chi_{\delta}\sum_{j=1}^n\beta_j\psi_j=\chi_{\delta}\psi. $$
We have
\begin{align*}
Q_{\hbar,\delta}\left( \chi_{\delta}\psi \right)&=\int_{\Omega_{\hbar,\delta}}\left| \left( -\mathrm{i}\hbar\nabla+\mathbf{A} \right)\left( \chi_{\delta}\psi \right) \right|^2dx\\&\leq \left\| \left( -\mathrm{i}\hbar\nabla+\mathbf{A} \right)\psi \right\|^2+2\hbar\left\| \left( -\mathrm{i}\hbar\nabla+\mathbf{A} \right)\psi \right\|_{L^2\left( \mathbb{R}^2\setminus\Omega_{\delta/2} \right)}\left\| \left| \nabla\chi_{\delta} \right|\psi  \right\|+\hbar^2\left\| \left| \nabla\chi_{\delta}\right|\psi \right\|^2.
\end{align*}
Since the family $\left( \psi_j \right)_{1\leq j\leq n}$ is orthogonal, then
$$ \langle \left(-\mathrm{i}\hbar\nabla+\mathbf{A}\right)\psi_j,\left(-\mathrm{i}\hbar\nabla+\mathbf{A}\right)\psi_k\rangle=0,\,\,\,\forall j\neq k, $$
which implies
$$ \left\| \left( -\mathrm{i}\hbar\nabla+\mathbf{A} \right)\psi \right\|^2\leq \lambda_n(\hbar)\left\|\psi\right\|^2. $$
Using Proposition \ref{estimations normale d'agmon}, we have
$$ \left\| \left| \nabla\chi_{\delta} \right|\psi  \right\|\leq C\delta^{-1}  \mathrm{e}^{-\frac{\alpha\left(\frac{\delta}{2}\right)^{(k+2)/2}}{\sqrt{\hbar}}}\left\| \psi \right\|, $$
and
$$ \left\| \left( -\mathrm{i}\hbar\nabla+\mathbf{A} \right)\psi \right\|_{L^2\left( \mathbb{R}^2\setminus\Omega_{\delta/2} \right)}\leq C\hbar^{\frac{k+1}{k+2}}  \mathrm{e}^{-\frac{\alpha\left(\frac{\delta}{2} \right)^{(k+2)/2}}{\sqrt{\hbar}}}\left\| \psi \right\|. $$
Therefore,
$$ Q_{\hbar,\delta}\left(\tilde{\psi}\right)\leq\left( \lambda_n(\hbar)+C\left( \hbar^{\frac{2k+3}{k+2}}\delta^{-1}+\hbar^2\delta^{-2} \right)  \mathrm{e}^{-\frac{\beta\delta^{\frac{k+2}{2}}}{\sqrt{\hbar}}} \right)\left\|\tilde{\psi}\right\|^2,\,\,\,\,\,\,\forall\tilde{\psi}\in\mathcal{E}_n(\hbar,\delta), $$
with $\beta=\frac{\alpha}{2^{k/2}}$. Then we get
$$ \lambda_n(\hbar,\delta)\leq\lambda_n(\hbar)+  C\mathrm{e}^{-\frac{\beta\delta^{\frac{k+2}{2}}}{\sqrt{\hbar}}}.$$
\end{proof}

Proposition \ref{R1} allows to replace the initial operator $\mathcal{L}_{\hbar}$ by the operator $\mathcal{L}_{\hbar,\delta}$ with dirichlet conditions in a $\delta$-neighborhood of the curve $\Gamma$. We will make a change of coordinates in the neighborhood of the zero curve $\Gamma$. This change of coordinates can be found in detail in \cite{fournais2010spectral} and we see all it here. Let
$$ M:\mathbb{R}/(\left|\Gamma\right|\mathbb{Z} )\ni s\longmapsto M(s)\in \Gamma $$
be the arc-length parametrization of $\Gamma$ (see figure \ref{Tubular coordinates}) so that 
$$ \Gamma\cap\left\{(x,y)\in \mathbb{R}^2:x=0 \right\}=\left\{ M(0):=(0,y_0),M(L):=(0,y_1) \right\} \,\,\,\,\,\,\text{with}\,\,\,\,\,\,y_1<y_0. $$
Let $\nu(s)$ the unit normal to $\Gamma$ at the point $M(s)$. We choose the orientation of the parametrization $M$ so that
$$ det\left( M'(s),\nu(s) \right)=1. $$
The curvature $\kappa(s)$ of $\Gamma$ at point $M(s)$ is given by the parametrization
$$ M''(s)=\kappa(s)\nu(s). $$
Since we are working with $2L$-periodic functions, then we can consider the restriction of these functions on the interval $]-L,+L]$.

We consider the function $\Phi:\mathbb{R}/\left( \left|\Gamma\right|\mathbb{Z}\right)\times(-\delta_0,\delta_0)\longrightarrow\Omega_{\delta_0}$ defined by
\begin{equation}
    \Phi(s,t)=M(s)+t\nu(s),\,\,\,\,\forall(s,t)\in\mathbb{R}/\left( \left|\Gamma\right|\mathbb{Z}\right)\times(-\delta_0,\delta_0),
    \label{Phii}
\end{equation}
where $t=\text{dist}(x,\Gamma)$ and $\delta_0>0$ small enough, so that $\Phi$ is diffeomorphism with image $\Omega_{\delta_0}:=\left\{ x\in \mathbb{R}^2:\text{dist}(x,\Gamma)<\delta_0) \right\}$.

\begin{figure}[ht]

\begin{center}

\definecolor{tttttt}{rgb}{0.2,0.2,0.2}
\definecolor{qqqqff}{rgb}{0,0,1}
\definecolor{qqwuqq}{rgb}{0,0.39,0}
\definecolor{ttqqcc}{rgb}{0.2,0,0.8}
\definecolor{uququq}{rgb}{0.25,0.25,0.25}
\begin{tikzpicture}[line cap=round,line join=round,>=triangle 45,x=1.0cm,y=1.0cm]
\clip(-5.35,-4.37) rectangle (8.45,3.71);
\draw[color=qqwuqq,fill=qqwuqq,fill opacity=0.1] (2.03,1.48) -- (2.22,1.51) -- (2.19,1.7) -- (2,1.67) -- cycle; 
\draw [shift={(-2.76,0)},dash pattern=on 2pt off 2pt]  plot[domain=2.43:3.14,variable=\t]({1*1.12*cos(\t r)+0*1.12*sin(\t r)},{0*1.12*cos(\t r)+1*1.12*sin(\t r)});
\draw [shift={(-2.59,-0.17)},dash pattern=on 2pt off 2pt]  plot[domain=1.79:2.42,variable=\t]({1*1.37*cos(\t r)+0*1.37*sin(\t r)},{0*1.37*cos(\t r)+1*1.37*sin(\t r)});
\draw [shift={(-2.45,-0.79)},dash pattern=on 2pt off 2pt]  plot[domain=1.39:1.79,variable=\t]({1*2*cos(\t r)+0*2*sin(\t r)},{0*2*cos(\t r)+1*2*sin(\t r)});
\draw [shift={(-3,-3.49)},dash pattern=on 2pt off 2pt]  plot[domain=1.18:1.38,variable=\t]({1*4.75*cos(\t r)+0*4.75*sin(\t r)},{0*4.75*cos(\t r)+1*4.75*sin(\t r)});
\draw [shift={(0.18,3.92)},dash pattern=on 2pt off 2pt]  plot[domain=4.29:4.66,variable=\t]({1*3.32*cos(\t r)+0*3.32*sin(\t r)},{0*3.32*cos(\t r)+1*3.32*sin(\t r)});
\draw [shift={(-0.18,3.92)},dash pattern=on 2pt off 2pt]  plot[domain=4.77:5.13,variable=\t]({1*3.32*cos(\t r)+0*3.32*sin(\t r)},{0*3.32*cos(\t r)+1*3.32*sin(\t r)});
\draw [shift={(3,-3.49)},dash pattern=on 2pt off 2pt]  plot[domain=1.76:1.96,variable=\t]({1*4.75*cos(\t r)+0*4.75*sin(\t r)},{0*4.75*cos(\t r)+1*4.75*sin(\t r)});
\draw [shift={(2.45,-0.79)},dash pattern=on 2pt off 2pt]  plot[domain=1.35:1.76,variable=\t]({1*2*cos(\t r)+0*2*sin(\t r)},{0*2*cos(\t r)+1*2*sin(\t r)});
\draw [shift={(2.59,-0.17)},dash pattern=on 2pt off 2pt]  plot[domain=0.72:1.35,variable=\t]({1*1.37*cos(\t r)+0*1.37*sin(\t r)},{0*1.37*cos(\t r)+1*1.37*sin(\t r)});
\draw [shift={(2.59,0.17)},dash pattern=on 2pt off 2pt]  plot[domain=4.93:5.57,variable=\t]({1*1.37*cos(\t r)+0*1.37*sin(\t r)},{0*1.37*cos(\t r)+1*1.37*sin(\t r)});
\draw [shift={(2.45,0.79)},dash pattern=on 2pt off 2pt]  plot[domain=4.53:4.93,variable=\t]({1*2*cos(\t r)+0*2*sin(\t r)},{0*2*cos(\t r)+1*2*sin(\t r)});
\draw [shift={(3,3.49)},dash pattern=on 2pt off 2pt]  plot[domain=4.32:4.52,variable=\t]({1*4.75*cos(\t r)+0*4.75*sin(\t r)},{0*4.75*cos(\t r)+1*4.75*sin(\t r)});
\draw [shift={(-0.18,-3.92)},dash pattern=on 2pt off 2pt]  plot[domain=1.15:1.52,variable=\t]({1*3.32*cos(\t r)+0*3.32*sin(\t r)},{0*3.32*cos(\t r)+1*3.32*sin(\t r)});
\draw [shift={(0.18,-3.92)},dash pattern=on 2pt off 2pt]  plot[domain=1.63:1.99,variable=\t]({1*3.32*cos(\t r)+0*3.32*sin(\t r)},{0*3.32*cos(\t r)+1*3.32*sin(\t r)});
\draw [shift={(-3,3.49)},dash pattern=on 2pt off 2pt]  plot[domain=4.91:5.11,variable=\t]({1*4.75*cos(\t r)+0*4.75*sin(\t r)},{0*4.75*cos(\t r)+1*4.75*sin(\t r)});
\draw [shift={(-2.45,0.79)},dash pattern=on 2pt off 2pt]  plot[domain=4.49:4.9,variable=\t]({1*2*cos(\t r)+0*2*sin(\t r)},{0*2*cos(\t r)+1*2*sin(\t r)});
\draw [shift={(-2.59,0.17)},dash pattern=on 2pt off 2pt]  plot[domain=3.86:4.49,variable=\t]({1*1.37*cos(\t r)+0*1.37*sin(\t r)},{0*1.37*cos(\t r)+1*1.37*sin(\t r)});
\draw [shift={(-2.77,0)},dash pattern=on 2pt off 2pt]  plot[domain=3.15:3.85,variable=\t]({1*1.12*cos(\t r)+0*1.12*sin(\t r)},{0*1.12*cos(\t r)+1*1.12*sin(\t r)});
\draw [shift={(-2.76,0)},dash pattern=on 2pt off 2pt]  plot[domain=2.44:3.14,variable=\t]({1*2.12*cos(\t r)+0*2.12*sin(\t r)},{0*2.12*cos(\t r)+1*2.12*sin(\t r)});
\draw [shift={(-2.59,-0.17)},dash pattern=on 2pt off 2pt]  plot[domain=1.8:2.43,variable=\t]({1*2.36*cos(\t r)+0*2.36*sin(\t r)},{0*2.36*cos(\t r)+1*2.36*sin(\t r)});
\draw [shift={(-2.45,-0.79)},dash pattern=on 2pt off 2pt]  plot[domain=1.39:1.79,variable=\t]({1*3*cos(\t r)+0*3*sin(\t r)},{0*3*cos(\t r)+1*3*sin(\t r)});
\draw [shift={(-2.98,-3.44)},dash pattern=on 2pt off 2pt]  plot[domain=1.18:1.38,variable=\t]({1*5.69*cos(\t r)+0*5.69*sin(\t r)},{0*5.69*cos(\t r)+1*5.69*sin(\t r)});
\draw [shift={(0.21,4.02)},dash pattern=on 2pt off 2pt]  plot[domain=4.27:4.63,variable=\t]({1*2.42*cos(\t r)+0*2.42*sin(\t r)},{0*2.42*cos(\t r)+1*2.42*sin(\t r)});
\draw [shift={(-0.21,4.02)},dash pattern=on 2pt off 2pt]  plot[domain=4.8:5.15,variable=\t]({1*2.42*cos(\t r)+0*2.42*sin(\t r)},{0*2.42*cos(\t r)+1*2.42*sin(\t r)});
\draw [shift={(2.98,-3.44)},dash pattern=on 2pt off 2pt]  plot[domain=1.76:1.96,variable=\t]({1*5.69*cos(\t r)+0*5.69*sin(\t r)},{0*5.69*cos(\t r)+1*5.69*sin(\t r)});
\draw [shift={(2.45,-0.79)},dash pattern=on 2pt off 2pt]  plot[domain=1.35:1.75,variable=\t]({1*3*cos(\t r)+0*3*sin(\t r)},{0*3*cos(\t r)+1*3*sin(\t r)});
\draw [shift={(2.59,-0.17)},dash pattern=on 2pt off 2pt]  plot[domain=0.71:1.35,variable=\t]({1*2.36*cos(\t r)+0*2.36*sin(\t r)},{0*2.36*cos(\t r)+1*2.36*sin(\t r)});
\draw [shift={(2.59,0.17)},dash pattern=on 2pt off 2pt]  plot[domain=4.94:5.57,variable=\t]({1*2.36*cos(\t r)+0*2.36*sin(\t r)},{0*2.36*cos(\t r)+1*2.36*sin(\t r)});
\draw [shift={(2.45,0.79)},dash pattern=on 2pt off 2pt]  plot[domain=4.53:4.94,variable=\t]({1*3*cos(\t r)+0*3*sin(\t r)},{0*3*cos(\t r)+1*3*sin(\t r)});
\draw [shift={(2.98,3.44)},dash pattern=on 2pt off 2pt]  plot[domain=4.32:4.52,variable=\t]({1*5.69*cos(\t r)+0*5.69*sin(\t r)},{0*5.69*cos(\t r)+1*5.69*sin(\t r)});
\draw [shift={(-0.21,-4.02)},dash pattern=on 2pt off 2pt]  plot[domain=1.13:1.49,variable=\t]({1*2.42*cos(\t r)+0*2.42*sin(\t r)},{0*2.42*cos(\t r)+1*2.42*sin(\t r)});
\draw [shift={(0.21,-4.02)},dash pattern=on 2pt off 2pt]  plot[domain=1.66:2.01,variable=\t]({1*2.42*cos(\t r)+0*2.42*sin(\t r)},{0*2.42*cos(\t r)+1*2.42*sin(\t r)});
\draw [shift={(-2.98,3.44)},dash pattern=on 2pt off 2pt]  plot[domain=4.9:5.1,variable=\t]({1*5.69*cos(\t r)+0*5.69*sin(\t r)},{0*5.69*cos(\t r)+1*5.69*sin(\t r)});
\draw [shift={(-2.45,0.79)},dash pattern=on 2pt off 2pt]  plot[domain=4.49:4.89,variable=\t]({1*3*cos(\t r)+0*3*sin(\t r)},{0*3*cos(\t r)+1*3*sin(\t r)});
\draw [shift={(-2.59,0.17)},dash pattern=on 2pt off 2pt]  plot[domain=3.85:4.49,variable=\t]({1*2.36*cos(\t r)+0*2.36*sin(\t r)},{0*2.36*cos(\t r)+1*2.36*sin(\t r)});
\draw [shift={(-2.77,0)},dash pattern=on 2pt off 2pt]  plot[domain=3.14:3.85,variable=\t]({1*2.12*cos(\t r)+0*2.12*sin(\t r)},{0*2.12*cos(\t r)+1*2.12*sin(\t r)});
\draw [->,dash pattern=on 2pt off 2pt] (2.28,0) -- (1.82,2.73);
\draw [->,dash pattern=on 2pt off 2pt] (-2,1.67) -- (-1.92,2.16);
\draw [->,dash pattern=on 2pt off 2pt] (-2,1.67) -- (-2.08,1.17);
\draw [shift={(2.78,0)},dash pattern=on 2pt off 2pt]  plot[domain=-0.71:0.71,variable=\t]({1*2.11*cos(\t r)+0*2.11*sin(\t r)},{0*2.11*cos(\t r)+1*2.11*sin(\t r)});
\draw [shift={(2.78,0)},dash pattern=on 2pt off 2pt]  plot[domain=-0.71:0.71,variable=\t]({1*1.11*cos(\t r)+0*1.11*sin(\t r)},{0*1.11*cos(\t r)+1*1.11*sin(\t r)});
\draw[smooth,samples=100,domain=-4.3:4.3] plot(\x,{(-(\x)^2-9+(36*(\x)^2+3.2^4)^0.5)^0.5});
\draw[smooth,samples=100,domain=-4.3:4.3] plot(\x,{0-(-(\x)^2-9+(36*(\x)^2+3.2^4)^0.5)^0.5});
\draw [shift={(2.79,0)}] plot[domain=-0.33:0.33,variable=\t]({1*1.59*cos(\t r)+0*1.59*sin(\t r)},{0*1.59*cos(\t r)+1*1.59*sin(\t r)});
\draw [shift={(-2.79,0)}] plot[domain=2.81:3.47,variable=\t]({1*1.59*cos(\t r)+0*1.59*sin(\t r)},{0*1.59*cos(\t r)+1*1.59*sin(\t r)});
\draw (2.78,-1.64) node[anchor=north west] {$ \Gamma $};
\draw (4.01,1.41) node[anchor=north west] {$ s_r $};
\draw (-4.43,1.46) node[anchor=north west] {$ s_l $};
\draw (2.29,0.18) node[anchor=north west] {$ -\infty $};
\draw (1.88,2.89) node[anchor=north west] {$ +\infty $};
\draw [line width=1.2pt,dash pattern=on 2pt off 2pt,color=ttqqcc] (0,3.07)-- (0,-3.07);
\draw [dash pattern=on 2pt off 2pt] (0.94,1.49)-- (3.06,1.85);
\draw (-1.84,2.09) node[anchor=north west] {$ \Omega_{\delta} $};
\draw (-0.05,1.6) node[anchor=north west] {$ 0 $};
\draw [shift={(3.87,0.51)},dash pattern=on 2pt off 2pt,color=qqqqff]  plot[domain=0.13:0.82,variable=\t]({1*1.43*cos(\t r)+0*1.43*sin(\t r)},{0*1.43*cos(\t r)+1*1.43*sin(\t r)});
\draw [->,color=qqqqff] (4.95,1.45) -- (4.85,1.55);
\draw [color=tttttt](0,2.87) node[anchor=north west] {$ x_1=0 $};
\draw [color=qqqqff](5.2,1.39) node[anchor=north west] {+};
\draw (1.94,2.27) node[anchor=north west] {$ M(s) $};
\draw (0.02,-0.73) node[anchor=north west] {$ -L $};
\draw (-0.82,-0.73) node[anchor=north west] {$ +L $};
\begin{scriptsize}
\fill [color=uququq] (2,1.67) circle (1.5pt);
\fill [color=uququq] (4,1.05) circle (1.5pt);
\fill [color=uququq] (-4,1.05) circle (1.5pt);
\fill [color=uququq] (0,1.11) circle (1.5pt);
\fill [color=uququq] (0,-1.11) circle (1.5pt);
\end{scriptsize}
\end{tikzpicture}

\caption{Tubular coordinates in the neighborhood of $\Gamma$}
\label{Tubular coordinates}
\end{center}
\end{figure}
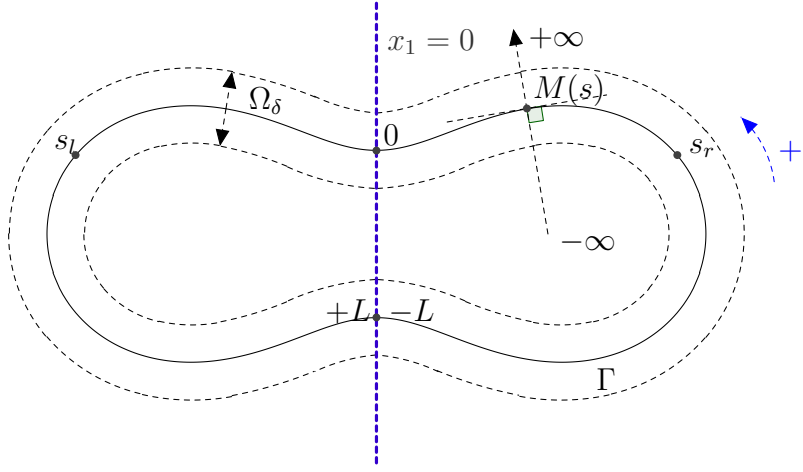

With the change of coordinates $\Phi^{-1}$ defined in , the determinant of the Jacobian matrix of this transformation is given by 
$$ m(s,t)=1-t\kappa(s), $$
and the quadratic form $Q_{\hbar,\delta}$ can be rewritten as follows
\begin{align*}
Q_{\hbar,\delta}(u)&=\int_{\Omega_{\delta}}\left| \left( -\mathrm{i}\hbar\nabla+\mathbf{A} \right)u \right|^2dx\\&=\int_{\Phi^{-1}(\Omega_{\delta})}\left\{ \left( 1-t\kappa(s) \right)^{-2}\left| \left( -\mathrm{i}\hbar\partial_s+\overline{A}_1 \right)v \right|^2+ \left| \left( -\mathrm{i}\hbar\partial_t+\overline{A}_2 \right)v \right|^2 \right\}(1-t\kappa(s))dsdt,
\end{align*}
and
$$ \int_{\Omega_{\delta}}\left| u(x) \right|^2dx=\int_{\Phi^{-1}(\Omega_{\delta})}\left| v(s,t) \right|^2(1-t\kappa(s))dsdt, $$
for all $u\in \mathcal{V}_{\delta}$, with $v=u\circ\Phi$ and 
$$ \overline{A}_1(s,t)=(1-t\kappa(s))\left( \mathbf{A} \circ\Phi\right).M'(s),\,\,\,\,\,\,\overline{A}_2(s,t)=\left( \mathbf{A} \circ\Phi\right).\nu(s). $$
The magnetic field associated with the new magnetic potential $\overline{\mathbf{A}}$ is given by
\begin{align*}
    \beta(s,t)&:=\nabla_{(s,t)}\times\overline{\mathbf{A}}(s,t)\\&=m(s,t)\left( \nabla\times \mathbf{A} \right)\circ\Phi(s,t)\\&=m(s,t)B\circ\Phi(s,t).
\end{align*}
To eliminate the normal component of $\overline{\mathbf{A}}$, we now use the gauge transformation which corresponds to the conjugation of the operator by $\mathrm{e}^{\mathrm{i}\frac{\phi}{\hbar}}$, with $\phi$ is given by
$$ \phi(s,t)=-\beta_0 s+\int_0^t\overline{A}_2(s,t')dt'+\int_0^s\overline{A}_1(s',0)ds', $$
where $\beta_0$ is defined in \eqref{beta0}.

The new magnetic potential is given by $\tilde{\mathbf{A}}(s,t)=\overline{\mathbf{A}}(s,t)-\nabla_{(s,t)}\phi$. Then for all $u\in\mathcal{V}_{\delta}$, we have
\begin{align*}
Q_{\hbar,\delta}(u)&=\int_{\Omega_{\delta}}\left| \left( -\mathrm{i}\hbar\nabla+\mathbf{A} \right)u \right|^2dx\\&=\int_{\Phi^{-1}(\Omega_{\delta})}\left\{ \left( 1-t\kappa(s) \right)^{-2}\left| \left( -\mathrm{i}\hbar\partial_s+\tilde{A}_1 \right)w \right|^2+ \left| \left( -\mathrm{i}\hbar\partial_t \right)w \right|^2 \right\}(1-t\kappa(s))dsdt,
\end{align*}
where $w=\mathrm{e}^{\mathrm{i}\frac{\phi}{\hbar}}v$ and $v=u\circ\Phi$.\\
After this change of coordinates, the operator $\mathcal{L}_{\hbar,\delta}$ is unitarily equivalent to $\tilde{\mathcal{L}}_{\hbar,\delta}$ the
self-adjoint realization on $L^2\left( \Gamma\times (-\delta,\delta);m(s,t)dsdt \right)$, of the differential operator
$$ (1-t\kappa(s))^{-1}\hbar D_t (1-t\kappa(s)) \hbar D_t+(1-t\kappa(s))^{-1}\left( \hbar D_s+\tilde{A}_1(s,t) \right)(1-t\kappa(s))^{-1}\left( \hbar D_s+\tilde{A}_1(s,t) \right), $$
where $D=\frac{1}{\mathrm{i}}\partial$ and
\begin{equation}
\tilde{A}_1(s,t)=\beta_0-\int_0^tm(s,t')B\circ\Phi(s,t')dt'=\beta_0-\int_0^t(1-t'\kappa(s))B\circ\Phi(s,t')dt',
\label{POT1}
\end{equation}
with boundary dirichlet conditions.

Using Assumption \ref{Hyp0}, magnetic field $B$ vanishes exactly at order $k\geq 1$ on $\Gamma$, with $B$ and the $(k-1)$ first normal derivatives of $B$ vanish on $\Gamma$. So
$$ B\circ\Phi,\partial_t\left(B\circ\Phi\right),\partial_{t^2}^2\left(B\circ\Phi\right),...,\partial_{t^{k-1}}^{k-1}\left(B\circ\Phi\right) $$
vanish at $t=0$.\\
Since we work for $t$ small enough ($-\delta< t <\delta$), writing the asymptotic expansion of $B\circ\Phi$ near $t=0$ (for $s$ fixed) gives
$$ B\circ\Phi(s,t)=\frac{t^k}{k!}\left( \partial_{t^k}^k\left(B\circ\Phi\right)(s,0) \right)+\frac{t^{k+1}}{{(k+1)}!}\left( \partial_{t^{k+1}}^{k+1}\left(B\circ\Phi\right)(s,0) \right)+\mathcal{O}(t^{k+2}). $$
We recall and define
\begin{equation}
    \gamma(s):=\frac{1}{k!}\left( \partial_{t^k}^k\left(B\circ\Phi\right)(s,0) \right)\,\,\,\,\,and\,\,\,\,\,\,\delta(s):=\frac{1}{{(k+1)}!}\left( \partial_{t^{k+1}}^{k+1}\left(B\circ\Phi\right)(s,0) \right).
    \label{gamma}
\end{equation}

Using Assumption \ref{Hyp0}, the function $s\mapsto \gamma(s)$ has a non-degenerate minimum $\gamma_0>0$ at $s=s_r<0$ and $s=s_l=-s_r>0$, with
$$M(s_r)=a_1,\,\,\,\,M(s_l)=a_2,\,\,\,\,-L<s_r<0,\,\,\,\,0<s_l<+L,$$
and
$$ \gamma(s_l)=\gamma(s_r)=\gamma_0,\,\,\,\,\,\gamma'(s_l)=\gamma'(s_r)=0,\,\,\,\,\,\gamma''(s_l),\gamma''(s_r)>0. $$
By computing the integral in \eqref{POT1}, expression of the magnetic potential $\tilde{A}_1$ is given by
$$ \tilde{A}_1(s,t)=\beta_0-\gamma(s)\frac{t^{k+1}}{k+1}-\tilde{\delta}(s)\frac{t^{k+2}}{k+2}+\mathcal{O}(t^{k+3}), $$
where $\tilde{\delta}(s)=\delta(s)-\gamma(s)\kappa(s)$.
\subsection{Truncated operator and rescaled operator}
In this section we follow the same spectral reduction method as in \cite{BHR-purely}. First, we truncate the variable $t$ to work on the domain $]-L,+L]\times\mathbb{R}$ instead of $]-L,+L]\times(-\delta,\delta)$. After the truncation, we use the fact that the first eigenfunctions of operator $\tilde{\mathcal{L}}_{\hbar,\delta}$ decay exponentially away from the cancellation curve $\Gamma$ at the length scale $h=\hbar
^{\frac{1}{k+2}}$. This localization allows us to consider the partial rescaling $(s,t)=(\sigma,h\tau)$ with $h=\hbar^{\frac{1}{k+2}}$.

We start by truncating in the variable $t$. Let $c$ be a smooth truncation function equal to $1$ on $[-1,1]$ and $0$ for $\left| t\right|\geq 2$.

We define
$$ \underline{m}(s,t)=1-tc\left( \frac{t}{\delta}\right)\kappa, $$
and
$$ \underline{A}(s,t)=\beta_0-\gamma(s)\frac{t^{k+1}}{k+1}-\tilde{\delta}(s)c\left(\frac{t}{\delta}\right)\frac{t^{k+2}}{k+2}+c\left(\frac{t}{\delta}\right)\mathcal{O}\left(t^{k+3}\right). $$

We introduced here the truncation function $c$ to ensure that the terms are bounded when $t$ is large. This truncation function is found only in front of $t^{k+2}$ and $t^{k+3}$ in $\underline{A}(s,t)$. Then, we define $\underline{\mathcal{M}}_{\hbar,\delta}$ as self-adjoint realization on the space $L^2\left( \Gamma\times\mathbb{R},\underline{m}(s,t)dsdt \right)$ of the differential operator
$$ \underline{m}^{-1}\hbar D_t\underline{m}\hbar D_t+\underline{m}^{-1}\left( \hbar D_s+\underline{A}(s,t) \right)\underline{m}^{-1}\left( \hbar D_s+\underline{A}(s,t) \right). $$
We denote by $\left(\underline{\lambda}(\hbar,\delta)\right)_{n\geq 1}$ the increasing sequence of eigenvalues of operator $\underline{\mathcal{M}}_{\hbar,\delta}$.

Using the same method of the proof of Proposition \ref{R1}, Agmon estimates of $\underline{\mathcal{M}}_{\hbar,\delta}$ in coordinates $(s,t)$ and the min-max principle we can obtain the following proposition.
\begin{prop}
Let $n\geq 1$. There exist $C,\hbar_0,\beta>0$ such that, for all $\hbar\in(0,\hbar_0)$ and $\delta\in(0,\delta_0)$,
$$ \underline{\lambda}_n(\hbar,\delta)\leq \lambda_n(\hbar,\delta)\leq \underline{\lambda}_n(\hbar,\delta)+ C\mathrm{e}^{-\frac{\beta\delta^{\frac{k+2}{2}}}{\sqrt{\hbar}}}. $$
\label{R3}
\end{prop}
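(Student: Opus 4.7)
The plan is to mimic the two-sided inequality proof of Proposition \ref{R1}, but in tubular coordinates and with the roles of $\mathcal{L}_\hbar$ and $\mathcal{L}_{\hbar,\delta}$ played now by $\tilde{\mathcal{L}}_{\hbar,\delta}$ (the pulled-back operator on $\Gamma\times(-\delta,\delta)$) and $\underline{\mathcal{M}}_{\hbar,\delta}$. The key observation is that by construction the truncations satisfy $\underline{m}(s,t)=m(s,t)$ and $\underline{A}(s,t)=\tilde{A}_1(s,t)$ whenever $|t|\leq\delta$, because $c(t/\delta)=1$ there. Consequently the two quadratic forms coincide on functions supported in $\Gamma\times[-\delta,\delta]$, which allows one to transfer trial states between them. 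Since $\mathcal{L}_{\hbar,\delta}$ is unitarily equivalent to $\tilde{\mathcal{L}}_{\hbar,\delta}$, it suffices to compare $\underline{\mathcal{M}}_{\hbar,\delta}$ with $\tilde{\mathcal{L}}_{\hbar,\delta}$.

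For the lower bound $\underline{\lambda}_n(\hbar,\delta)\leq \lambda_n(\hbar,\delta)$, I would take an $L^2$-orthonormal family $(\psi_1,\dots,\psi_n)$ of Dirichlet eigenfunctions of $\tilde{\mathcal{L}}_{\hbar,\delta}$ and extend them by zero to $\Gamma\times\mathbb{R}$. The Dirichlet trace at $|t|=\delta$ ensures the extensions lie in the form domain of $\underline{\mathcal{M}}_{\hbar,\delta}$; since the two forms agree on the support of the extensions, the min-max principle gives the inequality immediately. For the upper bound I would construct trial functions $\tilde\psi_j=\chi_\delta\psi_j$ with $\chi_\delta(t)=\chi(t/\delta)$, where $\chi\in\mathcal{C}_c^\infty$ is equal to $1$ on $[-1/2,1/2]$ and supported in $[-1,1]$, and $(\psi_j)_{1\leq j\leq n}$ are orthonormal eigenfunctions of $\underline{\mathcal{M}}_{\hbar,\delta}$ associated to $\underline{\lambda}_1(\hbar,\delta),\dots,\underline{\lambda}_n(\hbar,\delta)$. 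Expanding as in the proof of Proposition \ref{R1} gives, via Cauchy--Schwarz,
$$ Q_{\hbar,\delta}(\tilde\psi_j)\leq \underline{\lambda}_j(\hbar,\delta)\|\psi_j\|^2+C\bigl(\hbar^{(2k+3)/(k+2)}\delta^{-1}+\hbar^2\delta^{-2}\bigr)\,\mathcal{R}_j, $$
where $\mathcal{R}_j$ collects the $L^2$ and magnetic $H^1$ norms of $\psi_j$ on the region $|t|\geq\delta/2$. Applying the min-max principle on the $n$-dimensional trial space concludes, provided $\mathcal{R}_j$ is exponentially small.

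The essential technical step, implicit in the author's parenthetical remark, is thus the Agmon estimate for $\underline{\mathcal{M}}_{\hbar,\delta}$ in the transverse variable $t$, stating that eigenfunctions associated to eigenvalues $\underline{\lambda}\leq E\hbar^{2(k+1)/(k+2)}$ satisfy
$$ \int \mathrm{e}^{2\alpha|t|^{(k+2)/2}/\sqrt{\hbar}}\bigl(|\psi|^2+\hbar^{-2(k+1)/(k+2)}|(\hbar D_t)\psi|^2+\dots\bigr)\underline{m}\,ds\,dt\leq C\|\psi\|^2. $$
I expect this to be the main obstacle, but it is obtained exactly as in \cite{helffer1996semiclassical} and Proposition \ref{estimations normale d'agmon}: one tests the eigenvalue equation against $\mathrm{e}^{2\Phi_\hbar/\sqrt{\hbar}}\psi$ with a Lipschitz weight $\Phi_\hbar(t)\sim\alpha|t|^{(k+2)/2}$, and uses the fact that the leading transverse potential $\tilde A_1(s,t)^2\sim \gamma(s)^2 t^{2(k+1)}/(k+1)^2$ dominates $|\nabla\Phi_\hbar|^2=\mathcal{O}(|t|^k)$ once $|t|$ is bounded away from $0$; the truncation only modifies the potential for $|t|\geq\delta$ where the quadratic transverse growth still dwarfs the weight, so the Agmon identity survives the truncation. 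Evaluating this estimate at $|t|\geq\delta/2$ gives $\mathcal{R}_j\leq C\mathrm{e}^{-2\alpha(\delta/2)^{(k+2)/2}/\sqrt{\hbar}}\|\psi_j\|^2$, hence the exponential remainder $C\mathrm{e}^{-\beta\delta^{(k+2)/2}/\sqrt{\hbar}}$ with $\beta=\alpha/2^{k/2}$, exactly as in the statement.
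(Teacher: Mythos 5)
Your proposal is correct and follows essentially the same route as the paper, which proves this proposition by exactly the method of Proposition \ref{R1}: zero-extension of the Dirichlet eigenfunctions for the lower bound, and cutoff of the eigenfunctions of $\underline{\mathcal{M}}_{\hbar,\delta}$ combined with its normal Agmon estimates in the $(s,t)$ coordinates and the min-max principle for the upper bound. Your observation that the coefficients $\underline{m},\underline{A}$ coincide with $m,\tilde{A}_1$ on $|t|\leq\delta$, so the two quadratic forms agree on functions supported there, is precisely what makes the transfer of trial states legitimate, and your identification of the transverse Agmon estimate for the truncated operator as the only real technical ingredient matches the paper's (unproved) appeal to it.
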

From now on we fix
\begin{equation}
    \delta=\hbar^{\frac{k}{(k+2)^2}-\frac{2\eta}{k+2}},
    \label{delta}
\end{equation}

for some fixed $ 0< \eta < \frac{k}{2(k+2)} $, which verifies that
$$ \delta^{\frac{k+2}{2}}=\hbar^{\frac{k}{2(k+2)}-\eta} >> \hbar^{\frac{k}{2(k+2)}}. $$
Now the $\hbar^{\frac{1}{k+2}}$-scale normal localization invites us to make the following change of variable
$$ (s,t)=(\sigma,h\tau), $$
where $h=\hbar^{\frac{1}{k+2}}$ is the new semi-classical parameter. With this change of variable, the metric and the new magnetic potential are given by
$$ \mathfrak{a}_h(\sigma,\tau)=1-h\tau c_{\mu}(\tau)\kappa(\sigma), $$
and
$$ \mathcal{A}_h^{[k]}(\sigma,\tau)=-h^{-k-1}\beta_0+\gamma(\sigma)\frac{\tau^{k+1}}{k+1}+h\tilde{\delta}(\sigma)c_{\mu}(\tau)\frac{\tau^{k+2}}{k+2}+h^2c_{\mu}\mathcal{O}\left(\tau^{k+3}\right), $$
with $c_{\mu}(\tau)=c(\mu\tau)$ where $\mu >0 $. In the following, the estimates are uniform with respect to the parameter $\mu$ and then we will fix after $\mu=h^{\frac{2}{k+2}+2\eta}$. Dividing $\underline{\mathcal{M}}_{\hbar,\delta}$ by $h^{2k+2}$, we get the rescaled operator
\begin{equation}
    \mathcal{N}_{h}^{[k]}=\mathfrak{a}_h^{-1}D_{\tau}\mathfrak{a}_hD_{\tau}+\mathfrak{a}_h^{-1}\left( hD_{\sigma}-\mathcal{A}_h^{[k]}(\sigma,\tau)\right)\mathfrak{a}_h^{-1}\left( hD_{\sigma}-\mathcal{A}_h^{[k]}(\sigma,\tau)\right).
    \label{N}
\end{equation}

We denote by $\left( \nu_n(h) \right)_{n\geq 1}$ the sequence of eigenvalues of $\mathcal{N}_h^{[k]}$. Then for all $n\geq 1$ we have
$$ \lambda_n\left(\underline{\mathcal{M}}_{\hbar,\delta} \right)=h^{2k+2} \nu_n(h)=\hbar^{2\frac{k+1}{k+2}} \nu_n(h). $$
\begin{prop}
Let $n\geq 1$. There exist $D > S, C,\hbar_0>0$ such that, for all $\hbar\in(0,\hbar_0)$
$$ \lambda_n(\hbar)-C\mathrm{e}^{-\frac{D}{\hbar^{1/(k+2)}}}\leq h^{2k+2}\nu_n(h)\leq \lambda_n(\hbar)+C\mathrm{e}^{-\frac{D}{\hbar^{1/(k+2)}}}. $$
\label{R4}
\end{prop}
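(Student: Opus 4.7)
The plan is to chain together Propositions \ref{R1} and \ref{R3} with the exact identity $\underline{\lambda}_n(\hbar,\delta)=h^{2k+2}\nu_n(h)$, and then show that the chosen scale $\delta=\hbar^{k/(k+2)^2-2\eta/(k+2)}$ makes the accumulated exponential error beat any prescribed rate $e^{-D/\hbar^{1/(k+2)}}$ with $D>S$. There is essentially no hard step here: Propositions \ref{R1} and \ref{R3} already do the spectral work, and what remains is an exponent computation exploiting the gap $\eta>0$.

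First, I would combine the two-sided estimates already proved. Proposition \ref{R1} yields
$$|\lambda_n(\hbar,\delta)-\lambda_n(\hbar)|\leq Ce^{-\beta\delta^{(k+2)/2}/\sqrt{\hbar}},$$
and Proposition \ref{R3} yields the analogous estimate $|\underline{\lambda}_n(\hbar,\delta)-\lambda_n(\hbar,\delta)|\leq Ce^{-\beta\delta^{(k+2)/2}/\sqrt{\hbar}}$. A triangle inequality then gives
$$|\underline{\lambda}_n(\hbar,\delta)-\lambda_n(\hbar)|\leq 2C e^{-\beta\delta^{(k+2)/2}/\sqrt{\hbar}}.$$
The rescaling $(s,t)=(\sigma,h\tau)$ in the passage from $\underline{\mathcal{M}}_{\hbar,\delta}$ to $\mathcal{N}_{h}^{[k]}$ is exactly a division by $h^{2k+2}$, so $\underline{\lambda}_n(\hbar,\delta)=h^{2k+2}\nu_n(h)$, and the above bound becomes the claim with a yet-unoptimized exponential rate.

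Second, I would evaluate the exponent at the prescribed scale \eqref{delta}. A direct computation gives
$$\delta^{\frac{k+2}{2}}=\hbar^{\frac{k}{2(k+2)}-\eta},\qquad \frac{\delta^{(k+2)/2}}{\sqrt{\hbar}}=\hbar^{-\frac{1}{k+2}-\eta}=\hbar^{-\eta}\cdot\hbar^{-\frac{1}{k+2}}.$$
Since $\eta>0$ is fixed and $\hbar^{-\eta}\to +\infty$, for any target constant $D>0$ (in particular, for any $D>S$) and for $\hbar$ small enough we have $\beta\hbar^{-\eta}\geq D$, hence
$$e^{-\beta\delta^{(k+2)/2}/\sqrt{\hbar}}=e^{-\beta\hbar^{-\eta}\,\hbar^{-1/(k+2)}}\leq e^{-D/\hbar^{1/(k+2)}}.$$
Fixing such a $D>S$ and absorbing the prefactor into a new constant $C$ yields the two-sided estimate in Proposition \ref{R4}.

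The only point that deserves attention is the role of $\eta$: the whole argument works precisely because $\delta$ was taken slightly larger than the natural Agmon scale $\hbar^{k/(k+2)^2}$. Without the $-2\eta/(k+2)$ correction, the ratio $\delta^{(k+2)/2}/\sqrt{\hbar}$ would equal $\hbar^{-1/(k+2)}$, and one could not guarantee that the truncation error beats the interaction scale $e^{-S/\hbar^{1/(k+2)}}$. This is why $\eta>0$ was introduced in \eqref{delta} and it is the one place in the argument where care is needed; everything else is bookkeeping of the earlier reductions.
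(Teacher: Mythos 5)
Your proposal is correct and follows essentially the same route as the paper: combining Propositions \ref{R1} and \ref{R3}, using $\underline{\lambda}_n(\hbar,\delta)=h^{2k+2}\nu_n(h)$, and observing that with the choice \eqref{delta} the exponent becomes $\beta\hbar^{-\eta}\hbar^{-1/(k+2)}$, which beats any fixed $D>S$ for small $\hbar$. Your closing remark on the role of $\eta$ is exactly the point the paper's choice of $\delta$ is designed to exploit.
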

\begin{proof}
Using Propositions \ref{R1} and \ref{R3}, we can deduce that
$$ \lambda_n(\hbar)-C\mathrm{e}^{-\frac{\beta\delta^{\frac{k+2}{2}}}{\sqrt{\hbar}}}\leq h^{2k+2}\nu_n(h)\leq \lambda_n(\hbar)+C\mathrm{e}^{-\frac{\beta\delta^{\frac{k+2}{2}}}{\sqrt{\hbar}}}. $$
With the choice of $\delta$, we have
$$ \mathrm{e}^{-\frac{\beta\delta^{\frac{k+2}{2}}}{\sqrt{\hbar}}}=\mathrm{e}^{-\frac{\beta\hbar^{-\eta}}{\hbar^{1/(k+2)}}}. $$
Therefore, there exist $D > S$ such that
$$ \lambda_n(\hbar)-C\mathrm{e}^{-\frac{D}{\hbar^{1/(k+2)}}}\leq h^{2k+2}\nu_n(h)\leq \lambda_n(\hbar)+C\mathrm{e}^{-\frac{D}{\hbar^{1/(k+2)}}}.  $$
\end{proof}

\section{Single well}
\label{section3}
The function $\gamma$ admits two non-degenerate minima in $s_l$ and $s_r$ on $\Gamma$. We will now consider two operators $\mathcal{N}_{h,l,\beta_0}^{[k]}$ and $\mathcal{N}_{h,r,\beta_0}^{[k]}$ which represent the left well operator and the right well operator respectively.
\subsection{Right well operator}
\label{Right well}

\begin{figure}[ht]
\begin{center}
\definecolor{tttttt}{rgb}{0.2,0.2,0.2}
\definecolor{uququq}{rgb}{0.25,0.25,0.25}
\begin{tikzpicture}[line cap=round,line join=round,>=triangle 45,x=1.0cm,y=1.0cm]
\clip(-6.44,-4.12) rectangle (7.36,3.97);
\draw[smooth,samples=100,domain=-4.3:4.3] plot(\x,{0-(-(\x)^2-9+(36*(\x)^2+3.2^4)^0.5)^0.5});
\draw [shift={(2.79,0)}] plot[domain=-0.33:0.33,variable=\t]({1*1.59*cos(\t r)+0*1.59*sin(\t r)},{0*1.59*cos(\t r)+1*1.59*sin(\t r)});
\draw [shift={(-2.79,0)}] plot[domain=2.81:3.47,variable=\t]({1*1.59*cos(\t r)+0*1.59*sin(\t r)},{0*1.59*cos(\t r)+1*1.59*sin(\t r)});
\draw (2.79,-1.6) node[anchor=north west] {$ \Gamma $};
\draw (4.01,1.41) node[anchor=north west] {$ s_r $};
\draw (-4.55,1.35) node[anchor=north west] {$ s_l $};
\draw [shift={(3.87,0.51)},dash pattern=on 2pt off 2pt,color=tttttt]  plot[domain=0.13:0.82,variable=\t]({1*1.43*cos(\t r)+0*1.43*sin(\t r)},{0*1.43*cos(\t r)+1*1.43*sin(\t r)});
\draw [->,color=tttttt] (4.95,1.45) -- (4.85,1.55);
\draw [color=tttttt](5.2,1.39) node[anchor=north west] {+};
\draw (0.04,-0.68) node[anchor=north west] {$ -L $};
\draw (-0.82,-0.68) node[anchor=north west] {$ +L $};
\draw[dash pattern=on 2pt off 2pt, smooth,samples=100,domain=-4.3:-3.7] plot(\x,{(-(\x)^2-9+(36*(\x)^2+3.2^4)^0.5)^0.5});
\draw[smooth,samples=100,domain=-3.7:4.3] plot(\x,{(-(\x)^2-9+(36*(\x)^2+3.2^4)^0.5)^0.5});
\draw [shift={(-3.94,1.58)},dash pattern=on 2pt off 2pt]  plot[domain=3.45:5.46,variable=\t]({1*0.35*cos(\t r)+0*0.35*sin(\t r)},{0*0.35*cos(\t r)+1*0.35*sin(\t r)});
\draw [shift={(-1.24,2.53)},dash pattern=on 2pt off 2pt]  plot[domain=3.24:3.48,variable=\t]({1*3.21*cos(\t r)+0*3.21*sin(\t r)},{0*3.21*cos(\t r)+1*3.21*sin(\t r)});
\draw [shift={(-4.49,0.58)},dash pattern=on 2pt off 2pt]  plot[domain=-0.32:1.69,variable=\t]({1*0.2*cos(\t r)+0*0.2*sin(\t r)},{0*0.2*cos(\t r)+1*0.2*sin(\t r)});
\draw [shift={(-4.44,-0.11)},dash pattern=on 2pt off 2pt]  plot[domain=1.65:2.66,variable=\t]({1*0.89*cos(\t r)+0*0.89*sin(\t r)},{0*0.89*cos(\t r)+1*0.89*sin(\t r)});
\draw [->] (-4.41,2.06) -- (-4.43,2.2);
\draw (-5.5,0.4) node[anchor=north west] {$ -\infty $};
\draw (-4.41,2.43) node[anchor=north west] {$ +\infty $};
\draw (-3.64,1.42) node[anchor=north west] {$ s_l-\hat{\eta} $};
\draw (-4.27,0.58) node[anchor=north west] {$ s_l+\hat{\eta} $};
\draw (-0.53,1.61) node[anchor=north west] {$ s=0 $};
\begin{scriptsize}
\fill [color=uququq] (4,1.05) circle (1.5pt);
\fill [color=uququq] (-4,1.05) circle (1.5pt);
\fill [color=uququq] (0,1.11) circle (1.5pt);
\fill [color=uququq] (0,-1.11) circle (1.5pt);
\fill [color=uququq] (-3.7,1.33) circle (1.5pt);
\fill [color=uququq] (-4.3,0.52) circle (1.5pt);
\end{scriptsize}
\end{tikzpicture}

\caption{One well domain attached to the right well}
\label{One well}
\end{center}
\end{figure}
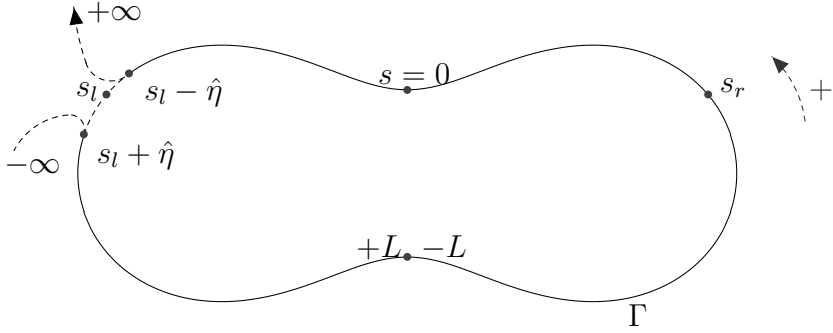

This operator is attached to the right well $s_r$. We will work on $\mathbb{R}\times\mathbb{R}$ instead of $\Gamma\times\mathbb{R}$ and with only one well. For this we will remove the left well by removing a small neighborhood of $s_l$, and gluing an infinite strip: precisely we start by identifying $\Gamma$ with $(s_l-2L,s_l]$. We fix $\hat{\eta}$ so that
\begin{equation}
    0< \hat{\eta}< \text{min}\left\{ \frac{1}{4},\frac{L}{4} \right\}. 
    \label{etaa}
\end{equation}

We consider the following right well differential operator in $L^2\left(\mathbb{R}\times\mathbb{R};\mathfrak{a}_{h,r}d\sigma d\tau \right)$,
\begin{equation}
    \mathcal{N}_{h,r,\beta_0}^{[k]}=\mathfrak{a}_{h,r}^{-1}D_{\tau}\mathfrak{a}_{h,r}D_{\tau}+\mathfrak{a}_{h,r}^{-1}\left( hD_{\sigma}-\mathcal{A}_{h,r,\beta_0}^{[k]}(\sigma,\tau)\right)\mathfrak{a}_{h,r}^{-1}\left( hD_{\sigma}-\mathcal{A}_{h,r,\beta_0}^{[k]}(\sigma,\tau)\right),
    \label{puit droite}
\end{equation}
with
$$ \mathfrak{a}_{h,r}(\sigma,\tau)=1-h\tau c_{\mu}(\tau)\kappa_r(\sigma), $$
and
$$ \mathcal{A}_{h,r,\beta_0}^{[k]}(\sigma,\tau)=-h^{-k-1}\beta_0+\gamma_r(\sigma)\frac{\tau^{k+1}}{k+1}+h\tilde{\delta}_r(\sigma)\frac{\tau^{k+2}}{k+2}c_{\mu}(\tau)+c_{\mu}h^2\mathcal{O}\left( \tau^{k+3}\right), $$
where the functions $\delta_r$ and $\kappa_r$ are respective extensions of $\delta$ and $\kappa$ such that
$$ \delta_r(\sigma)=\delta(\sigma)\,\,\,\,\text{and}\,\,\,\,\,\kappa_r(\sigma)=\kappa(\sigma)\,\,\,\,\,\text{on}\,\,\,\,\,I_{r,\hat{\eta}}:=(s_l-2L+\hat{\eta},s_l-\hat{\eta}), $$
and are zero functions on $(-\infty,s_l-2L)\cup (s_l,+\infty)$. On the other hand, the extension $\gamma_r$ of $\gamma$ is chosen so that $\gamma_r=\gamma$ on $I_{r,\hat{\eta}}$ and $\gamma_r=\gamma_{\infty}$ on $(-\infty,s_l-2L)\cup (s_l,+\infty)$, with $\gamma_{\infty}\in\mathbb{R}_+^{\ast}$ such that $\gamma_{\infty}>\underset{\sigma\in\Gamma}{\text{max}}\,\gamma(\sigma)$.
This extension can be chosen so that $\gamma_r$ admits a unique non-degenerate minimum $\gamma_0>0$ at $s_r<0$ and that $\left\| 1-\frac{\gamma_0}{\gamma_r} \right\|_{\infty}$ is small enough.

 Since we are now working with a simply connected operator, then the two operators $\mathcal{N}_{h,r,\beta_0}^{[k]}$  and $\mathcal{N}_{h,r,0}^{[k]}$ are equivalent.
We denote by $u_{h,r}^{[k]}$ a normalized ground state of the operator $\mathcal{N}_{h,r}^{[k]}:=\mathcal{N}_{h,r,0}^{[k]}$ in $L^2\left(\mathbb{R}\times\mathbb{R};\mathfrak{a}_{h,r}d\sigma d\tau\right)$, and the normalized ground state of $\mathcal{N}_{h,r,\beta_0}^{[k]}$ is
given by: 
\begin{equation}
   \Check{\phi}_{h,r}^{k}(\sigma,\tau)=\mathrm{e}^{-\mathrm{i}\beta_0\sigma/{h^{k+2}}}u_{h,r}^{[k]}(\sigma,\tau).
    \label{aaa1}
\end{equation}

\subsection{Left well operator}
\label{Left well}
To define the left well operator, we consider the symmetry operator
\begin{equation}
    Uf(\sigma,\tau):=\overline{f(-\sigma,\tau)},
    \label{sym}
\end{equation}

and define the left well operator on $L^2\left(\mathbb{R}\times\mathbb{R};\mathfrak{a}_{h,l}d\sigma d\tau\right)$ by
$$ \mathcal{N}_{h,l,\beta_0}^{[k]}=U^{-1}\mathcal{N}_{h,r,\beta_0}^{[k]}U, $$
where
$$ \mathfrak{a}_{h,l}(\sigma,\tau)=\mathfrak{a}_{h,r}(-\sigma,\tau). $$

Note that this operator corresponds to the following construction. We identify $\Gamma$ with $[s_r,s_r+2L)$, we can define on $\mathbb{R}$ the functions $\gamma_l$, $\delta_l$ and $\kappa_l$ by $\gamma_l(\sigma)=\gamma_r(-\sigma)$, $\delta_l(\sigma)=\delta_r(-\sigma)$ and $\kappa_l(\sigma)=\kappa_r(-\sigma)$.\\
Then the functions $\delta_l$ and $\kappa_l$ verify that
$$ \delta_l(\sigma)=\delta(\sigma)\,\,\,\text{and}\,\,\,\kappa_l(\sigma)=\kappa(\sigma)\,\,\,\,\,\,\,\,\,\,\,\,\,\text{on}\,\,\,\,\, I_{l,\hat{\eta}}:=(s_r+\hat{\eta},s_r+2L-\hat{\eta}), $$
and are zero functions on $(-\infty,s_r)\cup (s_r+2L,+\infty)$. On the other hand, the extension $\gamma_l$ of $\gamma$ is chosen so that $\gamma_l=\gamma$ on $I_{l,\hat{\eta}}$ and $\gamma_l=\gamma_{\infty}$ on $(-\infty,s_r)\cup (s_r+2L,+\infty)$.
In this way, $\gamma_l$ admits a unique non-degenerate minimum $\gamma_0>0$ at $s_l>0$, and verify that $\left\| 1-\frac{\gamma_0}{\gamma_l} \right\|_{\infty}$ is small enough.

The normalized ground state of the operator $\mathcal{N}_{h,l,\beta_0}^{[k]}$ on $L^2\left(\mathbb{R}\times\mathbb{R};\mathfrak{a}_{h,l}d\sigma d\tau\right)$ is given by
\begin{equation}
    \check{\phi}_{h,l}^{[k]}(\sigma,\tau):=U\check{\phi}_{h,r}^{[k]}(\sigma,\tau)=\mathrm{e}^{-\mathrm{i}\beta_0\sigma/{h^{k+2}}}u_{h,l}^{[k]}(\sigma,\tau),
    \label{aaa2}
\end{equation}
where $u_{h,l}^{[k]}=Uu_{h,r}^{[k]}$.

\section{WKB expansions of the right well operator}
\label{section4}
In this section, we will construct an approximation of the eigenvalues and the associated eigenfunctions for the right well operator $\mathcal{N}_{h,r}^{[k]}:=\mathcal{N}_{h,r,0}^{[k]}$ by WKB expansions, and the construction for the left well operator $\mathcal{N}_{h,l}^{[k]}:=\mathcal{N}_{h,l,0}^{[k]}$ is obtained by symmetry. These WKB constructions are inspired by \cite{BHR-BKW}.

\subsection{Generalized Montgomery operator}
\label{GMO}
For $(x,\xi)\in\mathbb{R}^2$, we consider the symbol operator
$$ \mathcal{M}_{x,\xi}^{[k]}=D_t^2+\left(\xi-\gamma_r(x)\frac{t^{k+1}}{k+1}\right)^2. $$
In the sense of Kato's perturbation theory (see \cite{kato2013perturbation}), the family of self-adjoint operators $\left( \mathcal{M}_{x,\xi}^{[k]} \right)_{(x,\xi)\in\mathbb{R}^2}$ is analytic of type (A). Then, for $(x_0,\xi_0)\in\mathbb{R}^2$, the family $\left( \mathcal{M}_{x,\xi}^{[k]} \right)_{(x,\xi)\in\mathbb{R}^2}$ can be extended into a family of closed operators $\left( \mathcal{M}_{x,\xi}^{[k]} \right)_{(x,\xi)\in\mathcal{V}}$ where $\mathcal{V}$ is a $\mathbb{C}$-neighborhood of $(x_0,\xi_0)$. The lowest eigenvalue of $\mathcal{M}_{x,\xi}^{[k]}$, denoted by $\mu^{[k]}(x,\xi)$, satisfies
\begin{equation}
    \mu^{[k]}(x,\xi)=\gamma_r(x)^{\frac{2}{k+2}}\nu^{[k]}\left(\gamma_r(x)^{-\frac{1}{k+2}}\xi\right).
    \label{VP}
\end{equation}
Since the function $\mathbb{R}\ni\sigma\mapsto\gamma_r(\sigma)$ admits a unique non-degenerate minimum $\gamma_0>0$ at $s_r$, then the function $\mathbb{R}^2\ni(x,\xi)\mapsto\mu^{[k]}(x,\xi)$ admits a unique non-degenerate minimum at $(s_r,\gamma_0^{\frac{1}{k+2}}\xi_0^{[k]})$ give by
\begin{equation}
    \mu_0^{[k]}:=\gamma_0^{\frac{2}{k+2}}\nu^{[k]}(\xi_0^{[k]})>0.
    \label{cte thm}
\end{equation}

We denote by $u_{x,\xi}^{[k]}$ the eigenfunction of $\mathcal{M}_{x,\xi}^{[k]}$ associated with the eigenvalue $\mu^{[k]}(x,\xi)$.\\
By differentiating with respect to $\xi$ equation $\left( \mathcal{M}_{x,\xi}^{[k]}-\mu^{[k]}(x,\xi)\right)u_{x,\xi}^{[k]}=0$ and taking the inner product with $u_{\overline{x},\overline{\xi}}^{[k]}$ in $L^2(\mathbb{R})$, we get
\begin{equation}
    \partial_{\xi}\mu^{[k]}(x,\xi)=\displaystyle\int_{\mathbb{R}}\left( \left( \partial_{\xi}\mathcal{M}_{x,\xi}^{[k]}\right)u_{x,\xi}^{[k]}(\tau)\right)u_{x,\xi}^{[k]}(\tau)d\tau.  
    \label{FH2}
\end{equation}
By differentiating the function $\mu^{[k]}$ with respect to $x$ and $\xi$, the Hessian matrix of $\mu^{[k]}$ at $(s_r,\gamma_0^{\frac{1}{k+2}}\xi_0^{[k]})$ is given by
\begin{equation}
    \text{Hess} \, \mu^{[k]}(s_r,\gamma_0^{\frac{1}{k+2}}\xi_0^{[k]})=\left(\begin{array}{cc}
\frac{2}{k+2}\gamma''(s_r)\gamma_0^{-\frac{k}{k+2}}\nu^{[k]}\left(\xi_0^{[k]}\right) & 0 \\ 0 & \left(\nu^{[k]}\right)''\left( \xi_0^{[k]} \right) \end{array} \right).
\label{matrice hessienne}
\end{equation}
\subsection{Eikonal equation}
\label{eikn}
We consider the following equation
\begin{equation}
    \nu\left( \mathrm{i}\varphi_r(\sigma) \right)=F_r(\sigma),
    \label{EIK1}
\end{equation}
where $\nu(\xi)=\nu^{[k]}\left(\xi_0^{[k]}+\xi \right)-\nu^{[k]}\left(\xi_0^{[k]}\right)$ and $F_r(\sigma)=\nu^{[k]}\left(\xi_0^{[k]}\right) \left( \left( \frac{\gamma_0}{\gamma_r(\sigma)} \right)^{2/{(k+2)}}-1 \right).$\\
This eikonal equation can be found in \cite[Section 4]{BHR-BKW}. The following lemma is the same as the one of \cite[Lemma 4.4]{BHR-BKW}. Since $\left\|1-\frac{\gamma_0}{\gamma_r}\right\|_{\infty}$ is small enough, the solution of this equation is defined for all $\sigma\in (s_l+\hat{\eta}-2L,s_l-\hat{\eta})$ where $\hat{\eta}>0$ is introduced in \eqref{etaa}.
\begin{lemma}
Equation \eqref{EIK1} admits a smooth solution $\varphi_r$ defined on $(s_l+\hat{\eta}-2L,s_l-\hat{\eta})$ such that $\varphi_r(s_r)=0$ and $\varphi_r'(s_r)=\sqrt{\frac{2}{k+2}\frac{\gamma''(s_r)\nu^{[k]}\left(\xi_0^{[k]}\right)}{\gamma_0\left(\nu^{[k]}\right)''\left(\xi_0^{[k]}\right)}}>0$.
\label{Lemme}
\end{lemma}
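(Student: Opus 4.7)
The equation $\nu(\mathrm{i}\varphi_r(\sigma)) = F_r(\sigma)$ is degenerate at $\sigma = s_r$: since $\xi_0^{[k]}$ is the non-degenerate minimum of $\nu^{[k]}$ one has $\nu(0) = \nu'(0) = 0$, and since $\gamma_r$ has a non-degenerate minimum $\gamma_0$ at $s_r$ one has $F_r(s_r) = F_r'(s_r) = 0$. So a direct application of the implicit function theorem fails. My plan is to exploit that both sides vanish to exactly order two at their respective base points: I will factor out the quadratic part on each side and reduce the equation to the inversion of a holomorphic function with nonzero derivative at $0$.

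Concretely, I start from the holomorphic extension of $\nu^{[k]}$ in a complex neighborhood of $\xi_0^{[k]}$ (Kato's analytic perturbation theory, already invoked in Section \ref{GMO}). Then I can write $\nu(\mathrm{i}\xi) = -\xi^2 h(\xi)$ with $h$ holomorphic on a disk $\{|\xi|<r_0\}$ and $h(0) = \tfrac12 (\nu^{[k]})''(\xi_0^{[k]}) > 0$. On the real side, Taylor's formula with integral remainder produces a smooth global factorization $F_r(\sigma) = (\sigma - s_r)^2 \tilde F(\sigma)$, and because the extension $\gamma_r$ satisfies $\gamma_r(\sigma) > \gamma_0$ for $\sigma \neq s_r$ one has $F_r < 0$ off $s_r$, hence $\tilde F < 0$ everywhere with $\tilde F(s_r) = F_r''(s_r)/2 = -\nu^{[k]}(\xi_0^{[k]})\gamma''(s_r)/((k+2)\gamma_0)$. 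The eikonal equation thereby becomes $\varphi_r^2\, h(\varphi_r) = (\sigma - s_r)^2 \bigl(-\tilde F(\sigma)\bigr)$.

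I then set $\Psi(\xi) := \xi \sqrt{h(\xi)}$, holomorphic near $0$ with $\Psi(0) = 0$ and $\Psi'(0) = \sqrt{h(0)} > 0$. Choosing the sign convention so that $\varphi_r$ carries the sign of $\sigma - s_r$ (which will force $\varphi_r'(s_r) > 0$), the equation is equivalent to
\[
\Psi(\varphi_r(\sigma)) \;=\; (\sigma - s_r)\sqrt{-\tilde F(\sigma)}.
\]
Since $\Psi$ is a local biholomorphism at the origin, I can invert it and define
\[
\varphi_r(\sigma) := \Psi^{-1}\!\Big((\sigma-s_r)\sqrt{-\tilde F(\sigma)}\Big),
\]
which is smooth because $-\tilde F$ is smooth and strictly positive. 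The argument stays in the disk of definition of $\Psi^{-1}$ on the whole interval $(s_l + \hat\eta - 2L,\, s_l - \hat\eta)$: there, $|\sigma - s_r|$ is bounded while Assumption \ref{Hyp0}(iii) forces $\|\tilde F\|_\infty$ to be as small as we please, so $(\sigma-s_r)\sqrt{-\tilde F(\sigma)}$ remains small uniformly. Finally $\varphi_r(s_r) = \Psi^{-1}(0) = 0$ and $\varphi_r'(s_r) = (\Psi^{-1})'(0)\sqrt{-\tilde F(s_r)} = \sqrt{-\tilde F(s_r)/h(0)}$, and substituting the explicit values of $-\tilde F(s_r)$ and $h(0)$ recovers exactly the announced expression.

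The only genuine obstacle is the double vanishing at $s_r$; once it is resolved by this square-root factorization, the rest is a routine use of holomorphy, and Assumption \ref{Hyp0}(iii) is the ingredient that upgrades a local construction near $s_r$ into a solution defined on the full interval required for the WKB analysis of the right-well operator.
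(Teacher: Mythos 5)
Your construction is essentially the one the paper uses (the paper simply invokes \cite[Lemma 4.4]{BHR-BKW} and records the resulting formula): your map $\Psi(\xi)=\xi\sqrt{h(\xi)}$ is, up to composition with multiplication by $\mathrm{i}$, the holomorphic square root $\tilde{\nu}$ of $\nu$ appearing there, and $(\sigma-s_r)\sqrt{-\tilde F(\sigma)}$ is exactly the signed function $\mathfrak{f}_r$, so your $\varphi_r=\Psi^{-1}\bigl((\sigma-s_r)\sqrt{-\tilde F(\sigma)}\bigr)$ coincides with the paper's $\varphi_r=-\mathrm{i}\tilde{\nu}^{-1}(\mathrm{i}\mathfrak{f}_r)$; your Taylor-with-integral-remainder factorization of $F_r$ even makes the smoothness of $\mathfrak{f}_r$ at $s_r$ more transparent than the paper's formulation, and your computation of $\varphi_r'(s_r)$ is correct. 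One point needs repair: Assumption \ref{Hyp0}(iii) does \emph{not} make $\|\tilde F\|_\infty$ small, since near $s_r$ one has $\tilde F(\sigma)\approx F_r''(s_r)/2$, which is proportional to $\gamma''(s_r)/\gamma_0$ and is not controlled by the smallness of $\|1-\gamma_0/\gamma_r\|_\infty$. What (iii) does give, and what you actually need, is the uniform bound $\bigl|(\sigma-s_r)\sqrt{-\tilde F(\sigma)}\bigr|=\sqrt{-F_r(\sigma)}\le \sqrt{\nu^{[k]}(\xi_0^{[k]})}\,\bigl\|1-\tfrac{\gamma_0}{\gamma_r}\bigr\|_\infty^{1/2}$ (using $1-x^{2/(k+2)}\le 1-x$ for $0<x\le 1$), which keeps the argument of $\Psi^{-1}$ in a fixed small disk on all of $(s_l+\hat{\eta}-2L,s_l-\hat{\eta})$; with that one-line substitution your proof is complete.
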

Using the proof of Lemma $4.4$ in \cite[Section 4]{BHR-BKW}, the function $\varphi$ is given by
$$  \varphi_r(\sigma)=-\mathrm{i}\tilde{\nu}^{-1}\left( \mathrm{i}\mathfrak{f}_r(\sigma) \right), $$
where $\tilde{\nu}$ is a holomorphic function in a neighborhood of $0$ such that $\tilde{\nu}^2=\nu$ and $\tilde{\nu}'(0)=\sqrt{\frac{\nu''(0)}{2}}$ and the function $\mathfrak{f}_r$ is defined by
$$ \mathfrak{f}_r(\sigma)=\begin{cases}
\,\,\,\,\sqrt{\nu^{[k]}(\xi_0^{[k]})}\sqrt{1-\left( \frac{\gamma_0}{\gamma_r(\sigma)} \right)^{2/(k+2)}} \,\,\,\,\,\,\text{if}\,\,\,\,\sigma\geq s_r, \\
-\sqrt{\nu^{[k]}(\xi_0^{[k]})}\sqrt{1-\left( \frac{\gamma_0}{\gamma_r(\sigma)} \right)^{2/(k+2)}} \,\,\,\,\,\text{if}\,\,\,\,\sigma\leq s_r.
\end{cases} $$
The function $\mathfrak{f}_r$ is derivable at $s_r$ and $\mathfrak{f}_r'(0)=\sqrt{-\frac{F_r''(0)}{2}}>0$. The taylor series of $\tilde{\nu}^{-1}$ at $0$ gives 
$$ \varphi_r(\sigma)=\Re\left(\varphi_r(\sigma)\right)+\mathrm{i}\Im\left(\varphi_r(\sigma)\right), $$
with
\begin{equation}
    \Re\left(\varphi_r(\sigma)\right)=\sqrt{\frac{2}{\left( \nu^{[k]} \right)''(\xi_0^{[k]})}}\left|\mathfrak{f}_r(\sigma)\right|+\mathcal{O}\left( \left|\mathfrak{f}_r(\sigma)\right|^3 \right),
    \label{R}
\end{equation}
and
\begin{equation}
    \Im\left(\varphi_r(\sigma)\right)=\frac{\left(\tilde{\nu}^{-1}\right)''(0)}{2}\mathfrak{f}_r(\sigma)^2 +\mathcal{O}\left(\mathfrak{f}_r(\sigma)^4\right).
    \label{I}
\end{equation}
\begin{rmk}
Concerning the left well operator $\mathcal{N}_{h,l}^{[k]}$ defined in \eqref{Left well}, equation
$$ \nu\left( \mathrm{i}\varphi_l(\sigma) \right)=F_l(\sigma) $$
admits also a smooth solution $\varphi_l$ defined on $(s_r+\hat{\eta},s_r+2L-\hat{\eta})$ such that $\varphi_l(s_l)=0$ and $$\varphi_l'(s_l)=\sqrt{\frac{2}{k+2}\frac{\gamma''(s_r)\nu^{[k]}\left(\xi_0^{[k]}\right)}{\gamma_0\left(\nu^{[k]}\right)''\left(\xi_0^{[k]}\right)}}>0,$$ 
where
$$ F_l(\sigma)=\nu^{[k]}\left(\xi_0^{[k]}\right) \left( \left( \frac{\gamma_0}{\gamma_l(\sigma)} \right)^{2/{(k+2)}}-1 \right). $$
As the construction of $\varphi_r$, the function $\varphi_l$ is defined by
$$  \varphi_l(\sigma)=-\mathrm{i}\tilde{\nu}^{-1}\left( \mathrm{i}\mathfrak{f}_l(\sigma) \right), $$
where
$$\mathfrak{f}_l(\sigma)=\begin{cases}
\,\,\,\,\sqrt{\nu^{[k]}(\xi_0^{[k]})}\sqrt{1-\left( \frac{\gamma_0}{\gamma_l(\sigma)} \right)^{2/(k+2)}} \,\,\,\,\,\,\text{if}\,\,\,\,\sigma\geq s_l, \\
-\sqrt{\nu^{[k]}(\xi_0^{[k]})}\sqrt{1-\left( \frac{\gamma_0}{\gamma_l(\sigma)} \right)^{2/(k+2)}} \,\,\,\,\,\text{if}\,\,\,\,\sigma\leq s_l.
\end{cases} $$
\label{rmk1}
\end{rmk}
By construction of $\varphi_r$ and $\varphi_l$, we can define the two even smooth functions $\mathfrak{V}$ and $\mathfrak{I}$ on $\Gamma\equiv [-L,+L]$ by
\begin{equation}
    \mathfrak{D}(\sigma):= \begin{cases}
-\Re \varphi_r(\sigma) & \,\,\,\text{if}\,\,\,\sigma\in[-L,s_r],\\
\Re \varphi_r(\sigma) & \,\,\,\text{if}\,\,\,\sigma\in[s_r,0],\\
-\Re \varphi_l(\sigma) & \,\,\,\text{if}\,\,\,\sigma\in[0,s_l],\\
\Re \varphi_l(\sigma) & \,\,\,\text{if}\,\,\,\sigma\in[s_l,+L].
\end{cases}
\label{M}
\end{equation}
and
\begin{equation}
    \mathfrak{I}(\sigma):= \begin{cases}
\Im \varphi_r(\sigma) & \,\,\,\text{if}\,\,\,\sigma\in[-L,0],\\
\Im \varphi_l(\sigma) & \,\,\,\text{if}\,\,\,\sigma\in[0,+L].
\end{cases}
\label{varphi}
\end{equation}

\subsection{WKB expansions}
The WKB expansions of operator $\mathcal{N}_{h,r}^{[k]}$ are inspired from \cite[Section 5, Theorem 5.2]{BHR-BKW}. In the following theorem, we will construct these approximations and specify the Agmon distance adapted to our case which will be a positive real function.

Let us introduce the Agmon distance related to the ``right well''
$$ \Phi_r(\sigma)=\int_{s_r}^{\sigma}\gamma_r(\tilde{\sigma})^{1/(k+2)}\Re\left( \varphi_r(\tilde{\sigma}) \right) d\tilde{\sigma}, $$
which verifies that $\Phi_r''(s_r)>0$ where $\varphi_r$ is the function defined in Lemma \ref{Lemme}.
\begin{theorem}
There exist a sequence of smooth functions $\left( a_{n,j}^{[k]} \right)_{j\geq 0}\subset Dom \left( \mathcal{N}_{h,r}^{[k]} \right)$, a sequence of real numbers $\left( \delta_{n,j}^{[k]}\right)_{j\geq 0}\subset \mathbb{R}$, a family of functions $\left( \Psi_{h,r}^{[k]}\right)_{h\in(0,h_0]}\subset L^2\left(\mathbb{R}^2\right)$ and a family of real numbers $\left( \delta_n^{[k]}(h)\right)_{h\in(0,h_0]}$ such that
$$ \Psi_{h,r}^{[k]}(\sigma,\tau)\sim h^{-1/4}\mathrm{e}^{-\frac{\Phi_r(\sigma)}{h}}\mathrm{e}^{\mathrm{i}\frac{\mathfrak{g}_r(\sigma)}{h}}\sum_{j\geq 0}a_{n,j}^{[k]}(\sigma,\tau)h^j, $$
$$ \delta_n^{[k]}(h)\sim \sum_{j\geq 0}\delta_{n,j}^{[k]}h^j, $$
and
$$ \left( \mathcal{N}_{h,r}^{[k]}-\delta_n^{[k]}(h) \right)\Psi_{h,r}^{[k]} =\mathcal{O}(h^{\infty})\mathrm{e}^{-\Phi_r/h}, $$
with 
\begin{equation}
    \mathfrak{g}_r(\sigma)=\int_0^{\sigma}\gamma_r(\tilde{\sigma})^{\frac{1}{k+2}} \left(\xi_0^{[k]} - \Im \left( \varphi_r(\tilde{\sigma}) \right)\right) d\tilde{\sigma}.
    \label{gsigma}
\end{equation}
Furthermore
\begin{enumerate}
\item[1)]$\delta_{n,0}^{[k]}=\gamma_0^{\frac{2}{k+2}}\nu^{[k]}(\xi_0^{[k]})$ and $\delta_{n,1}^{[k]}=\frac{\left(\nu^{[k]}\right)''\left(\xi_0^{[k]}\right)}{2}(2n-1)\zeta+\mathfrak{R}_r(s_r),$ \\
\item[2)]$a_{n,0}^{[k]}(\sigma,\tau)=f_{n,0}(\sigma) u_{\sigma ,\mathfrak{w}_r(\sigma)}^{[k]}(\tau),$ where $f_{n,0}$ solves the effective transport equation
\begin{equation}
\frac{1}{2}\left(D_{\sigma}\partial_{\xi}\mu^{[k]}(\sigma,\mathfrak{w}_r(\sigma))+\partial_{\xi}\mu^{[k]}(\sigma,\mathfrak{w}_r(\sigma))D_{\sigma}\right)f_{n,0}+R_r^{[k]}(\sigma)f_{n,0}=\delta_{n,1}f_{n,0},
\label{equation transport}
\end{equation}
\end{enumerate}
with
$$ \zeta=\sqrt{\frac{2}{k+2}\frac{\gamma''(0)\nu^{[k]}\left(\xi_0^{[k]}\right)}{\gamma_0^{\frac{k}{k+2}}\left(\nu^{[k]} \right)''\left(\xi_0^{[k]}\right)}}, $$
\begin{align*}
\mathfrak{R}_r(\sigma)&=2\gamma_r(\sigma)\left(\delta_r(\sigma)+\frac{\kappa_r(\sigma)\gamma_r(\sigma)}{k+1}\right)\int c_{\mu}  \frac{\tau^{2k+3}}{(k+1)(k+2)}\left(u_{\sigma,\mathfrak{w}_r(\sigma)}^{[k]}(\tau)\right)^2 d\tau\\&+\kappa_r(\sigma)\int \left( c_{\mu} +c_{\mu} ' \tau \right) \partial_{\tau}u_{\sigma,\mathfrak{w}(\sigma)}^{[k]}(\tau)u_{\sigma,\mathfrak{w}_r(\sigma)}^{[k]}(\tau)d\tau \\&-2\mathfrak{w}_r(\sigma)\left(\delta_r(\sigma)+\frac{(k+3)\kappa_r(\sigma)\gamma_r(\sigma)}{k+1}\right)\int c_{\mu} \frac{\tau^{k+2}}{k+2}\left(u_{\sigma,\mathfrak{w}_r(\sigma)}^{[k]}(\tau)\right)^2d\tau\\&+2 \mathfrak{w}_r (\sigma)^2 \kappa_r(\sigma)\int c_{\mu} \tau u_{\sigma,\mathfrak{w}_r(\sigma)}^{[k]}(\tau)d\tau,
\end{align*}

where 
\begin{equation}
\mathfrak{w}_r(\sigma):=\mathrm{i}\Phi_r'(\sigma)+\mathfrak{g}_r'(\sigma) \,\,\,\text{and} \,\,\, \mathfrak{w}_r(s_r)=\gamma_0^{\frac{1}{k+2}}\xi_0^{[k]}\,. 
\label{nsita1}
\end{equation}

\label{BKW}
\end{theorem}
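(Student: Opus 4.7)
I would follow the standard WKB recipe of Helffer--Sj\"ostrand, adapted to the complex phase produced by the eikonal equation of Lemma \ref{Lemme}. Plugging the ansatz $\Psi = h^{-1/4}\mathrm{e}^{-\Phi_r/h}\mathrm{e}^{\mathrm{i}\mathfrak{g}_r/h}\,a$ with $a(\sigma,\tau;h)\sim\sum_{j\geq 0}a_{n,j}^{[k]}(\sigma,\tau)\,h^j$ into $(\mathcal{N}_{h,r}^{[k]}-\delta_n^{[k]}(h))\Psi=0$ is equivalent, after conjugation by the weight, to $(\mathcal{P}_h-\delta_n^{[k]}(h))a=0$, where $\mathcal{P}_h$ is obtained from \eqref{puit droite} by substituting $hD_\sigma\mapsto hD_\sigma+\mathfrak{w}_r(\sigma)$ with $\mathfrak{w}_r$ as in \eqref{nsita1}. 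Taylor expanding $\mathfrak{a}_{h,r}$ and the magnetic symbol in powers of $h$ yields $\mathcal{P}_h=\mathcal{P}_0+h\mathcal{P}_1+h^2\mathcal{P}_2+\cdots$, whose principal part is exactly the generalized Montgomery operator $\mathcal{P}_0=\mathcal{M}_{\sigma,\mathfrak{w}_r(\sigma)}^{[k]}$ acting on the $\tau$-variable, with $\sigma$ appearing as a parameter.

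The $h^0$ equation $(\mathcal{M}_{\sigma,\mathfrak{w}_r(\sigma)}^{[k]}-\delta_{n,0}^{[k]})a_{n,0}^{[k]}=0$ is dispatched by the eikonal identity: from \eqref{gsigma}, \eqref{nsita1} and \eqref{R}--\eqref{I} one reads $\gamma_r(\sigma)^{-1/(k+2)}\mathfrak{w}_r(\sigma)=\xi_0^{[k]}+\mathrm{i}\varphi_r(\sigma)$, so that \eqref{EIK1} combined with \eqref{VP} gives $\mu^{[k]}(\sigma,\mathfrak{w}_r(\sigma))\equiv \gamma_0^{2/(k+2)}\nu^{[k]}(\xi_0^{[k]})=:\delta_{n,0}^{[k]}$; hence $a_{n,0}^{[k]}(\sigma,\tau)=f_{n,0}(\sigma)\,u_{\sigma,\mathfrak{w}_r(\sigma)}^{[k]}(\tau)$ for a scalar profile $f_{n,0}$ yet to be fixed.

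Next, the $h^1$ equation $(\mathcal{P}_0-\delta_{n,0}^{[k]})a_{n,1}^{[k]}=-(\mathcal{P}_1-\delta_{n,1}^{[k]})a_{n,0}^{[k]}$ is treated by the Fredholm alternative: projecting the right-hand side onto $u_{\sigma,\mathfrak{w}_r(\sigma)}^{[k]}$ in $L^2_\tau$ and using \eqref{FH2} to identify the advection term yields, after collecting all curvature-, subprincipal- and metric-type contributions into $\mathfrak{R}_r(\sigma)$, exactly the transport equation \eqref{equation transport}. Since $\partial_\xi\mu^{[k]}$ vanishes at $(s_r,\gamma_0^{1/(k+2)}\xi_0^{[k]})$ by \eqref{matrice hessienne}, this first-order ODE is of harmonic-oscillator type near $s_r$; requiring that $f_{n,0}$ be a Schwartz profile localized at $s_r$ quantizes $\delta_{n,1}^{[k]}=\tfrac{1}{2}(\nu^{[k]})''(\xi_0^{[k]})(2n-1)\zeta+\mathfrak{R}_r(s_r)$ and selects a Hermite-type $f_{n,0}$. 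The higher orders $h^j$ are then produced iteratively: at each step the Fredholm solvability condition fixes $\delta_{n,j}^{[k]}$, one inverts $\mathcal{P}_0-\delta_{n,0}^{[k]}$ on the orthogonal complement of $u_{\sigma,\mathfrak{w}_r(\sigma)}^{[k]}$ in $\tau$, and one integrates the resulting inhomogeneous transport equation in $\sigma$. A Borel summation then assembles $(a_{n,j}^{[k]})$ and $(\delta_{n,j}^{[k]})$ into $\Psi_{h,r}^{[k]}$ and $\delta_n^{[k]}(h)$ with the announced $\mathcal{O}(h^\infty)\mathrm{e}^{-\Phi_r/h}$ remainder.

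The main obstacle is the fact that $\mathfrak{w}_r(\sigma)$ is genuinely complex: one must extend the eigenpair $(\mu^{[k]},u^{[k]})$ of the Montgomery family holomorphically in $\xi$ over a $\mathbb{C}$-neighborhood of $\gamma_0^{1/(k+2)}\xi_0^{[k]}$ wide enough to contain the image of $\sigma\mapsto\mathfrak{w}_r(\sigma)$. This is exactly what the smallness hypothesis $\|1-\gamma_0/\gamma_r\|_\infty\ll 1$ buys via the Kato analytic perturbation theory recalled in Section \ref{GMO}, and it must be invoked at every order to ensure that the non-self-adjoint spectral projectors on $u_{\sigma,\mathfrak{w}_r(\sigma)}^{[k]}$ depend smoothly on $\sigma$. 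A secondary, more mechanical difficulty is the bookkeeping that produces the explicit $\mathfrak{R}_r$: one must expand $\mathfrak{a}_{h,r}$ and $\mathcal{A}_{h,r,0}^{[k]}$ to order $h$, track the truncation $c_\mu(\tau)$ in each term, and correctly symmetrize the first-order $\sigma$-derivative as it appears in \eqref{equation transport}.
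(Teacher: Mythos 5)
Your plan follows essentially the same route as the paper's proof: conjugate by $\mathrm{e}^{(\Phi_r-\mathrm{i}\mathfrak{g}_r)/h}$, expand in powers of $h$ so that the principal part is the Montgomery operator $\mathcal{M}^{[k]}_{\sigma,\mathfrak{w}_r(\sigma)}$ and the eikonal identity $\mu^{[k]}(\sigma,\mathfrak{w}_r(\sigma))=\gamma_0^{2/(k+2)}\nu^{[k]}(\xi_0^{[k]})$ fixes $\delta_{n,0}^{[k]}$, then impose the Fredholm solvability condition at order $h$ to get the transport equation, quantize $\delta_{n,1}^{[k]}$ through the linearized harmonic-oscillator model at $s_r$, and iterate, exactly as in the paper. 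The one point to state carefully is that, $\mathfrak{w}_r$ being complex, the solvability condition must be tested against the adjoint eigenfunction $u^{[k]}_{\sigma,\overline{\mathfrak{w}_r(\sigma)}}$ (as the paper does), not against $u^{[k]}_{\sigma,\mathfrak{w}_r(\sigma)}$ itself; with that precision your argument coincides with the paper's.
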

\begin{proof}
For some real function $\Phi_r=\Phi_r(\sigma)$ to be determined, we introduce the conjugate operator
$$\tilde{\mathcal{N}}_{h,r}^{[k]}=\mathrm{e}^{\frac{\Phi_r (\sigma)-\mathrm{i}\mathfrak{g}_r(\sigma)}{h}}\mathcal{N}_{h,r}^{[k]}\mathrm{e}^{-\frac{\Phi_r (\sigma)-\mathrm{i}\mathfrak{g}a_r(\sigma)}{h}},$$
and expand it formally as follows
$$ \tilde{\mathcal{N}}_{h,r}^{[k]}\sim \sum_{j\geq 0}\mathcal{N}_jh^j, $$
with
$$ \mathcal{N}_0=D_{\tau}^2+\left( \mathfrak{w}_r(\sigma)-\gamma_r(\sigma)\frac{\tau^{k+1}}{k+1}\right)^2,$$
$$ \mathcal{N}_1=D_{\sigma}\left( \mathfrak{w}_r(\sigma)-\gamma_r(\sigma)\frac{\tau ^{k+1}}{k+1}\right)+\left( \mathfrak{w}_r(\sigma)-\gamma_r(\sigma)\frac{\tau ^{k+1}}{k+1}\right)D_{\sigma}+\mathcal{R}_r(\sigma,\tau), $$
where
\begin{align*}
    \mathcal{R}_r(\sigma,\tau)=&2\tau c_{\mu} \kappa_r(\sigma) \Bigg(\mathfrak{w}_r(\sigma)-\gamma_r(\sigma)\frac{\tau^{k+1}}{k+1}\Bigg) ^2-2\tilde{\delta}_r(\sigma)c_{\mu}\frac{\tau^{k+2}}{k+2}\Bigg(\mathfrak{w}_r(\sigma)-\gamma_r(\sigma)\frac{\tau^{k+1}}{k+1}\Bigg)\\&+c_{\mu} \kappa_r(\sigma)\partial_{\tau}+c_{\mu} '\kappa_r(\sigma) \tau \partial_{\tau},
\end{align*}
$\mathfrak{w}_r(\sigma)=\mathrm{i}\Phi_r'(\sigma)+\mathfrak{g}_r'(\sigma)$, and the function $\mathfrak{g}_r$ is defined in \eqref{gsigma}.\\
Let $a^{[k]}(\sigma,\tau;h)=\sum_{j\geq 0}a_{n,j}^{[k]}(\sigma,\tau)h^j$ and let us formally solve equation
$$ \left( \tilde{\mathcal{N}}_{h,r}^{[k]} -\delta_n^{[k]}(h) \right)a^{[k]}(\sigma,\tau;h)=\mathcal{O}(h^{\infty}). $$
Identifying the coefficient of each $h^j$, $j\geq 0$, gives us first
$$ \left( \mathcal{N}_0-\delta_{n,0}^{[k]} \right)a_{n,0}^{[k]}=0, \,\,\,\,\,\,\,\,\,\,\,\,\,\,\,\,\,\,\,\,\,\,\,\,\,\,\,\,\,\,\,\,\,\,\,\,\,\,\,\,\,\,\,\,\,\,\,\text{(A)} $$
$$ \left( \mathcal{N}_0-\delta_{n,0}^{[k]} \right)a_{n,1}^{[k]}=\left( \delta_{n,1}^{[k]}- \mathcal{N}_1\right) a_{n,0}^{[k]}.\,\,\,\,\,\,\,\,\,\,(B) $$

Noticing that $\mathcal{N}_0=\mathcal{M}_{\sigma,\mathfrak{w}_r(\sigma)}^{[k]}$, we get that the first equation allows us to choose the function $\Phi_r$ such that 
\begin{equation}
    \delta_{n,0}^{[k]}=\mu_0^{[k]}=\mu^{[k]}(\sigma,\mathfrak{w}_r(\sigma)),
    \label{Eik2}
\end{equation}
and $a_{n,0}^{[k]}(\sigma,\tau)=f_{n,0}(\sigma)u_{\sigma,\mathfrak{w}_r(\sigma)}^{[k]}(\tau)$ where $f_{n,0}$ is to be determined at a later stage.\\
Indeed, using \eqref{VP}, and that $\mu_0^{[k]}=\gamma_0^{\frac{2}{k+2}}\nu^{[k]}\left( \xi_0^{[k]}\right)$, the eikonal equation \eqref{Eik2} is given by
\begin{equation}
    \gamma_r(\sigma)^{\frac{2}{k+2}}\nu^{[k]}\left( \xi_0^{[k]}+\mathrm{i}\left( \gamma_r(\sigma)^{-\frac{1}{k+2}}\Phi_r'(\sigma)+\mathrm{i}\Im \left(\varphi_r(\sigma)\right)\right)\right)=\gamma_0^{\frac{2}{k+2}}\nu^{[k]}\left(\xi_0^{[k]}\right),
    \label{Eikonale equation}
\end{equation}
which is equivalent to
$$ \nu^{[k]}\left( \xi_0^{[k]}+\mathrm{i}\left( \gamma_r(\sigma)^{-\frac{1}{k+2}}\Phi_r'(\sigma)+\mathrm{i}\Im \left(\varphi_r(\sigma)\right)\right)\right)-\nu^{[k]}\left(\xi_0^{[k]}\right)=\nu^{[k]}\left(\xi_0^{[k]}\right)\left(\left(\frac{\gamma_0}{\gamma_r}\right)^{\frac{2}{k+2}}-1\right).  $$
Therefore, using Lemma \ref{Lemme}, we choose the function $\Phi_r$ such that
$$ \gamma_r(\sigma)^{-\frac{1}{k+2}}\Phi_r'(\sigma)+\mathrm{i}\Im \left(\varphi_r(\sigma)\right)=\varphi_r(\sigma), $$
which is equivalent to
$$  \gamma_r(\sigma)^{-\frac{1}{k+2}}\Phi_r'(\sigma)=\Re \left(\varphi_r(\sigma)\right). $$
Then we get 
$$ \Phi_r(\sigma)=\int_{s_r}^{\sigma}\gamma_r(\tilde{\sigma})^{1/(k+2)}\Re\left( \varphi_r(\tilde{\sigma}) \right) d\tilde{\sigma}. $$
This function $\Phi_r$ verifies that
$$ \Phi_r(s_r)=\Phi_r'(s_r)=0\,\,\,\,and\,\,\,\,\,\Phi_r''(s_r)=\gamma_0^{1/(k+2)}\varphi_r'(s_r)=\gamma_0^{\frac{1}{k+2}}\sqrt{\frac{2}{k+2}\frac{\gamma''(s_r)\nu^{[k]}\left(\xi_0^{[k]}\right)}{\gamma_0\left(\nu^{[k]}\right)''\left(\xi_0^{[k]}\right)}}>0. $$

The second equation (B) can be solved if the following Fredholm condition holds
$$\left( \delta_{n,1}^{[k]}- \mathcal{N}_1\right) a_{n,0}^{[k]}\in \left(  \text{Ker} \left( \mathcal{N}_0-\delta_{n,0}^{[k]} \right)^{\ast} \right)^{\perp}=\text{span} \left( u_{\sigma,\overline{\mathfrak{w}_r(\sigma)}}^{[k]} \right)^{\perp}. $$
Taking the inner product with $u_{\sigma,\overline{\mathfrak{w}_r(\sigma)}}^{[k]}$ in $L^2\left( \mathbb{R}\right)$, the Fredholm condition will be given by
$$ \langle \mathcal{N}_1a_{n,0}^{[k]},u_{\sigma,\overline{\mathfrak{w}_r(\sigma)}}^{[k]} \rangle _{L^2\left( \mathbb{R},d\tau \right)}=\delta_{n,1}^{[k]}f_{n,0}(\sigma). $$
Noticing that $\left(\partial_{\xi}\mathcal{M}_{x,\xi}^{[k]}\right)_{\sigma,\mathfrak{w}_r(\sigma)}=2\left( \mathfrak{w}_r(\sigma)-\gamma_r(\sigma)\frac{\tau^{k+1}}{k+1} \right)$, $\mathcal{N}_1$ can be written as
$$ \mathcal{N}_1=\frac{1}{2}\left(D_{\sigma}\left(\partial_{\xi}\mathcal{M}_{x,\xi}^{[k]}\right)_{\sigma,\mathfrak{w}_r(\sigma)}+\left(\partial_{\xi}\mathcal{M}_{x,\xi}^{[k]}\right)_{\sigma,\mathfrak{w}_r(\sigma)}D_{\sigma} \right)+\mathcal{R}_r(\sigma,\tau). $$
Using \eqref{FH2} with $x=\sigma$ and $\xi=\mathfrak{w}_r(\sigma)$, we have
\begin{equation}
    \left( \partial_{\xi}\mu^{[k]}(x,\xi)\right)_{\sigma,\mathfrak{w}_r(\sigma)}=\int_{\mathbb{R}} \left( \left( \partial_{\xi}\mathcal{M}_{x,\xi}^{[k]}\right)_{\sigma,\mathfrak{w}_r(\sigma)}u_{\sigma,\mathfrak{w}_r(\sigma)}^{[k]}(\tau)\right) u_{\sigma,\mathfrak{w}_r(\sigma)}^{[k]}(\tau)d\tau.
    \label{FH}
\end{equation}
Multiplying \eqref{FH} by $f_{n,0}(\sigma)$ and differentiating with respect to $\sigma$, we get
\begin{align}
     &D_{\sigma}\left( f_{n,0}(\sigma)\left( \partial_{\xi}\mu^{[k]}(x,\xi)\right)_{\sigma,\mathfrak{w}_r(\sigma)} \right)\\&=\langle \left( D_{\sigma}\left(\partial_{\xi}\mathcal{M}_{x,\xi}^{[k]}\right)_{\sigma,\mathfrak{w}_r(\sigma)}+\left(\partial_{\xi}\mathcal{M}_{x,\xi}^{[k]}\right)_{\sigma,\mathfrak{w}_r(\sigma)}D_{\sigma} \right)a_{n,0}^{[k]},u_{\sigma,\overline{\mathfrak{w}_r(\sigma)}}^{[k]} \rangle\\&-\left( \partial_{\xi}\mu^{[k]}(x,\xi)\right)_{\sigma,\mathfrak{w}_r(\sigma)}D_{\sigma}f_{n,0}(\sigma),
\end{align}
which implies
\begin{align}
    \langle \mathcal{N}_1a_{n,0}^{[k]},u_{\sigma,\overline{\mathfrak{w}_r(\sigma)}}^{[k]} \rangle _{L^2\left( \mathbb{R},d\tau \right)}&=\frac{1}{2}\Bigg( D_{\sigma}\left( \partial_{\xi}\mu^{[k]}(x,\xi)\right)_{\sigma,\mathfrak{w}_r(\sigma)} +\\& \left( \partial_{\xi}\mu^{[k]}(x,\xi)\right)_{\sigma,\mathfrak{w}_r(\sigma)} D_{\sigma} \Bigg)f_{n,0}+\mathfrak{R}_r(\sigma)f_{n,0},
\end{align}
with
$$ \mathfrak{R}_r(\sigma)=\langle\mathcal{R}_r(\sigma,\tau)u_{\sigma,\mathfrak{w}_r(\sigma)}^{[k]},u_{\sigma,\overline{\mathfrak{w}_r(\sigma)}}^{[k]}\rangle_{L^2\left( \mathbb{R},d\tau\right)}. $$
Therefore, $f_{n,0}$ verifies the transport equation
\begin{equation}
    \frac{1}{2}\left( D_{\sigma}\left( \partial_{\xi}\mu^{[k]}(x,\xi)\right)_{\sigma,\mathfrak{w}_r(\sigma)} +\left( \partial_{\xi}\mu^{[k]}(x,\xi)\right)_{\sigma,\mathfrak{w}_r(\sigma)} D_{\sigma} \right)f_{n,0}+\mathfrak{R}_r(\sigma)=\delta_{n,1}^{[k]}f_{n,0}.
    \label{Equation du transport}
\end{equation}
Considering the linearized equation near $\sigma=s_r$, we are led to choose $\delta_{n,1}^{[k]}$ in the set
$$ \text{sp}\left(  \frac{1}{2}\text{Hess} \mu^{[k]}(s_r,\xi_0^{[k]}\gamma_0^{\frac{1}{k+2}})(\sigma,D_{\sigma})+\mathfrak{R}_r(s_r) \right).$$
Using \eqref{matrice hessienne}, the Hessian matrix of $\mu^{[k]}$ at $(s_r,\xi_0^{[k]}\gamma_0^{\frac{1}{k+2}})$ is given by
$$ \text{Hess} \, \mu^{[k]}(s_r,\gamma_0^{\frac{1}{k+2}}\xi_0^{[k]})=\left(\begin{array}{cc}
\frac{2}{k+2}\gamma''(s_r)\gamma_0^{-\frac{k}{k+2}}\nu^{[k]}\left(\xi_0^{[k]}\right) & 0 \\ 0 & \left(\nu^{[k]}\right)''\left( \xi_0^{[k]} \right) \end{array} \right), $$
which gives us
\begin{align*}
    &\frac{1}{2}\text{Hess} \mu^{[k]}(s_r,\xi_0^{[k]}\gamma_0^{\frac{1}{k+2}})(\sigma,D_{\sigma})\\&=\frac{1}{2}\left(\begin{array}{cc}
\frac{2}{k+2}\gamma''(s_r)\gamma_0^{-\frac{k}{k+2}}\nu^{[k]}\left(\xi_0^{[k]}\right) & 0 \\ 0 & \left(\nu^{[k]}\right)''\left( \xi_0^{[k]} \right) \end{array} \right)\left(\begin{array}{c} \sigma \\ D_{\sigma}
\end{array} \right).\left(\begin{array}{c} \sigma \\ D_{\sigma}
\end{array} \right)\\&=\frac{1}{2}\left(\nu^{[k]}\right)''\left( \xi_0^{[k]} \right)\left(D_{\sigma}^2+\left( \zeta \sigma\right)^2\right),
\end{align*}
with $\zeta$ is given by
\begin{equation}
    \zeta= \sqrt{\frac{2}{k+2}\frac{\gamma''(s_r)\nu^{[k]}\left(\xi_0^{[k]}\right)}{\gamma_0^{\frac{k}{k+2}}\left(\nu^{[k]} \right)''\left(\xi_0^{[k]}\right)}}.
    \label{zeta}
\end{equation}
Recalling that the spectrum of the harmonic oscillator $D_{\sigma}^2+\left( \zeta \sigma\right)^2$ is given by 
$$ \left\{ (2n-1)\zeta,\,\,n\in\mathbb{N}^{\ast} \right\}, $$
we get 
$$ \delta_{n,1}^{[k]}=\left(n-\frac{1}{2} \right)\left(\nu^{[k]}\right)''\left( \xi_0^{[k]} \right) \sqrt{\frac{2}{k+2}\frac{\gamma''(s_r)\nu^{[k]}\left(\xi_0^{[k]}\right)}{\gamma_0^{\frac{k}{k+2}}\left(\nu^{[k]} \right)''\left(\xi_0^{[k]}\right)}}+\mathfrak{R}_r(s_r). $$

Let us come back to
$$  \left( \mathcal{N}_0-\delta_{n,0}^{[k]} \right)a_{n,1}^{[k]}=\left( \delta_{n,1}^{[k]}- \mathcal{N}_1\right) a_{n,0}^{[k]}, $$
where $a_{n,0}^{[k]}(\sigma,\tau)=f_{n,0}(\sigma)u_{\sigma,\mathfrak{w}_r(\sigma)}^{[k]}(\tau)$. Then we take $a_{n,1}^{[k]}$ as 
$$ a_{n,1}^{[k]}(\sigma,\tau)=f_{n,1}(\sigma)u_{\sigma,\mathfrak{w}_r(\sigma)}^{[k]}(\tau)+\tilde{a}_{n,1}^{[k]}(\sigma,\tau), $$
where
$$ \tilde{a}_{n,1}^{[k]}\in \left( \text{Ker} \left( \mathcal{N}_0-\mu_0^{[k]} \right) \right)^{\perp}. $$
The procedure can be continued by induction.
\end{proof}

\begin{rmk}
By \eqref{Eikonale equation} and using the fact that $\xi_0^{[k]}-\Im\left(\varphi(\sigma) \right)$ is bounded below and that $\gamma(\sigma)^{-\frac{1}{k+2}}\Phi_r'(\sigma)=\Re \varphi_r(\sigma)$ is sufficiently small, we can apply \cite[Theorem 1.2]{BHR-holomorphic} to the function $\nu^{[k]}$ and we obtain that the exact solution of the eikonal equation verifies that
\begin{equation}
    \Phi_r'(\sigma)\geq \nu^{[k]}\left( \xi_0^{[k]} \right)\left( \gamma(\sigma)^{\frac{2}{k+2}}-\gamma_0^{\frac{2}{k+2}} \right).
    \label{rmk00}
\end{equation}

\end{rmk}

\begin{rmk}\textbf{(Solving \eqref{Equation du transport} and normalization of $\Psi_{h,r}^{[k]}$)}
In the expression of the tunneling effect that we will write at the end, we need to find the explicit form (a priori in terms of $\varphi_r$) of solution $f_{1,0}$ of the transport equation \eqref{Equation du transport}. This equation can be written as follows
\begin{equation}
    \partial_{\sigma}f_{1,0}+\frac{\mathfrak{V}_r'(\sigma)+ 2\mathfrak{R}_r(\sigma)-2\delta_{n,1}^{[k]}}{2\mathfrak{V}_r(\sigma)}f_{1,0}=0.
    \label{transpor}
\end{equation}
where
\begin{equation}
    \mathfrak{V}_r(\sigma):=-\mathrm{i} \partial_{\xi}\mu^{[k]}(\sigma,\mathfrak{w}_r(\sigma)).
\end{equation}
We may write $f_{1,0}$ in the form $f_{1,0}(\sigma)=\mathrm{e}^{\mathrm{i}\alpha_{1,0}(\sigma)}\tilde{f}_{1,0}(\sigma)$ with $\tilde{f}_{1,0}$ and $\alpha_{1,0}$ are real-valued functions such that $\tilde{f}_{1,0}(0)>0$. From \eqref{transpor}, $\tilde{f}_{n,0}$ solves the real classical transport equation
\begin{equation*}
    \partial_{\sigma}\tilde{f}_{1,0}+\Re \left( \frac{\mathfrak{V}_r'(\sigma)+ 2\mathfrak{R}_r(\sigma)-2\delta_{n,1}^{[k]}}{2\mathfrak{V}_r(\sigma)}\right)\tilde{f}_{1,0}=0.
    \label{transport3}
\end{equation*}
Then, we get
\begin{equation*}
    \tilde{f}_{1,0}(\sigma)=K_0\operatorname{exp}\left( -\int_{s_r}^{\sigma}\Re \left( \frac{\mathfrak{V}_r'(s)+ 2\mathfrak{R}_r(s)-2\delta_{n,1}^{[k]}}{2\mathfrak{V}_r(s)}\right)ds \right),
\end{equation*}
and the constant $K_0$ is chosen so that the WKB solution $\Psi_{h,r}^{[k]}$ in Theorem \ref{BKW} is almost normalized. Following e.g.\cite[Lemma 2.1]{BHR-circle}, we choose $K_0$ so that $1=K_0^2\sqrt{\frac{\pi}{\Phi''(s_r)}}$, which allows us to choose $K_0$ as
$$ K_0=\left( \frac{\Phi_r''(s_r)}{\pi} \right)^{1/4}=\left( \frac{\zeta}{\pi} \right)^{1/4}, $$
with $\zeta$ is defined in \eqref{zeta}. Therefore,
\begin{equation}
    \tilde{f}_{1,0}^2(0)=\sqrt{\frac{\zeta}{\pi}}\mathrm{A}_u \,\,\,\,\text{and}\,\,\,\,\mathrm{A}_u:=\operatorname{exp}\left( -\int_{s_r}^{0}\Re\left(\frac{\mathfrak{V}_r'(s)+2\mathfrak{R}_r(s)-2\delta_{n,1}^{[k]}}{2\mathfrak{V}_r(s)}\right)ds \right).
\end{equation}

From \eqref{transpor}, the phase shifts $\alpha_{1,0}$ are chosen so that
$$ \alpha_{1,0}'(s)=-\Im \left( \frac{\mathfrak{V}_r'(\sigma)+ 2\mathfrak{R}_r(\sigma)-2\delta_{n,1}^{[k]}}{2\mathfrak{V}_r(\sigma)}\right). $$
Noticing that $\mathfrak{V}_r'(s_r)+ 2\mathfrak{R}_r(s_r)-2\delta_{n,1}^{[k]}=0$ and $\mathfrak{V}_r$ vanishes linearly at $s_r$, the function $\alpha_{1,0}'(s)$ can be considered as a smooth function at $s_r$. This shows that we have determined the phase shift $\alpha_{1,0}$ up to an additive constant. Then, we define
\begin{equation}
    \alpha_0:=\frac{\alpha_{1,0}(0)-\alpha_{1,0}(-L)}{L}.
    \label{alpha0}
\end{equation}
\label{rmk}
\end{rmk}
\begin{rmk}
By the symmetry defined in \eqref{sym}, we define the functions attached to the left well by
$$ \mathfrak{w}_l(\sigma):=\overline{\mathfrak{w}_r(-\sigma)}\,\,\,\,\text{and}\,\,\,\,\,\mathfrak{R}_l(\sigma):=\overline{\mathfrak{R}_r(-\sigma)}, $$
and the function $\mathfrak{V}_l$ by
$$ \mathfrak{V}_l(\sigma)=-\mathrm{i}\partial_{\xi}\mu^{[k]}\left( \sigma,\mathfrak{w}_l(\sigma) \right). $$

\label{rmk2}
\end{rmk}

\section{A Grushin problem}
\label{section5}

In this section, we introduce pseudo-differential calculus with operator-valued symbols and perform a pseudo-differential dimensional reduction using Grushin's method. This method is already used in \cite[Chapter 3]{keraval}, \cite{BHR-purely}, and the importance of this method is that it gives optimal decay estimates consistent with the WKB expansions.

In this section, we consider again the right-well operator $\mathcal{N}_{h,r}^{[k]}$, introduced in \eqref{puit droite}. To simplify the notations, we will omit the reference to ``right well" in the notation and write $\mathcal{N}_h^{[k]},\gamma,\delta,\tilde{\delta},\kappa$ instead of $\mathcal{N}_{h,r}^{[k]},\gamma_r,\delta_r,\tilde{\delta}_r,\kappa_r$. We also denote $\varphi$ instead of $\varphi_r$, which has been defined in Lemma \ref{Lemme}.

\subsection{Sub-solution of the eikonal equation}
To obtain the optimal estimates for the ground states of $\mathcal{N}_h^{[k]}$, we will consider an exponential weight defined as a sub-solution of the eikonal equation \eqref{Eikonale equation}. For this we consider a non-negative Lipchitzian function, $\sigma\mapsto\Phi(\sigma)$, satisfying the following hypothesis:
\begin{hyp}
For all $M>0$ there exist $h_0,C,R>0$ such that, for all $h\in(0,h_0)$, the function $\Phi$ satisfies
\begin{enumerate}
    \item[(i)]For all $\sigma\in \mathbb{R}$, we have
    $$ \Re \left( \gamma(\sigma)^{\frac{2}{k+2}}\nu^{[k]}\left( \xi_0^{[k]}-\Im\left(\varphi(\sigma) \right)+\mathrm{i}\gamma(\sigma)^{-\frac{1}{k+2}}\Phi'(\sigma)\right)-\gamma_0^{\frac{2}{k+2}}\nu^{[k]}(\xi_0^{[k]})\right) \geq 0, $$
    \item[(ii)]For all $\sigma\in\mathbb{R}$ such that $|\sigma-s_r|\geq Rh^{1/2}$, we have
    $$ \Re \left( \gamma(\sigma)^{\frac{2}{k+2}}\nu^{[k]}\left( \xi_0^{[k]}-\Im\left(\varphi(\sigma) \right)+\mathrm{i}\gamma(\sigma)^{-\frac{1}{k+2}}\Phi'(\sigma)\right)-\gamma_0^{\frac{2}{k+2}}\nu^{[k]}(\xi_0^{[k]})\right)\geq Mh, $$
    \item[(iii)]For all $\sigma\in\mathbb{R}$ such that $|\sigma-s_r|\leq Rh^{1/2}$, we have
    $$ \left| \Phi(\sigma) \right|\leq Mh. $$
\end{enumerate}
\label{hyp2}
\end{hyp}
\begin{rmk}
The function 
    $$ \Phi(\sigma)=\sqrt{\frac{\nu^{[k]}(\xi_0^{[k]})}{2}}\displaystyle\int_{s_r}^{\sigma}\sqrt{\gamma_r(\tilde{\sigma})^{\frac{2}{k+2}}-\gamma_0^{\frac{2}{k+2}}}d\tilde{\sigma} $$
    verifies Hypothesis \ref{hyp2}. Indeed, using the fact that $\xi_0^{[k]}-\Im\left(\varphi(\sigma) \right)$ is bounded below and that $\gamma(\sigma)^{-\frac{1}{k+2}}\Phi'(\sigma)$ is sufficiently small, we can apply \cite[Theorem 1.2]{BHR-holomorphic} to the function $\nu^{[k]}$ and we obtain
    \begin{align*}
        &\Re \left( \gamma(\sigma)^{\frac{2}{k+2}}\nu^{[k]}\left( \xi_0^{[k]}-\Im\left(\varphi(\sigma) \right)+\mathrm{i}\gamma(\sigma)^{-\frac{1}{k+2}}\Phi'(\sigma)\right)-\gamma_0^{\frac{2}{k+2}}\nu^{[k]}(\xi_0^{[k]})\right)\\& \geq \frac{1}{2}\nu^{[k]}(\xi_0^{[k]})\left(\gamma_r(\sigma)^{\frac{2}{k+2}}-\gamma_0^{\frac{2}{k+2}}\right),
    \end{align*}
    and Hypothesis \ref{hyp2} is well verified using the fact that the function $\gamma_r$ has a unique non-degenerate minimum at $s_r$. But it should be noted that this does not give us the optimal Agmon estimates (and after the optimal approximations of the eigenfunctions), it is necessary to construct weight functions that are related to the exact solution of the eikonal equation. Much more useful solutions will be presented in the following proposition.
\end{rmk}
The following proposition shows the weight functions which satisfy Assumption \ref{hyp2}.
\begin{prop}
We consider the function $\mathfrak{v}_r$ defined on $\mathbb{R}$ by
$$ \mathfrak{v}_r(\sigma)=\frac{1}{2}\nu^{[k]}(\xi_0^{[k]})\left(\gamma_r(\sigma)^{\frac{2}{k+2}}-\gamma_0^{\frac{2}{k+2}}\right).  $$
By the hypothesis on $\gamma_r$, we can choose $c_0>0$ such that 
$$ \mathfrak{v}_r(\sigma)\geq c_0(\sigma-s_r)^2 \text{  and  }\Phi_r(\sigma)\geq c_0(\sigma-s_r)^2\text{  for all  }\sigma\in\operatorname{B}_l(L-\eta). $$
The following functions verify Assumption \ref{hyp2} :
\begin{enumerate}
\item[(a)]For $\epsilon\in (0,1)$, 
$$\Phi_{r,\epsilon}=\sqrt{1-\epsilon}\Phi_r \,\,\, \text{with} \,\,\,\,R>0\,\,\,\, \text{and} \,\,\,\,M=c_0\epsilon R^2.$$
\item[(b)]For $N\in \mathbb{N}^\ast$ and $h\in (0,1)$,
$$ \tilde{\Phi}_{r,N,h}=\Phi_{r,R}-Nh\ln\left( \max \left( \frac{\Phi_r}{h},N\right)\right)\,\,\,\text{with}\,\,\,\,R=\sqrt{\frac{N}{c_0}}\,\,\,\,\text{and}\,\,\,\,M=N\inf\frac{\mathfrak{v}_r}{\Phi_r}. $$
\item[(c)]For $\epsilon \in (0,1)$, $N\in \mathbb{N}$ and $h\in (0,1)$,
$$ \hat{\Phi}_{r,N,h}(s)=\min\left\{ \tilde{\Phi}_{r,N,h}(s), \sqrt{1-\epsilon} \underset{t\in \operatorname{supp} \chi_r '}{\inf}\left(\Phi_r(t)+\int_{[s_r,t]} \gamma(\tilde{\sigma})^{\frac{1}{k+2}}\Re \varphi_r(\tilde{\sigma}) d\tilde{\sigma}\right)\right\},  $$
with $R=\sqrt{\frac{N}{c_0}}$ and $M=N\min\left(\epsilon,\inf\frac{\mathfrak{v}_r}{\Phi_r}\right)$, where $\operatorname{supp}\chi_r ' \subset I_{\eta,r}\setminus I_{2\eta,r}$.
\end{enumerate}
\label{poidss}
\end{prop}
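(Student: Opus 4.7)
The plan is to exploit the fact that $\Phi_r$ is an exact solution of the eikonal equation \eqref{Eikonale equation} and that $\xi_0^{[k]}$ is a non-degenerate minimum of $\nu^{[k]}$, so that the quantity controlling condition (i) of Assumption \ref{hyp2} behaves quadratically in the perturbation of the derivative. Concretely, I would introduce
$$G(\sigma, p) := \Re\Bigl(\gamma(\sigma)^{2/(k+2)} \nu^{[k]}\bigl(\xi_0^{[k]} - \Im\varphi(\sigma) + \mathrm{i}\gamma(\sigma)^{-1/(k+2)} p\bigr)\Bigr) - \gamma_0^{2/(k+2)} \nu^{[k]}(\xi_0^{[k]}),$$
so that Assumption \ref{hyp2} amounts to $G(\sigma, \Phi'(\sigma)) \geq 0$, $\geq Mh$, and $|\Phi(\sigma)| \leq Mh$ in the three regions. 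The eikonal equation gives $G(\sigma, \Phi_r'(\sigma)) = 0$, and using the holomorphic extension of $\nu^{[k]}$ near $\xi_0^{[k]}$ together with the smallness of $\|1 - \gamma_0/\gamma\|_\infty$ (which keeps the argument in the disk of holomorphy), plus $(\nu^{[k]})'(\xi_0^{[k]}) = 0$ and $(\nu^{[k]})''(\xi_0^{[k]}) > 0$, a second-order Taylor expansion in $p$ around $\Phi_r'(\sigma)$ yields the fundamental identity
$$G(\sigma, p) = \tfrac{1}{2}(\nu^{[k]})''(\xi_0^{[k]})\bigl(\Phi_r'(\sigma)^2 - p^2\bigr) + \mathcal{O}\bigl(|p - \Phi_r'(\sigma)|(\Phi_r'(\sigma) + |p|) \cdot |\Im\varphi(\sigma)|\bigr).$$

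For (a), with $p = \sqrt{1-\epsilon}\,\Phi_r'(\sigma)$ the leading term is $\tfrac{\epsilon}{2}(\nu^{[k]})''(\xi_0^{[k]})\,\Phi_r'(\sigma)^2 \geq 0$, settling (i). Since $\Phi_r'$ vanishes linearly at $s_r$ with positive slope $\Phi_r''(s_r)>0$ (Theorem \ref{BKW}), one has $\Phi_r'(\sigma)^2 \geq c(\sigma - s_r)^2$ in a fixed neighborhood of $s_r$, so for $|\sigma - s_r| \geq Rh^{1/2}$ we obtain $G(\sigma, p) \geq c_0 \epsilon R^2 h$, proving (ii) with $M = c_0 \epsilon R^2$. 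The bound $\Phi_r(\sigma) \leq C(\sigma - s_r)^2$ combined with $\Phi_{r,\epsilon} \leq \Phi_r$ gives (iii).

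For (b), the key observation is that the logarithmic correction $-Nh\ln\max(\Phi_r/h, N)$ is constant (equal to $-Nh\ln N$) on the set $\{\Phi_r \leq Nh\}$, which by the quadratic lower bound $\Phi_r \geq c_0(\sigma - s_r)^2$ contains the ball $|\sigma - s_r| \leq \sqrt{N/c_0}\,h^{1/2} = Rh^{1/2}$. Hence inside this ball $\tilde{\Phi}_{r,N,h}' = \Phi_{r,R}'$, the analysis of (a) applies, and (iii) follows from $|\tilde{\Phi}_{r,N,h}| \leq |\Phi_{r,R}| + Nh\ln N \leq CNh$. Outside the ball the derivative picks up the additional term $-Nh\,\Phi_r'/\Phi_r$; plugging this into the Taylor identity for $G$ produces a correction $(\nu^{[k]})''(\xi_0^{[k]})\,Nh\,\Phi_r'^2/\Phi_r$ which, combined with the bound from (a), yields $G \geq Nh\,\mathfrak{v}_r/\Phi_r$ and so (ii) with $M = N\inf(\mathfrak{v}_r/\Phi_r)$. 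For (c), where $\tilde{\Phi}_{r,N,h}$ is the smaller of the two competitors in the definition of $\hat{\Phi}_{r,N,h}$ we are reduced to (b); where the truncation-based constant is smaller, $\hat{\Phi}_{r,N,h}$ is locally constant, so $\hat{\Phi}_{r,N,h}' = 0$ and $G(\sigma,0) \geq \tfrac{1}{2}(\nu^{[k]})''(\xi_0^{[k]})\Phi_r'(\sigma)^2 \geq 0$ by the eikonal identity. The pasting of the two regimes is the standard fact that the pointwise minimum of two sub-solutions of a Hamilton--Jacobi inequality is again a sub-solution.

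The main technical obstacle will be controlling the remainder in the Taylor expansion of $G(\sigma, \cdot)$ \emph{uniformly} in $\sigma \in \mathbb{R}$, not just near $s_r$: one has to ensure the complex argument of $\nu^{[k]}$ stays within its strip of holomorphy, which is precisely what Assumption \ref{Hyp0}(iii) guarantees. A secondary issue is verifying (i) globally for (b) and (c), rather than just in a neighborhood of $s_r$; this is handled by combining the quadratic gain $G(\sigma, \Phi_{r,R}') \geq c\epsilon\Phi_r'^2$ from (a) with a direct absorption argument that dominates the logarithmic and truncation corrections.
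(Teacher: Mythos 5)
Your proposal is correct and follows essentially the same route as the paper: both use the eikonal equation together with a Taylor expansion of $\nu^{[k]}$ at its non-degenerate minimum $\xi_0^{[k]}$ to turn the deficit $\Phi_r'(\sigma)^2-p^2$ into a gain comparable to $\epsilon\,\mathfrak{v}_r$ for (a) (resp. $\frac{Nh}{\Phi_r}\mathfrak{v}_r$ after splitting at $\Phi_r=Nh$ for (b)), then conclude (ii) via $\mathfrak{v}_r\geq c_0(\sigma-s_r)^2$, and handle (c) by noting that the derivative of the minimum coincides a.e.\ with that of one of the two admissible weights. Two cosmetic slips only: the remainder in your ``fundamental identity'' should also carry contributions of size $|p-\Phi_r'|\left(|p|+\Phi_r'\right)\mathcal{O}\left(|p|^2+\Phi_r'^2\right)$ from the quartic and higher Taylor terms (harmless under the smallness hypothesis, exactly as the paper's $\mathcal{O}(\mathfrak{f}_r^2)$ relative error), and in (b) the lower bound $\Phi_r\geq c_0(\sigma-s_r)^2$ yields the inclusion $\{|\sigma-s_r|\geq Rh^{1/2}\}\subset\{\Phi_r\geq Nh\}$ rather than the ball being contained in $\{\Phi_r\leq Nh\}$ --- which is in fact the inclusion your subsequent argument actually uses.
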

\begin{proof}
Since $\Phi_r$ verifies \eqref{rmk00} and the function $\gamma_r$ admits a unique non-degenerate minimum at $s_r$, the existence of $c_0>0$ is well guaranteed.

We recall that the function $\Phi_r$ verifies the eikonal equation \eqref{Eikonale equation}, and by Lemma \ref{Lemme}, the function $\Phi_r$ is defined by
\begin{equation}
    \Phi_r(\sigma)=\displaystyle \int_{s_r}^{\sigma}\gamma(\tilde{\sigma})^{\frac{1}{k+2}}\Re \varphi_r(\tilde{\sigma})d\tilde{\sigma},
    \label{vv}
\end{equation}
where $\varphi_r$ verify \eqref{R} and \eqref{I}, with
$$ \left|\mathfrak{f}_r(\sigma)\right|=\sqrt{\nu^{[k]}(\xi_0^{[k]})}\sqrt{1-\left( \frac{\gamma_0}{\gamma_r} \right)^{\frac{2}{k+2}}}. $$
According to the chosen hypothesis on $\gamma_r$, $\left|\mathfrak{f}_r(\sigma)\right|$ is small enough for all $\sigma\in\mathbb{R}$ and so, by \eqref{R} and \eqref{I}, $\left|\varphi_r(\sigma)\right|$ is small enough for all $\sigma\in\mathbb{R}$.
\begin{enumerate}
    \item [(a)]By \eqref{Eikonale equation} and \eqref{vv}, $\varphi_r$ verify that
    $$ \gamma_r(\sigma)^{\frac{2}{k+2}}\nu^{[k]}\left( \xi_0^{[k]}+\mathrm{i}\varphi_r(\sigma) \right)=\gamma_0^{\frac{2}{k+2}}\nu^{[k]}(\xi_0^{[k]}).  $$
Then, by the expression of $\Phi_{r,\epsilon}$, we get
    \begin{align*}
        &\Re\left( \gamma_r(\sigma)^{\frac{2}{k+2}}\nu^{[k]}\left( \xi_0^{[k]}-\Im\left( \varphi_r(\sigma) \right)+\mathrm{i}\gamma(\sigma)^{-\frac{1}{k+2}}\Phi_{r,\epsilon}'(\sigma)\right)-\gamma_0^{\frac{2}{k+2}}\nu^{[k]}(\xi_0^{[k]})\right)\\&=\gamma(\sigma)^{\frac{2}{k+2}}\Re\left\{ \nu^{[k]}\left( \xi_0^{[k]}-\Im\left( \varphi_r(\sigma)\right) +\mathrm{i}\sqrt{1-\epsilon}\Re \varphi_r(\sigma)\right)-\nu^{[k]}\left( \xi_0^{[k]}+\mathrm{i}\varphi_r(\sigma) \right) \right\}.
    \end{align*}
Using the Taylor expansion for the function $\nu^{[k]}$ in a neighborhood of $\xi_0^{[k]}$, we get
\begin{align*}
    &\Re \left( \nu^{[k]}\left( \xi_0^{[k]}-\Im\left( \varphi_r(\sigma)\right)+\mathrm{i}\sqrt{1-\epsilon}\Re \varphi_r(\sigma)\right)-\nu^{[k]}\left( \xi_0^{[k]}+\mathrm{i}\varphi_r(\sigma)\right)\right) \\&=\Re \left\{ \displaystyle\sum_{n\geq2}\frac{\left( \nu^{[k]} \right)^{(n)}\left( \xi_0^{[k]} \right)}{n!}\left( \left(-\Im\left( \varphi_r(\sigma)\right)+\mathrm{i}\sqrt{1-\epsilon}\Re \varphi_r(\sigma)  \right)^n- \left( \mathrm{i}\varphi_r(\sigma) \right)^n \right) \right\}\\&=\displaystyle\sum_{n\geq2}\frac{\left( \nu^{[k]} \right)^{(n)}\left( \xi_0^{[k]} \right)}{n!}\Re \left\{ \left(-\Im\left( \varphi_r(\sigma)\right)+\mathrm{i}\sqrt{1-\epsilon}\Re \varphi_r(\sigma)  \right)^n- \left(-\Im\left( \varphi_r(\sigma) \right)+ \mathrm{i}\Re\varphi_r(\sigma) \right)^n \right\}.
\end{align*}
Recall that, for all $a,b_1,b_2\in\mathbb{R}$ and for all $n\in\mathbb{N}\setminus \{0,1\}$, we have
\begin{equation}
\Re \left\{ (a+\mathrm{i}b_1)^n-(a+\mathrm{i}b_2)^n \right\}=(b_2^2-b_1^2)\displaystyle \sum_{j=1}^{\lfloor \frac{n}{2} \rfloor}(-1)^{j+1}\mathrm{C}_n^{2j}a^{n -2j}\left( \displaystyle\sum_{l=0}^{j-1}b_1^{2l}b_2^{2j-2l-2} \right).
    \label{ab1b2}
\end{equation}
Using \eqref{ab1b2}, \eqref{R}, \eqref{I} and the fact that $\left| \mathfrak{f}_r(\sigma) \right|$ is small enough, we get
 \begin{align*}
     &\Re \left\{ \left(-\Im\left( \varphi_r(\sigma)\right)+\mathrm{i}\sqrt{1-\epsilon}\Re \varphi_r(\sigma)  \right)^n- \left(-\Im\left( \varphi_r(\sigma) \right)+ \mathrm{i}\Re\varphi_r(\sigma) \right)^n \right\}\\&=\begin{cases}
     \epsilon\Re \varphi_r^2(\sigma) \,\,\,\,\,\,\,\,\,\,\,\,\,\,\,\,\,\,\,\,\,\,\,\,\,\,\,\,\,\,\,\,\,\text{if}\,\,\,\,\,n=2,\\
     \epsilon\Re \varphi_r^2(\sigma)\mathcal{O}\left(  \mathfrak{f}_r(\sigma) ^2 \right)\,\,\,\,\text{if}\,\,\,\,\,n\geq2,
     \end{cases}
 \end{align*}
which implies that
\begin{align*}
    &\Re \left( \nu^{[k]}\left( \xi_0^{[k]}-\Im\left( \varphi_r(\sigma)\right)+\mathrm{i}\sqrt{1-\epsilon}\Re \varphi_r(\sigma)\right)-\nu^{[k]}\left( \xi_0^{[k]}+\mathrm{i}\varphi_r(\sigma)\right)\right)\\&=\epsilon\mathfrak{f}_r(\sigma)^2+\epsilon\mathfrak{f}_r(\sigma)^2\mathcal{O}\left( \mathfrak{f}_r(\sigma) ^2 \right).
\end{align*}

    Therefore, 
    \begin{align*}
        \gamma(\sigma)^{\frac{2}{k+2}}&\Re\left\{ \nu^{[k]}\left( \xi_0^{[k]}-\Im\left( \varphi_r(\sigma)\right) +\mathrm{i}\sqrt{1-\epsilon}\Re \varphi_r(\sigma)\right)-\nu^{[k]}\left( \xi_0^{[k]}+\mathrm{i}\varphi_r(\sigma) \right) \right\}\\&=\epsilon \gamma(\sigma)^{\frac{2}{k+2}}\mathfrak{f}_r(\sigma)^2\left( 1+\mathcal{O}\left( \mathfrak{f}_r(\sigma)^2 \right)  \right) \\&\geq \epsilon \mathfrak{v}_r(\sigma),
    \end{align*}
    and, for all $\sigma\in\mathbb{R}$ such that $\left| \sigma-s_r \right|\geq Rh^{1/2}$, we have
    $$ \Re\left( \gamma(\sigma)^{\frac{2}{k+2}}\nu^{[k]}\left( \xi_0^{[k]}-\Im\left( \varphi_r(\sigma) \right)+\mathrm{i}\gamma(\sigma)^{-\frac{1}{k+2}}\Phi_{r,\epsilon}'(\sigma)\right)-\gamma_0^{\frac{2}{k+2}}\nu^{[k]}(\xi_0^{[k]})\right)\geq \epsilon c_0R^2h. $$
    \item[(b)]For all $N\in\mathbb{N}$ and $h\in(0,h_0)$, we have
    $$ \tilde{\Phi}_{r,N,h}' = \begin{cases}
\Phi_r '\left( 1-\frac{Nh}{\Phi_r} \right) & \,\,\,\text{if}\,\,\,\frac{\Phi_r}{h}\geq N \,,\\
\Phi_r ' & \,\,\,\text{if}\,\,\,\frac{\Phi_r}{h}<N\,. \end{cases} $$
Then, on $\left\{ \Phi_r\geq Nh \right\}$, we have
\begin{align*}
    &\Re\left( \gamma(\sigma)^{\frac{2}{k+2}}\nu^{[k]}\left( \xi_0^{[k]}-\Im\left( \varphi_r(\sigma) \right)+\mathrm{i}\gamma(\sigma)^{-\frac{1}{k+2}}\tilde{\Phi}_{r,N,h}'(\sigma)\right)-\gamma_0^{\frac{2}{k+2}}\nu^{[k]}(\xi_0^{[k]})\right)\\&=\gamma(\sigma)^{\frac{2}{k+2}}\Re\left\{ \nu^{[k]}\left( \xi_0^{[k]}-\Im\left( \varphi_r(\sigma)\right) +\mathrm{i}\left( 1-\frac{Nh}{\Phi_r} \right)\Re \varphi_r(\sigma)\right)-\nu^{[k]}\left( \xi_0^{[k]}+\mathrm{i}\varphi_r(\sigma) \right) \right\}.
\end{align*}
Similarly as part (a), on $\left\{ \Phi_r\geq Nh \right\}$, we get
\begin{align*}
    \Re&\left( \gamma(\sigma)^{\frac{2}{k+2}}\nu^{[k]}\left( \xi_0^{[k]}-\Im\left( \varphi_r(\sigma) \right)+\mathrm{i}\gamma(\sigma)^{-\frac{1}{k+2}}\tilde{\Phi}_{r,N,h}'(\sigma)\right)-\gamma_0^{\frac{2}{k+2}}\nu^{[k]}(\xi_0^{[k]})\right)\\&\geq \frac{Nh}{\Phi_r}\left( 2-\frac{Nh}{\Phi_r} \right)\mathfrak{v}_r(\sigma)\\&\geq  \frac{Nh}{\Phi_r}\mathfrak{v}_r(\sigma)\geq c_1Nh,
\end{align*}
with $c_1=\underset{\sigma\in \mathbb{R}}{\text{inf}}\frac{\mathfrak{v}_r}{\Phi_r}>0$.\\
Let $R\geq R_0=\sqrt{\frac{N}{c_0}} $. we have
$$ \left| \sigma-s_r\right| \geq Rh^{1/2} \Rightarrow \Phi_{r,R}\geq c_0 R^2h \geq Nh, $$
which implies that for all $\sigma\in \mathbb{R}$ such that $\left| \sigma-s_r\right| \geq Rh^{1/2}$, we have
$$ \Re\left( \gamma(\sigma)^{\frac{2}{k+2}}\nu^{[k]}\left( \xi_0^{[k]}-\Im\left( \varphi_r(\sigma) \right)+\mathrm{i}\gamma(\sigma)^{-\frac{1}{k+2}}\tilde{\Phi}_{r,N,h}'(\sigma)\right)-\gamma_0^{\frac{2}{k+2}}\nu^{[k]}(\xi_0^{[k]})\right)\geq Mh. $$
\item[(c)]It exists $t_0\in \text{supp}(\chi_r ')$ such that
$$ \underset{t\in \text{supp} \chi_r '}{\text{inf}}\left(\Phi_r(t)+\int_{s_r}^{t}\gamma(\tilde{\sigma})^{\frac{1}{k+2}}\Re\varphi_r(\tilde{\sigma}) d\tilde{\sigma}\right)=\Phi_r(t_0)+\int_{s_r}^{t_0}\gamma(\tilde{\sigma})^{\frac{1}{k+2}}\Re\varphi_r(\tilde{\sigma}) d\tilde{\sigma}. $$
Then,
$$ \left| \hat{\Phi}_{r,N,h} ' \right|=\left| \tilde{\Phi}_{r,N,h} ' \right| \,\,\,\text{or}\,\,\,\,\left| \hat{\Phi}_{r,N,h} ' \right|=\sqrt{1-\epsilon}\left| \Phi_r' \right|=\Phi_{r,\epsilon}'. $$
Therefore, $\hat{\Phi}_{r,N,h}$ verifies Assumption \ref{hyp2}.
\end{enumerate}
\end{proof}

\subsection{A pseudo-differential operator with operator-valued symbol.}
We consider the conjugate operator 
$$\mathcal{N}_h^{[k],\phi}=\mathrm{e}^{\frac{\Phi}{h}}\mathcal{N}_h^{[k]}\mathrm{e}^{-\frac{\phi}{h}},$$
with the same domaine as $\mathcal{N}_h^{[k]}$. It needs
$$ \mathcal{N}_h^{[k],\Phi}=\mathfrak{a}_h^{-1}D_{\tau}a_hD_{\tau}+\mathfrak{a}_h^{-1}\left( hD_{\sigma}-\mathcal{A}_h^{[k],\Phi}(\sigma,\tau)\right)\mathfrak{a}_h^{-1}\left( hD_{\sigma}-\mathcal{A}_h^{[k],\Phi}(\sigma,\tau)\right), $$
with
$$ \mathcal{A}_h^{[k],\Phi}(\sigma,\tau)=-\mathrm{i}\Phi'(\sigma)+\gamma(\sigma)\frac{\tau^{k+1}}{k+1}+h\tilde\delta(\sigma)\frac{\tau^{k+2}}{k+2}c_{\mu}+h^2 c_{\mu}\mathcal{O}(\tau^{k+3}). $$

We recall that for a symbol $a(\sigma,\xi)\in\mathcal{S}\left( \mathbb{R}^2\right)$, the Weyl quantization of $a$ is the operator $\text{Op}_h^\text{W}(a)$ defined, for all $u\in\mathcal{S}\left(\mathbb{R}_{\sigma};\mathcal{S}\left(\mathbb{R}_{\tau}\right)\right)$, by
$$ \text{Op}_h^\text{W}(a)u(\sigma):=\frac{1}{2\pi h}\displaystyle \int \displaystyle \int_{\mathbb{R}^2}\mathrm{e}^{\frac{\mathrm{i}}{h}(\sigma-\tilde{\sigma}).\xi}a(\frac{\sigma+\tilde{\sigma}}{2},\xi)u(\tilde{\sigma})d\tilde{\sigma}d\xi. $$
Classical results of pseudo-differential calculus, for symbols with operator values, are already detailed in \cite[Chapter 2]{keraval}. We consider the real valued function $\mathfrak{g}$ defined by
$$ \mathfrak{g}(\sigma)=\int_0^{\sigma}\gamma(\tilde{\sigma})^{\frac{1}{k+2}} \left( \left( 1-\left(\frac{\gamma_0}{\gamma}\right)^{\frac{1}{k+2}}\right)\xi_0^{[k]}-  \Im \left( \varphi(\tilde{\sigma}) \right)\right) d\tilde{\sigma}. $$

\begin{rmk}
Note that, only in this section, this is not the same function as the one in \eqref{gsigma}. There is an addition of the term $-\gamma_0^{\frac{1}{k+2}}\xi_0^{[k]}$. This new function is more convening for the computations than \eqref{gsigma}.
\end{rmk}

After the gauge transformation $\mathrm{e}^{-\mathrm{i}\frac{\mathfrak{g}(\sigma)}{h}}$, we are led to work with the conjugate operator 
\begin{equation}
    \tilde{\mathcal{N}}_h^{[k],\Phi}=\mathrm{e}^{-\mathrm{i}\frac{\mathfrak{g}(\sigma)}{h}}\mathcal{N}_h^{[k],\Phi}\mathrm{e}^{\mathrm{i}\frac{\mathfrak{g}(\sigma)}{h}}
    \label{N-tilde}
\end{equation}
instead of $\mathcal{N}_h^{[k],\Phi}$. We notice that $\tilde{\mathcal{N}}_h^{[k],\Phi}$ can be written as an $h$-pseudo-differential operator with an operator valued symbol $n_h^{[k]}(\sigma,\xi)$ having an expansion in powers of $h$ :
$$ n_h^{[k]}=n_0+hn_1+h^2n_2+..., $$
with
$$ n_0=D_{\tau}^2+\left(\xi+\mathfrak{w}(\sigma)-\gamma(\sigma)\frac{\tau^{k+1}}{k+1}\right)^2, $$
$$ n_1=-2\tilde{\delta}(\sigma)\frac{\tau^{k+2}}{k+2}c_{\mu}\left( \xi+\mathfrak{w}(\sigma)-\gamma(\sigma)\frac{\tau^{k+1}}{k+1}\right)+2\tau c_{\mu}k\left(\xi+\mathfrak{w}(\sigma)-\gamma(\sigma) \frac{\tau^{k+1}}{k+1}\right)^2, $$
where $\mathfrak{w}(\sigma)=\mathfrak{g}'(\sigma)+\mathrm{i}\Phi'(\sigma)$ and the notation $\mathcal{O}$ is defined in \cite[Notation 3.1]{BHR-purely}.\\

The frequency variable $\xi$ is a priori unbounded. Then, as in \cite{BHR-purely}, $n_h$ can be replaced by a bounded symbol as long as nothing is changed near the minimum. For this, we consider the function defined on $\mathbb{R}$ by
$$ \chi_1(\xi)=\gamma_0^{\frac{1}{k+2}}\xi_0^{[k]}+\chi(\xi-\gamma_0^{\frac{1}{k+2}}\xi_0^{[k]}), $$
where $\chi\in\mathcal{C}^{\infty}\left(\mathbb{R}\right)$ is a function that verifies the following assertions:
\begin{enumerate}
    \item[(i)]the function $\chi$ is a smooth, bounded, increasing and odd function on $\mathbb{R}$.
    \item[(ii)]$\chi(\xi)=\xi$ on $[-1,1]$ and $\underset{\xi \rightarrow +\infty}{\text{lim}}\chi(\xi)=2$.
\end{enumerate}

We will consider
$$ \text{Op}_h^\text{W}(p_h)\,\,\,\,,\,\,\,\,\,\text{where}\,\,\,\,\,p_h(\sigma,\xi)=n_h(\sigma,\chi_1(\xi)). $$
The symbol $p_h$ has the same expansion in powers of $h$, except $\xi$ to replace with the truncation function $\chi_1(\xi)$.
\subsection{Solving the Grushin problem.}
For $z\in\mathbb{C}$, we define
$$ \mathcal{P}_z(\sigma,\xi)= \begin{pmatrix}
p_h-z & .v_{\sigma,\xi} \\
\langle .,v_{\sigma,\xi}\rangle & 0 \\
\end{pmatrix}\in \mathcal{S}\left(\mathbb{R}_{\sigma,\xi}^2,\mathcal{L}\left( \text{Dom}(p_0)\times\mathbb{C},L^2\left(\mathbb{R}\right)\times\mathbb{C}\right)\right), $$
see \cite[Notation 3.2]{BHR-purely}, where
\begin{equation}
    p_0:=\mathcal{M}_{\sigma,\chi_1(\xi)+\mathfrak{w}(\sigma)}^{[k]}=D_{\tau}^2+\left(\chi_1(\xi)+\mathfrak{w}(\sigma)-\gamma(\sigma)\frac{\tau^{k+1}}{k+1}\right)^2,
    \label{po}
\end{equation}
is the principal symbol of $p_h$ and $v_{\sigma,\xi}:=u_{\sigma,\chi_1(\xi)+\mathfrak{w}(\sigma)}^{[k]}$ is the eigenfunction associated with the smallest eigenvalue $\mu^{[k]}(\sigma,\chi_1(\xi)+\mathfrak{w}(\sigma))$ of $p_0$.

$\mathcal{P}_z$ decomposes in the form:
$$ \mathcal{P}_z=\mathcal{P}_{0,z}+h\mathcal{P}_1+h^2\mathcal{P}_2+..., $$
with
$$ \mathcal{P}_{0,z}(\sigma,\xi)=
 \begin{pmatrix}
p_0-z & .v_{\sigma,\xi} \\
\langle.,v_{\sigma,\xi}\rangle & 0 \\
\end{pmatrix}\,\,\,\,,\,\,\,\,\,\mathcal{P}_1=\begin{pmatrix}
p_1 & 0 \\
0 & 0
\end{pmatrix} $$
where
\begin{equation}
    p_1=-2\tilde{\delta}(\sigma)\frac{\tau^{k+2}}{k+2}c_{\mu}\left( \chi_1(\xi)+\mathfrak{w}(\sigma)-\gamma(\sigma)\frac{\tau^{k+1}}{k+1}\right)+2\tau c_{\mu}k\left(\chi_1(\xi)+\mathfrak{w}(\sigma)-\gamma(\sigma) \frac{\tau^{k+1}}{k+1}\right)^2.
    \label{p1}
\end{equation}

Let $z\in\mathbb{C}$ such that $\Re (z)\in(\mu_0^{[k]}-\varepsilon,\mu_0^{[k]}+\varepsilon)$, with $\varepsilon>0$ such that 
\begin{equation}
    \varepsilon<\frac{1}{2}\left(\underset{\xi\in\mathbb{R}}{\text{inf}}\nu_2^{[k]}(\xi)-\nu^{[k]}(\xi_0^{[k]})\right),
    \label{L2}
\end{equation}
where $\nu_2^{[k]}(\xi)$ is the second eigenvalue of the Montgomery operator $\mathfrak{h}_{\xi}^{[k]}$ for $\xi\in\mathbb{R}$. 
\begin{lemma}
For all $(\sigma,\xi)\in\mathbb{R}^2$, $\mathcal{P}_{0,z}(\sigma,\xi)$ is bijective and
$$\mathcal{Q}_{0,z}(\sigma,\xi):=\mathcal{P}_{0,z}^{-1}(\sigma,\xi)=
\begin{pmatrix}
(p_0-z)^{-1}\Pi^{\bot} & .v_{\sigma,\xi} \\
\langle.,v_{\sigma,\xi}\rangle & z-\mu^{[k]}(\sigma,\chi_1(\xi)+\mathfrak{w}(\sigma)) \\
\end{pmatrix},
$$
and
$$\mathcal{Q}_{0,z}(\sigma,\xi)\in S\left( \mathbb{R}_{\sigma,\xi}^2;\mathcal{L}\left(\operatorname{Dom}(p_0)\times\mathbb{C},L^2(\mathbb{R})\times\mathbb{C}\right)\right).$$
Here $\Pi=\Pi_{\sigma,\xi}$ is the orthogonal projection on $v_{\sigma,\xi}$ and $\Pi^{\perp}=Id-\Pi$.
\label{Lem grushin}
\end{lemma}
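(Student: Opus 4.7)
The plan is to treat this as a standard Grushin/Schur complement computation adapted to the non-self-adjoint setting induced by the complex weight $\mathfrak{w}(\sigma)=\mathfrak{g}'(\sigma)+\mathrm{i}\Phi'(\sigma)$, and then upgrade the pointwise statement into a symbol-class statement by invoking Kato's analytic perturbation theory together with the spectral gap \eqref{L2}. First I would establish invertibility of $p_0(\sigma,\xi)-z$ on the ``transverse'' subspace $(\mathrm{Id}-\Pi_{\sigma,\xi})L^2(\mathbb{R})$. By \eqref{VP} applied with a complex spectral parameter, the spectrum of $p_0$ consists of $\mu^{[k]}(\sigma,\chi_1(\xi)+\mathfrak{w}(\sigma))$ (a simple eigenvalue produced by analytic extension since $\Phi'$ and thus $\mathfrak{w}$ are small) together with higher eigenvalues that, uniformly in $(\sigma,\xi)$, lie in the half-plane $\{\Re w \geq \gamma_0^{2/(k+2)}\inf_\xi \nu_2^{[k]}(\xi) - o(1)\}$. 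The gap condition \eqref{L2} on $\Re z$ then places $z$ at distance $\geq\varepsilon/2$ from that part of the spectrum, so $(p_0-z)\restriction_{\Pi^\perp L^2}$ is boundedly invertible uniformly in $(\sigma,\xi)$.

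Next I would verify the inverse formula by direct block computation, using only $p_0 v_{\sigma,\xi}=\mu^{[k]}(\sigma,\chi_1(\xi)+\mathfrak{w}(\sigma))\,v_{\sigma,\xi}$ and the biorthogonal normalization $\langle v_{\sigma,\xi},v_{\sigma,\xi}\rangle=1$ (in the sense of \cite[Notation 3.2]{BHR-purely}, consistent with the Fredholm pairing already used in the proof of Theorem \ref{BKW}). Writing $\mu=\mu^{[k]}(\sigma,\chi_1(\xi)+\mathfrak{w}(\sigma))$ for short, one checks
\begin{equation*}
\mathcal{P}_{0,z}\mathcal{Q}_{0,z}
=\begin{pmatrix} (p_0-z)(p_0-z)^{-1}\Pi^{\perp}+\Pi & (p_0-z)v_{\sigma,\xi}+(z-\mu)v_{\sigma,\xi} \\ \langle (p_0-z)^{-1}\Pi^{\perp}\cdot,v_{\sigma,\xi}\rangle & \langle v_{\sigma,\xi},v_{\sigma,\xi}\rangle\end{pmatrix}
=\begin{pmatrix} \mathrm{Id} & 0 \\ 0 & 1\end{pmatrix},
\end{equation*}
where the off-diagonal cancellations use $(p_0-z)v_{\sigma,\xi}=(\mu-z)v_{\sigma,\xi}$ and the fact that $(p_0-z)^{-1}\Pi^{\perp}$ preserves $\Pi^{\perp}L^2$ and that $v_{\sigma,\xi}$ is orthogonal (in the biorthogonal sense) to $\Pi^{\perp}L^2$. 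The computation of $\mathcal{Q}_{0,z}\mathcal{P}_{0,z}=\mathrm{Id}$ is symmetric.

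Finally I would prove the symbol estimates. Smooth dependence of $v_{\sigma,\xi}$ and $\mu^{[k]}(\sigma,\chi_1(\xi)+\mathfrak{w}(\sigma))$ on $(\sigma,\xi)$ follows from Kato analytic perturbation theory once we observe that the map $(\sigma,\xi)\mapsto(\sigma,\chi_1(\xi)+\mathfrak{w}(\sigma))$ is smooth and \emph{bounded} thanks to the bounded truncation $\chi_1$, the boundedness of $\mathfrak{g}'$ and $\mathrm{i}\Phi'$ coming from Proposition \ref{poidss}, and the fact that $\gamma$ is constant equal to $\gamma_\infty$ outside a compact set. All derivatives can then be expressed via the projector formula
\begin{equation*}
\Pi_{\sigma,\xi}=-\frac{1}{2\pi\mathrm{i}}\oint_{\mathcal{C}}(p_0(\sigma,\xi)-\zeta)^{-1}d\zeta,
\end{equation*}
along a small contour $\mathcal{C}$ around $\mu^{[k]}$ whose radius is uniform in $(\sigma,\xi)$ by the gap argument above; differentiating under the integral and using the resolvent identity gives uniform bounds of each $\partial_\sigma^\alpha\partial_\xi^\beta$ in $\mathcal{L}(L^2,\operatorname{Dom}(p_0))$. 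The same resolvent identity applied to $(p_0-z)^{-1}\Pi^{\perp}$ propagates those bounds to all blocks of $\mathcal{Q}_{0,z}$.

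\textbf{The main obstacle} I expect is not the algebraic verification (routine) but the uniformity in $(\sigma,\xi)$ of the spectral gap when $\sigma$ is large, because there $\gamma_r\equiv\gamma_\infty$ and $\mathfrak{w}(\sigma)$ is not close to $\gamma_0^{1/(k+2)}\xi_0^{[k]}$, so one must argue that $\mu^{[k]}(\sigma,\chi_1(\xi)+\mathfrak{w}(\sigma))$ cannot accidentally become comparable to the next eigenvalue of $p_0$. This is handled by combining the choice of $\varepsilon$ in \eqref{L2} with the lower bound $\Re\mu^{[k]}(\sigma,\chi_1(\xi)+\mathfrak{w}(\sigma))\geq\gamma_0^{2/(k+2)}\nu^{[k]}(\xi_0^{[k]})$ that follows from Hypothesis \ref{hyp2}(i), ensuring that $z$ remains strictly separated from the rest of the spectrum uniformly in $(\sigma,\xi)$.
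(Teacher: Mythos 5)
Your algebraic skeleton is the same as the paper's: everything reduces to inverting $p_0-z$ on the complement of $\mathbb{C}v_{\sigma,\xi}$ (which is invariant because $p_0$ is complex-symmetric, so the pairing $\langle\cdot,v_{\sigma,\xi}\rangle$ is the bilinear/Fredholm one and $\Pi$ is the Riesz projection — you handle this normalization correctly), and then the stated formula for $\mathcal{Q}_{0,z}$ follows; the paper solves the $2\times2$ system constructively for $(u,\alpha)$ while you verify the inverse by block multiplication, which is equivalent. Your last step (Kato analyticity plus the contour-integral formula for $\Pi_{\sigma,\xi}$ to get the symbol estimates) is in fact more detailed than what the paper writes, which leaves the symbol-class property implicit.

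The one genuine weak point is the sentence deducing that $(p_0-z)$ restricted to $\Pi^{\perp}L^2$ is \emph{uniformly} boundedly invertible from the fact that $z$ stays at distance $\geq\varepsilon/2$ from the rest of the spectrum. Since $p_0$ is not normal (the weight makes $\mathfrak{w}=\mathfrak{g}'+\mathrm{i}\Phi'$ complex, and the imaginary part $2\Phi'(\sigma)\bigl(\chi_1(\xi)+\mathfrak{g}'(\sigma)-\gamma(\sigma)\tfrac{\tau^{k+1}}{k+1}\bigr)$ is unbounded in $\tau$), distance to the spectrum does not control the resolvent norm, and uniformity over the non-compact set $(\sigma,\xi)\in\mathbb{R}^2$ is exactly what both the bijectivity-with-bounds and the symbol-class claim require; your contour argument also presupposes such uniform resolvent bounds on $\mathcal{C}$. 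The paper closes this by a quantitative numerical-range (coercivity) estimate: on $\bigl(\mathbb{C}v_{\sigma,\xi}\bigr)^{\perp}$ it bounds
\begin{equation*}
\Re\bigl(\mu_2^{[k]}(\sigma,\chi_1(\xi)+\mathfrak{w}(\sigma))\bigr)-\Re(z)\;\geq\; \gamma_0^{\frac{2}{k+2}}\Bigl(\inf_{\xi}\nu_2^{[k]}-\inf_{\xi}\nu_1^{[k]}\Bigr)-\Phi'(\sigma)^2-\varepsilon\;\geq\;c_0>0,
\end{equation*}
using \cite[Theorem 1.2]{BHR-holomorphic} applied to $\nu_2^{[k]}$, the smallness of $\Phi'$ guaranteed by Assumption \ref{Hyp0}(iii) and the choice of weights in Proposition \ref{poidss}, and the choice of $\varepsilon$ in \eqref{L2}; injectivity with closed range plus the adjoint then give bijectivity with a bound independent of $(\sigma,\xi)$. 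Your argument becomes complete if you either insert this coercivity estimate, or replace the "distance to spectrum" step by a quantitative perturbation argument showing the non-self-adjoint part is relatively bounded with uniformly small relative bound (using $\|\Phi'\|_\infty$ small), so that resolvent bounds for the self-adjoint comparison operator transfer.
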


\begin{proof}
Let $(v,\beta)\in L^2(\mathbb{R}\times\mathbb{C})$ and find $(u,\alpha)\in \text{Dom}(p_0)\times \mathbb{C}$ such that 
\begin{equation}
    \mathcal{P}_{0,z}(\sigma,\xi)\begin{pmatrix} u \\ \alpha \end{pmatrix}=\begin{pmatrix} v \\ \beta \end{pmatrix}.
    \label{L1}
\end{equation}
This equation is equivalent to
$$  (p_0-z)u=v-\alpha v_{\sigma,\xi}\,\,\,\,\text{and}\,\,\,\,\,\langle u,v_{\sigma,\xi}\rangle =\beta. $$

We have 
\begin{align*}
(p_0-z)u^{\perp}&=(p_0-z)(u-\langle u,v_{\sigma,\xi}\rangle v_{\sigma,\xi})\\&=v-\alpha v_{\sigma,\xi}-\beta\left( \mu^{[k]}(\sigma,\chi_1(\xi)+\mathfrak{w}(\sigma))-z\right)v_{\sigma,\xi}.
\end{align*}
The space $\left( \mathbb{C}v_{\sigma,\xi} \right)^{\perp}$ is stable by $p_0-z$, then $p_0-z$ induces an operator 
$$p_0-z:\left( \mathbb{C}v_{\sigma,\xi} \right)^{\perp}\longrightarrow\left( \mathbb{C}v_{\sigma,\xi} \right)^{\perp}.$$
On this space, 
$$\langle (p_0-\Re(z))u,u \rangle \geq \left(\Re\left( \mu_2^{[k]}(\sigma,\chi_1(\xi)+\mathfrak{w}(\sigma))\right)-\Re(z)\right)\left\|u\right\|^2\geq c_0 \left\|u\right\|^2,$$
by the choice of $z$. Indeed, applying \cite[Theorem 1.2]{BHR-holomorphic} to the function $\nu_2^{[k]}$ (see also Remark 1.3 and 1.4 in \cite{BHR-holomorphic}), using \eqref{L2} and the fact that $\left|\Phi'(\sigma)\right|$ is small enough for all $\sigma\in\mathbb{R}$ (according to Assumption \ref{Hyp0} and the choice of $\Phi$ in Proposition \ref{poidss}), we get
\begin{align*}
&\Re\left( \mu_2^{[k]}(\sigma,\chi_1(\xi)+\mathfrak{w}(\sigma))\right)-\Re(z)\\&=\Re \Bigg( \gamma(\sigma)^{\frac{2}{k+2}}\nu_2^{[k]} \bigg( \gamma(\sigma)^{-\frac{1}{k+2}}\chi_1(\xi)+\gamma(\sigma)^{-\frac{1}{k+2}} \mathfrak{g}'(\sigma)+\mathrm{i}\gamma(\sigma)^{-\frac{1}{k+2}}\Phi'(\sigma)  \bigg)-\mu_0^{[k]}-\varepsilon \Bigg)\\&\geq \Re \Bigg( \gamma(\sigma)^{\frac{2}{k+2}}\nu_2^{[k]} \bigg( \gamma(\sigma)^{-\frac{1}{k+2}}\chi_1(\xi)+\gamma(\sigma)^{-\frac{1}{k+2}}\mathfrak{g}'(\sigma) \bigg)-\Phi'(\sigma)^2-\mu_0^{[k]}-\varepsilon \Bigg)
\\&\geq  \gamma_0^{\frac{2}{k+2}}\left( \underset{\xi\in\mathbb{R}}{\text{inf}} \nu_2^{[k]}(\xi)-\underset{\xi\in\mathbb{R}}{\text{inf}} \nu_1^{[k]}(\xi)  \right)-\Phi'(\sigma)^2-\varepsilon
\\&\geq \frac{\gamma_0^{\frac{2}{k+2}}}{2}\left( \underset{\xi\in\mathbb{R}}{\text{inf}} \nu_2^{[k]}(\xi)-\underset{\xi\in\mathbb{R}}{\text{inf}} \nu_1^{[k]}(\xi)  \right)-\Phi'(\sigma)^2\geq c_0,
\end{align*}
where $c_0>0$. Thus, this operator is injective with closed range and, by considering the adjoint, it is bijective. We have:
\begin{align*}
&(p_0-z)u^{\perp}=v-\alpha v_{\sigma,\xi}-\beta\left( \mu^{[k]}(\sigma,\chi_1(\xi)+\mathfrak{w}(\sigma))-z\right)v_{\sigma,\xi}\in \left( \mathbb{C}v_{\sigma,\xi} \right)^{\bot}\\&\implies \langle v,v_{\sigma,\xi}\rangle -\alpha-\beta\left( \mu^{[k]}(\sigma,\chi_1(\xi)+\mathfrak{w}(\sigma))-z\right)=0\\&\implies \alpha=\langle v,v_{\sigma,\xi}\rangle -\beta\left( \mu^{[k]}(\sigma,\chi_1(\xi)+\mathfrak{w}(\sigma))-z\right).
\end{align*}
By the bijectivity of $p_0-z$ on $\left( \mathbb{C}v_{\sigma,\xi} \right)^{\bot}$, we take 
\begin{align*}
u^{\perp}&=(p_0-z)^{-1}\left( v-\alpha v_{\sigma,\xi}-\beta\left( \mu^{[k]}(\sigma,\chi_1(\xi)+\mathfrak{w}(\sigma))-z\right)v_{\sigma,\xi}  \right)\\&=(p_0-z)^{-1}\left( v-\langle v,v_{\sigma,\xi}\rangle v_{\sigma,\xi}\right)\\&=(p_0-z)^{-1}\Pi^{\perp}v.
\end{align*}
Therefore, $u=\langle u,v_{\sigma,\xi}\rangle v_{\sigma,\xi}+u^{\perp} = \beta v_{\sigma,\xi}+(p_0-z)^{-1}\Pi^{\perp}v$.
\end{proof}

The following proposition gives an expression of an approximative inverse of operator $\text{Op}_h^\text{W}\left(\mathcal{P}_z\right)$ with a remainder of order $h$.
\begin{prop}
We have 
\begin{equation}
     \operatorname{Op}_h^{\operatorname{W}}\left( \mathcal{Q}_{0,z}\right) \operatorname{Op}_h^{\operatorname{W}}\left(\mathcal{P}_z\right)=\operatorname{Id}+h\mathcal{O}(\langle \tau \rangle^{2k+3}).
    \label{inverse}
\end{equation}
Moreover, if we denote by
$$ \mathcal{Q}_{0,z}:=\begin{pmatrix}
q_{0,z} & q_{0,z}^+ \\
q_{0,z}^- & q_{0,z}^{\pm}
\end{pmatrix}, $$
then modulo some remainders of order h, we have
\begin{equation}
    \left( \operatorname{Op}_h^{\operatorname{W}}(p_h)-z\right)^{-1}= \operatorname{Op}_h^{\operatorname{W}}q_{0,z}- \operatorname{Op}_h^{\operatorname{W}}q_{0,z}^-\left( \operatorname{Op}_h^{\operatorname{W}}q_{0,z}^{\pm}\right)^{-1} \operatorname{Op}_h^{\operatorname{W}}q_{0,z}^+.
    \label{inverse1}
\end{equation}
\label{prop1}
\end{prop}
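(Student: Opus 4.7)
The plan is to establish \eqref{inverse} by a symbolic calculation based on the Moyal composition law for operator-valued Weyl symbols, and then to read off \eqref{inverse1} from \eqref{inverse} by applying the standard Schur complement identity for $2\times 2$ block systems. First I would expand the right factor as $\mathcal{P}_z = \mathcal{P}_{0,z} + h\mathcal{P}_1 + h^2\mathcal{P}_2 + \cdots$ and exploit Lemma \ref{Lem grushin}, which gives the pointwise identity $\mathcal{Q}_{0,z}\,\mathcal{P}_{0,z} = \operatorname{Id}$. At the symbolic level this yields
\begin{equation*}
\mathcal{Q}_{0,z} \cdot \mathcal{P}_z = \operatorname{Id} + h\,\mathcal{Q}_{0,z}\mathcal{P}_1 + \mathcal{O}(h^2).
\end{equation*}
Quantizing and applying the Moyal product formula (as developed in \cite[Chapter 2]{keraval} for symbols with values in unbounded operators) then produces one additional $\mathcal{O}(h)$ correction coming from $\tfrac{h}{2\mathrm{i}}\{\mathcal{Q}_{0,z},\mathcal{P}_{0,z}\}$ together with higher-order remainders.

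Next I would track the growth in the normal variable $\tau$. Reading off \eqref{p1}, the leading part of $\mathcal{P}_1$ is of order $\tau^{k+2}\cdot\tau^{k+1} = \tau^{2k+3}$, while the Moyal correction $\{\mathcal{Q}_{0,z},\mathcal{P}_{0,z}\}$ contributes a lower $\tau$-order $\mathcal{O}(\tau^{2k+2})$, because derivatives in $\sigma$ or $\xi$ do not raise the $\tau$-degree of $p_0$ and because $\mathcal{Q}_{0,z}$ carries a uniformly bounded factor through $(p_0-z)^{-1}\Pi^{\perp}$ together with the Schwartz-class projector onto $v_{\sigma,\xi}$. Collecting these contributions produces exactly the remainder class $h\,\mathcal{O}(\langle\tau\rangle^{2k+3})$ claimed in \eqref{inverse}. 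The $(\sigma,\xi)$-uniformity of these symbol estimates relies on the spectral-gap argument used in the proof of Lemma \ref{Lem grushin}, which in turn requires Assumption \ref{hyp2} so that $\Phi'$ stays small enough to keep $v_{\sigma,\xi}$ analytic in a neighborhood of $\xi_0^{[k]}$, combined with the confining behaviour of $\gamma$ away from $s_r$.

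To deduce \eqref{inverse1} I would then expand \eqref{inverse} blockwise: using the block form of $\mathcal{Q}_{0,z}$ displayed in the statement, the four entries of the matrix identity read, modulo $\mathcal{O}(h)$ remainders,
\begin{equation*}
\operatorname{Op}_h^{\operatorname{W}}(q_{0,z})\bigl(\operatorname{Op}_h^{\operatorname{W}}(p_h)-z\bigr) + \operatorname{Op}_h^{\operatorname{W}}(q_{0,z}^+)\,\operatorname{Op}_h^{\operatorname{W}}(q_{0,z}^-) \equiv \operatorname{Id},
\end{equation*}
together with the three companion identities involving $q_{0,z}^{\pm}$. The resolvent formula then follows by the standard Schur reduction: one solves the block system $\operatorname{Op}_h^{\operatorname{W}}(\mathcal{P}_z)(u,u_-)^{\top} = (v,0)^{\top}$ by eliminating the auxiliary variable $u_-$ through the effective Hamiltonian $\operatorname{Op}_h^{\operatorname{W}}(q_{0,z}^{\pm})$, and identifies $u = (\operatorname{Op}_h^{\operatorname{W}}(p_h)-z)^{-1}v$.

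The main obstacle I anticipate is the pseudo-differential bookkeeping needed to turn this formal Moyal expansion into a rigorous statement in the class $h\,\mathcal{O}(\langle\tau\rangle^{2k+3})$: since $\mathcal{Q}_{0,z}$ is operator-valued on the unbounded space $\operatorname{Dom}(p_0)$, one must verify that each term produced by the Moyal expansion and by the Schur elimination lies in the correct symbol class uniformly in $(\sigma,\xi)$, with derivatives of $v_{\sigma,\xi}$ and $\Pi^{\perp}_{\sigma,\xi}$ controlled via Kato's analytic perturbation theory. Once these estimates are in place, the invertibility (modulo $\mathcal{O}(h)$) of the effective Hamiltonian $\operatorname{Op}_h^{\operatorname{W}}(q_{0,z}^{\pm})$ near the bottom of the spectrum completes the argument.
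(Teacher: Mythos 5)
Your plan matches the paper's proof: both use the pointwise inverse from Lemma \ref{Lem grushin} together with the composition calculus for operator-valued Weyl symbols to get $\operatorname{Op}_h^{\operatorname{W}}(\mathcal{Q}_{0,z})\operatorname{Op}_h^{\operatorname{W}}(\mathcal{P}_z)=\operatorname{Id}+h\tilde{\mathcal{R}}$, with the $\langle\tau\rangle^{2k+3}$ weight traced back to the $\tau^{2k+3}$ growth of $\mathcal{P}_1$ in \eqref{p1}, exactly as in the paper. For \eqref{inverse1} the paper simply invokes \cite[Proposition 3.1.7]{keraval}, whose content is precisely the Schur-complement reduction you sketch, so your argument is essentially the same route.
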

\begin{proof}
Using Lemma \ref{Lem grushin}, and composition of pseudo-differential operators, we have $$\operatorname{Op}_h^{\operatorname{W}}\left( \mathcal{Q}_{0,z}\right)\circ \operatorname{Op}_h^{\operatorname{W}}\left(\mathcal{P}_{0,z}\right)=\operatorname{Op}_h^{\operatorname{W}}\left(\mathcal{D}_{0,z}\right),$$
with $\mathcal{D}_{0,z}=\text{Id}+h\tilde{\mathcal{R}}$. By the Calderon-Vaillancourt theorem, $\tilde{\mathcal{R}}$ is a bounded operator, but the bounds depends on the parameter $\mu$. In the terms of $\tilde{\mathcal{R}}$, $\tau^{k+1}$ appears and so we can consider $\operatorname{Op}_h^{\operatorname{W}}\left(\tilde{\mathcal{R}}\right)$ as a bounded operator for the topology $L^2\left(\langle\tau\rangle^{k+1} d\tau d\sigma\right)$.

On the other hand, we see that
$$ \operatorname{Op}_h^{\operatorname{W}}\left( \mathcal{Q}_{0,z}\right)\circ\left( \operatorname{Op}_h^{\operatorname{W}}\left(\mathcal{P}_z\right)-  \operatorname{Op}_h^{\operatorname{W}}\left(\mathcal{P}_{0,z}\right)\right) $$
is of order $h$ for the topology of $L^2\left(\langle\tau\rangle^{2k+3} d\tau d\sigma\right)$. This power of $\tau^{2k+3}$ comes from the terms of $\mathcal{P}_1$ in \eqref{p1}. Therefore, \eqref{inverse} is proved.\\
The proof of \eqref{inverse1} was already established in \cite[Proposition 3.1.7]{keraval}.
\end{proof}
\subsection{Tangential coercivity estimates}
The goal of this subsection is to prove the following Theorem which gives tangential elliptic estimate for the truncated operator $\operatorname{Op}_h^{\operatorname{W}}\left(p_h\right)$. We recall that $\Phi$ is a non-negative Lipchitzian function, verifying Assumption \ref{hyp2}
\begin{theorem}
Let $c_0>0$ and $\chi_0\in\mathcal{C}_c^{\infty}(\mathbb{R})$ which equals $1$ in the neighborhood of $0$. There exist $c,h_0,R_0>0$ such that, for all $R>R_0$, there exists $C_R>0$ such that
for all $h\in(0,h_0)$ and all $z\in\mathbb{C}$ such that $\left|z-\mu_0^{[k]}\right|\leq c_0h$, and for all $\psi\in \operatorname{Dom}\left(\operatorname{Op}_h^{\operatorname{W}}(p_h)\right)$,
$$ cR^2h\left\| \psi \right\|\leq \left\| \left( \operatorname{Op}_h^{\operatorname{W}}(p_h)-z\right)\psi \right\|+C_Rh\left\| \chi_0\left(\frac{\sigma-s_r}{Rh^{1/2}}\right)\psi\right\|+h\left\|\tau^{2k+3}\psi\right\|. $$
\label{Theo}
\end{theorem}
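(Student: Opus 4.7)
The strategy is the Grushin method: reduce the coercivity of $\operatorname{Op}_h^{\operatorname{W}}(p_h)-z$ on $L^2$ to a scalar coercivity statement on the reduced symbol $q_{0,z}^{\pm}$, then prove the latter using Assumption \ref{hyp2}. Recall from Lemma \ref{Lem grushin} that $q_{0,z}^{\pm}(\sigma,\xi)=z-\mu^{[k]}(\sigma,\chi_1(\xi)+\mathfrak{w}(\sigma))$, a bounded scalar symbol thanks to the truncation $\chi_1$.

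\textbf{Step 1 (scalar coercivity).} I would first prove that for all scalar $g$,
\[
cR^2 h\,\|g\| \le \|\operatorname{Op}_h^{\operatorname{W}}(q_{0,z}^{\pm})g\|+C_R h\,\|\chi_0((\sigma-s_r)/(Rh^{1/2}))g\|.
\]
The key computation is that at the distinguished frequency $\chi_1(\xi)=\gamma_0^{1/(k+2)}\xi_0^{[k]}$ one has the identity
\[
\mu^{[k]}(\sigma,\chi_1(\xi)+\mathfrak{w}(\sigma)) \;=\; \gamma(\sigma)^{\frac{2}{k+2}}\nu^{[k]}\!\bigl(\xi_0^{[k]}-\Im\varphi(\sigma)+\mathrm{i}\gamma(\sigma)^{-\frac{1}{k+2}}\Phi'(\sigma)\bigr),
\]
so that (i)–(ii) of Assumption \ref{hyp2} give $\Re(\mu^{[k]}-\mu_0^{[k]})\ge 0$ globally, and $\ge Mh$ outside the ball $|\sigma-s_r|\le Rh^{1/2}$. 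For $\chi_1(\xi)$ away from $\gamma_0^{1/(k+2)}\xi_0^{[k]}$, the non-degeneracy of the minimum of $\mu^{[k]}$ (together with the fact that $\chi_1$ is bounded and $|\Phi'|$ is small) gives a uniform positive lower bound. Choosing $M\sim R^2$ in Assumption \ref{hyp2}(ii) and applying a Gårding-type inequality (after a smooth partition of unity in $\sigma$ at the scale $Rh^{1/2}$) yields the announced scalar estimate, the $O(h^2)$ symbolic remainder being absorbed in the left-hand side for $h$ small enough.

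\textbf{Step 2 (Grushin transfer).} Applying Proposition \ref{prop1} to the column $(\psi,0)^T$ produces two relations:
\begin{equation*}
\psi = \operatorname{Op}_h^{\operatorname{W}}(q_{0,z})(\operatorname{Op}_h^{\operatorname{W}}(p_h)-z)\psi + \operatorname{Op}_h^{\operatorname{W}}(q_{0,z}^+)\,\mathcal{I}\psi + h\,\mathcal{O}(\langle\tau\rangle^{2k+3})\psi,
\end{equation*}
\begin{equation*}
\operatorname{Op}_h^{\operatorname{W}}(q_{0,z}^{\pm})\,\mathcal{I}\psi = -\operatorname{Op}_h^{\operatorname{W}}(q_{0,z}^{-})(\operatorname{Op}_h^{\operatorname{W}}(p_h)-z)\psi + h\,\mathcal{O}(\langle\tau\rangle^{2k+3})\psi,
\end{equation*}
where $\mathcal{I}\psi:=\operatorname{Op}_h^{\operatorname{W}}(\langle\cdot,v_{\sigma,\xi}\rangle)\psi$ is a scalar function of $\sigma$. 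Using $L^2$–boundedness of $\operatorname{Op}_h^{\operatorname{W}}(q_{0,z})$, $\operatorname{Op}_h^{\operatorname{W}}(q_{0,z}^{\pm})$ (uniform in $h$, thanks to the truncation $\chi_1$ and Lemma \ref{Lem grushin}), applying the scalar estimate of Step 1 to the function $g:=\mathcal{I}\psi$, and bounding $\|\chi_0\,\mathcal{I}\psi\|\lesssim \|\chi_0\,\psi\|$ (since $\chi_0$ depends only on $\sigma$ and $\mathcal{I}$ commutes with $\sigma$-multipliers up to $O(h)$), one combines the two relations to produce the required bound on $\|\psi\|$.

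\textbf{Main obstacle.} The delicate point is Step 1, namely establishing the Gårding inequality with the explicit $cR^2h$ gain. In phase space, $\mu^{[k]}-\mu_0^{[k]}$ vanishes quadratically at $(s_r,\gamma_0^{1/(k+2)}\xi_0^{[k]})$, so the $\sigma$–localization at scale $Rh^{1/2}$ provided by $\chi_0$ must be matched by a corresponding $\xi$–localization of the same scale; these scales are at the very borderline of the semi-classical pseudodifferential calculus. One must carefully track how the $O(h^2)$ remainders from the Weyl calculus compare to the $O(R^2 h)$ main term — this is exactly the technical heart of the argument, carried out in \cite{keraval} for a similar model and which I would adapt here, using the non-degeneracy \eqref{matrice hessienne} of the Hessian of $\mu^{[k]}$ and the uniform smallness of $|\Phi'|$ from Proposition \ref{poidss}.
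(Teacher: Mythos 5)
Your architecture is the right one and matches the paper's: reduce to a coercivity estimate for the effective scalar symbol $q_{0,z}^{\pm}$ of Lemma \ref{Lem grushin}, then transfer it to $\operatorname{Op}_h^{\operatorname{W}}(p_h)-z$ through the approximate inverse of Proposition \ref{prop1}. Your Step 2 is essentially the transfer that the paper takes from \cite{BHR-purely}; the two relations you write are correct, and the only imprecision (the commutator of $\mathcal{I}$ with $\chi_0((\sigma-s_r)/(Rh^{1/2}))$ is of size $\mathcal{O}(h^{1/2}/R)$, not $\mathcal{O}(h)$, since the cutoff lives at scale $Rh^{1/2}$) is harmless because that error is absorbed for $h$ small. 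The genuine gap is Step 1: it is exactly the content of the paper's Proposition \ref{prop2}, you yourself identify it as the technical heart, and you do not prove it — you defer to an adaptation of \cite{keraval} via "a G{\aa}rding-type inequality after a partition of unity in $\sigma$ at scale $Rh^{1/2}$". As you note, this is precisely where the trouble sits: cutoffs at scale $Rh^{1/2}$ have symbol seminorms growing like $h^{-1/2}$, so G{\aa}rding/Weyl remainders are not uniformly controlled if the localization is put into the quantized symbol; moreover the weights $\Phi$ of Proposition \ref{poidss} are only Lipschitz, so they cannot be fed naively into a smooth symbolic calculus. Your sketch names the obstacle but does not overcome it.

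The paper's resolution is different and avoids any $h^{1/2}$-scale phase-space localization. One writes, at the symbol level, $-\Re q_{0,z}^{\pm}=\mathfrak{E}_{\Phi}(\sigma)+\gamma(\sigma)^{2/(k+2)}r_{\Phi}(\sigma,\xi)+\mathcal{O}(h)$, where $\mathfrak{E}_{\Phi}$ is exactly the quantity of Assumption \ref{hyp2} (your identity at the frequency $\chi_1(\xi)=\gamma_0^{1/(k+2)}\xi_0^{[k]}$) and $r_{\Phi}$ carries all the $\xi$-dependence. A Taylor expansion of $\nu^{[k]}$ at $\xi_0^{[k]}$, the non-degeneracy $(\nu^{[k]})''(\xi_0^{[k]})>0$, the smallness of $\|1-(\gamma_0/\gamma)^{1/(k+2)}\|_{\infty}$ and Young's inequality give the pointwise bound $\gamma^{2/(k+2)}r_{\Phi}\geq -Ch$ with $C$ independent of $R$, hence $-\Re q_{0,z}^{\pm}\geq \mathfrak{E}_{\Phi}(\sigma)-Ch$. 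The Fefferman--Phong inequality (in the low-regularity form of \cite{inegalite}) is applied once to this inequality, yielding $\int \mathfrak{E}_{\Phi}|\psi|^2\,d\sigma d\tau-Ch\|\psi\|^2\leq-\Re\langle\operatorname{Op}_h^{\operatorname{W}}q_{0,z}^{\pm}\psi,\psi\rangle$. Since $\mathfrak{E}_{\Phi}(\sigma)$ acts as a multiplication operator, Assumption \ref{hyp2} is then used purely pointwise: splitting the integral over $\{|\sigma-s_r|\geq Rh^{1/2}\}$ and its complement gives $\mathfrak{E}_{\Phi}\geq cR^2h$ on the first region and produces the term $C_Rh\|\chi_0((\sigma-s_r)/(Rh^{1/2}))\psi\|$ on the second, and taking $R$ large beats the $R$-independent loss $Ch$. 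In other words, the cutoff $\chi_0$ and the weight $\Phi'$ never enter any quantization; this splitting (exact multiplication operator plus a symbol bounded below by $-Ch$, handled by Fefferman--Phong) is the missing idea, and without it (or an equivalent mechanism) your Step 1, and hence the theorem, is not established.
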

The procedure for proving this Theorem is the same as the one followed in \cite[Theorem 4.2]{BHR-purely}, but what differs here is the eikonal equation. We first prove the following proposition which will be the main ingredient in the proof of Theorem \ref{Theo}.
\begin{prop}
Let $c_0>0$. There exist $C,h_0>0$ such that, for all $z\in\mathbb{C}$ with $\left|z-\mu_0^{[k]}\right|\leq c_0 h$, and for all $\psi\in \operatorname{Dom}\left(\operatorname{Op}_h^{\operatorname{W}}(p_h)\right)$,
\begin{equation}
    \displaystyle \int_{\mathbb{R}^2} \mathfrak{E}_{\Phi}(\sigma)\left|\psi\right|^2d\sigma d\tau-Ch\left\|\psi\right\|^2\leq -\Re \langle \operatorname{Op}_h^{\operatorname{W}}q_{0,z}^{\pm}\psi,\psi\rangle,
    \label{k3}
\end{equation}
where
\begin{equation}
    \mathfrak{E}_{\Phi}(\sigma):=\Re \left( \gamma(\sigma)^{\frac{2}{k+2}}\nu^{[k]}\left( \xi_0^{[k]}-\Im\left(\varphi(\sigma) \right)+\mathrm{i}\gamma(\sigma)^{-\frac{1}{k+2}}\Phi'(\sigma)\right)-\gamma_0^{\frac{2}{k+2}}\nu^{[k]}(\xi_0^{[k]})\right).
    \label{e_Phi}
\end{equation}
Moreover, for some $c>0$ and all $R>0$, there exists $C_R>0$ such that
\begin{equation}
    cR^2h\left\|\psi\right\|\leq \left\|\operatorname{Op}_h^{\operatorname{W}}q_{0,z}^{\pm}\psi\right\|+C_Rh\left\|\chi_0\left(\frac{\sigma-s_r}{Rh^{1/2}}\right)\psi\right\|.
    \label{k2}
\end{equation}
\label{prop2}
\end{prop}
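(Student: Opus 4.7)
The plan splits naturally into two parts: first, to establish the integral inequality \eqref{k3} by combining a pointwise lower bound on the symbol $-\Re q_{0,z}^\pm$ with a sharp Gårding-type inequality, and second, to deduce the coercivity estimate \eqref{k2} from \eqref{k3} by using Assumption~\ref{hyp2}(ii) to extract a lower bound on $\mathfrak{E}_\Phi$ away from $s_r$.

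For the first part, I would start from the expression $q_{0,z}^\pm(\sigma,\xi)=z-\mu^{[k]}(\sigma,\chi_1(\xi)+\mathfrak{w}(\sigma))$ provided by Lemma~\ref{Lem grushin}, and rewrite it using \eqref{VP} together with the definition of $\mathfrak{w}(\sigma)=\mathfrak{g}'(\sigma)+\mathrm{i}\Phi'(\sigma)$ and the identity $\gamma(\sigma)^{-1/(k+2)}\mathfrak{g}'(\sigma)=(1-(\gamma_0/\gamma(\sigma))^{1/(k+2)})\xi_0^{[k]}-\Im\varphi(\sigma)$. The key observation is that at the distinguished value $\chi_1(\xi)=\gamma_0^{1/(k+2)}\xi_0^{[k]}$ the real part of the symbol reduces exactly to $\mathfrak{E}_\Phi(\sigma)+\mu_0^{[k]}-\Re z$, which is $\ge \mathfrak{E}_\Phi(\sigma)-c_0 h$ by the restriction $|z-\mu_0^{[k]}|\le c_0 h$. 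For other values of $\xi$, the truncation $\chi_1$ keeps the argument of $\nu^{[k]}$ in a fixed compact set, and a Taylor expansion around the non-degenerate minimum $\xi_0^{[k]}$ expresses $-\Re q_{0,z}^\pm(\sigma,\xi)-\mathfrak{E}_\Phi(\sigma)$ as a nonnegative quadratic form in $\chi_1(\xi)-\gamma_0^{1/(k+2)}\xi_0^{[k]}$ modulo an error of size $(\Im\varphi(\sigma))^2+O(h)$; the error is reabsorbed using $(\Im\varphi)^2\lesssim\mathfrak{E}_\Phi$, which follows from the eikonal equation together with Assumption~\ref{Hyp0}(iii). Applying sharp Gårding's inequality to the resulting nonnegative scalar symbol (and noting that $\mathfrak{E}_\Phi$ depends only on $\sigma$ and thus quantizes to multiplication) then yields \eqref{k3}.

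To prove \eqref{k2}, I would apply Cauchy--Schwarz to \eqref{k3} to obtain
$$\int \mathfrak{E}_\Phi|\psi|^2\le\|\operatorname{Op}_h^{\mathrm{W}}q_{0,z}^\pm\psi\|\,\|\psi\|+Ch\|\psi\|^2,$$
and bound the left-hand side from below using Assumption~\ref{hyp2}(ii): for $R$ sufficiently large we invoke the assumption with $M=cR^2$, which gives $\mathfrak{E}_\Phi(\sigma)\ge cR^2 h$ on $\{|\sigma-s_r|\ge R'h^{1/2}\}$ for some $R'$ comparable to $R$; choosing $R$ so that $\chi_0((\sigma-s_r)/(Rh^{1/2}))=1$ on $\{|\sigma-s_r|\le R'h^{1/2}\}$, we obtain
$$\int \mathfrak{E}_\Phi|\psi|^2\ge cR^2 h\bigl(\|\psi\|^2-\|\chi_0((\sigma-s_r)/(Rh^{1/2}))\psi\|^2\bigr).$$
Combining with the Cauchy--Schwarz bound, using Young's inequality on $\|\operatorname{Op}_h^{\mathrm{W}}q_{0,z}^\pm\psi\|\,\|\psi\|$, and taking $R$ large enough to absorb the $Ch\|\psi\|^2$ term into the leading $cR^2 h\|\psi\|^2$ produces \eqref{k2}.

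The main obstacle I expect is the uniform pointwise lower bound $-\Re q_{0,z}^\pm(\sigma,\xi)\ge\mathfrak{E}_\Phi(\sigma)-Ch$ needed to invoke Gårding's inequality: since the minimum over $\xi$ of the symbol is shifted away from $\chi_1(\xi)=\gamma_0^{1/(k+2)}\xi_0^{[k]}$ by $O(\Im\varphi)$, a careful use of the eikonal equation is required to show that this shift produces only an error proportional to $\mathfrak{E}_\Phi$ itself, which can then be reabsorbed. This argument relies crucially on the holomorphy of $\nu^{[k]}$ near $\xi_0^{[k]}$, the non-degeneracy of its minimum, and the smallness of $\|1-\gamma_0/\gamma\|_\infty$ from Assumption~\ref{Hyp0}(iii), which ensures that $\Im\varphi$ is uniformly small and that the Taylor expansion of $\nu^{[k]}$ is valid on the whole range of $\chi_1$.
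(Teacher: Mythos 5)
Your overall architecture coincides with the paper's: identify $-q_{0,z}^{\pm}=\mu^{[k]}(\sigma,\chi_1(\xi)+\mathfrak{w}(\sigma))-z$ from Lemma~\ref{Lem grushin} and \eqref{VP}, establish a pointwise lower bound $-\Re q_{0,z}^{\pm}\geq \mathfrak{E}_{\Phi}(\sigma)-Ch$, quantize it by a G{\aa}rding-type inequality (the paper invokes Fefferman--Phong; sharp G{\aa}rding is an acceptable substitute since only an $\mathcal{O}(h)$ remainder is needed in \eqref{k3}), and then deduce \eqref{k2} from \eqref{k3} via Assumption~\ref{hyp2}(ii), Cauchy--Schwarz and taking $R$ large. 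This last half of your argument is essentially the paper's and is fine.

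The gap is in the absorption step of the first half. You control the dip of the symbol away from the reference frequency by the inequality $(\Im \varphi)^2\lesssim \mathfrak{E}_{\Phi}$, claimed to follow from the eikonal equation and Assumption~\ref{Hyp0}(iii). No such bound is available: the eikonal equation gives $\mathfrak{E}_{\Phi_r}\equiv 0$ for the exact weight although $\Im\varphi\not\equiv 0$, and the proposition must hold for every weight satisfying Assumption~\ref{hyp2}, which only guarantees $\mathfrak{E}_{\Phi}\geq 0$ everywhere and $\mathfrak{E}_{\Phi}\geq Mh$ outside an $Rh^{1/2}$-neighbourhood of $s_r$. Concretely, for the weights $\tilde{\Phi}_{r,N,h}$ and $\hat{\Phi}_{r,N,h}$ of Proposition~\ref{poidss} --- precisely those needed later for the optimal estimates --- one has $\mathfrak{E}_{\Phi}=\mathcal{O}(Nh)$ at points $\sigma$ at fixed distance from $s_r$, while by \eqref{I} $\Im\varphi(\sigma)=\frac{(\tilde{\nu}^{-1})''(0)}{2}\mathfrak{f}_r(\sigma)^2+\mathcal{O}(\mathfrak{f}_r(\sigma)^4)$ is there a fixed, generically nonzero quantity independent of $h$; hence $(\Im\varphi)^2$ cannot be dominated by $\mathfrak{E}_{\Phi}+\mathcal{O}(h)$. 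Moreover, even where such a domination does hold, absorbing the error into $\mathfrak{E}_{\Phi}$ would only produce a constant strictly smaller than one in front of $\mathfrak{E}_{\Phi}$, not \eqref{k3} as stated. The paper avoids any comparison with $\mathfrak{E}_{\Phi}$: it writes $-\Re (q_{0,z}^{\pm})=\mathfrak{E}_{\Phi}+\gamma^{\frac{2}{k+2}}r_{\Phi}+\mathcal{O}(h)$ and shows that the error in $r_{\Phi}$ is $\mathcal{O}\left(\left\|1-(\gamma_0/\gamma)^{\frac{1}{k+2}}\right\|_{\infty}^{1/2}\right)\min\left(1,\left|\xi-\gamma_0^{\frac{1}{k+2}}\xi_0^{[k]}\right|\right)$, i.e.\ it vanishes at the reference frequency and is linear in the frequency offset with a small prefactor, so it can be absorbed by Young's inequality into the coercive term $c_1\min\left(1,\left|\xi-\gamma_0^{\frac{1}{k+2}}\xi_0^{[k]}\right|^2\right)$ furnished by the non-degeneracy of $\nu^{[k]}$ at $\xi_0^{[k]}$. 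Your formulation, which measures the dip at the $\xi$-minimum of the symbol directly against $\mathfrak{E}_{\Phi}(\sigma)$ (and also neglects the contribution of $\gamma^{-\frac{1}{k+2}}\Phi'$ to that shift), loses exactly this structure; to repair the argument you should keep the error as a function of $\xi$ vanishing at $\chi_1(\xi)=\gamma_0^{\frac{1}{k+2}}\xi_0^{[k]}$ and absorb it into the quadratic coercivity in $\xi$, as the paper does.
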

\begin{proof}
By Lemma \ref{Lem grushin}, we have 
\begin{align*}
    -q_{0,z}^{\pm}&=\mu^{[k]}(\sigma,\chi_1(\xi)+\mathfrak{w}(\sigma))-z\\&=\gamma(\sigma)^{\frac{2}{k+2}}\nu^{[k]}\left(\gamma(\sigma)^{-\frac{1}{k+2}}\chi_1(\xi)+\gamma(\sigma)^{-\frac{1}{k+2}}\mathfrak{w}(\sigma)\right)-\gamma_0^{\frac{2}{k+2}}\nu^{[k]}(\xi_0^{[k]})+\mathcal{O}(h)\\&= \gamma(\sigma)^{\frac{2}{k+2}}\nu^{[k]}\left(\xi_0^{[k]}+\gamma(\sigma)^{-\frac{1}{k+2}}\chi(\xi-\gamma_0^{\frac{1}{k+2}}\xi_0^{[k]})-\Im \varphi(\sigma)+\mathrm{i}\gamma(\sigma)^{-\frac{1}{k+2}}\Phi'(\sigma)\right) \\&-\gamma_0^{\frac{2}{k+2}}\nu^{[k]}(\xi_0^{[k]})+\mathcal{O}(h).
\end{align*}
Then, $-\Re \left( q_{0,z}^{\pm}\right)$ is written in the form
\begin{equation}
    -\Re \left( q_{0,z}^{\pm}\right)=\mathfrak{E}_{\Phi}(\sigma)+\gamma(\sigma)^{\frac{2}{k+2}}r_{\Phi}(\sigma,\xi)+\mathcal{O}(h), 
    \label{k1}
\end{equation}
with
\begin{align*}
    r_{\Phi}(\sigma,\xi)&=\nu^{[k]}\left(\xi_0^{[k]}+\gamma(\sigma)^{-\frac{1}{k+2}}\chi(\xi-\gamma_0^{\frac{1}{k+2}}\xi_0^{[k]})-\Im \varphi(\sigma)+\mathrm{i}\gamma(\sigma)^{-\frac{1}{k+2}}\Phi'(\sigma)\right)\\&-\nu^{[k]}\left( \xi_0^{[k]}-\Im\left(\varphi(\sigma) \right)+\mathrm{i}\gamma(\sigma)^{-\frac{1}{k+2}}\Phi'(\sigma)\right),
\end{align*}
and the expression of $\mathfrak{E}_{\Phi}(\sigma)$ is given in \eqref{e_Phi}. Using the Taylor expansion for the two terms of $r_{\Phi}$ at $\xi_0^{[k]}$ (for fixed $\sigma$) and the fact that the functions $\varphi(\sigma)$ and $\gamma(\sigma)^{-\frac{1}{k+2}}\Phi'(\sigma)$ are controlled by $\left\|1-\left( \frac{\gamma_0}{\gamma} \right)^{\frac{1}{k+2}} \right\|_{\infty}^{1/2}$, we obtain
\begin{align*}
    r_{\Phi}(\sigma,\xi)&=\nu^{[k]}\left(\xi_0^{[k]}+\gamma(\sigma)^{-\frac{1}{k+2}}\chi(\xi-\gamma_0^{\frac{1}{k+2}}\xi_0^{[k]})\right)-\nu^{[k]}(\xi_0^{[k]})\\&+\mathcal{O}\left( \left\|1-\left( \frac{\gamma_0}{\gamma} \right)^{\frac{1}{k+2}} \right\|_{\infty}^{1/2} \right)\text{min}\left(1,\left|\xi-\gamma_0^{\frac{1}{k+2}}\xi_0^{[k]}\right|\right).
\end{align*}
Furthermore, since $\left(\nu^{[k]}\right)'(\xi_0^{[k]})=0$ and $\left(\nu^{[k]}\right)''(\xi_0^{[k]})>0$, there exists a constant $c_1>0$ (independent of $\sigma$) such
that
$$ \gamma(\sigma)^{\frac{2}{k+2}}\left( \nu^{[k]}\left(\xi_0^{[k]}+\gamma(\sigma)^{-\frac{1}{k+2}}\chi(\xi-\gamma_0^{\frac{1}{k+2}}\xi_0^{[k]})\right)-\nu^{[k]}(\xi_0^{[k]}) \right)\geq c_1 \text{min}\left(1,\left|\xi-\gamma_0^{\frac{1}{k+2}}\xi_0^{[k]}\right|^2\right). $$
Using the fact that $\left\|1-\left( \frac{\gamma_0}{\gamma} \right)^{\frac{1}{k+2}} \right\|_{\infty}$ is small enough and from the Young inequality, we get
$$\gamma(\sigma)^{\frac{2}{k+2}} r_{\Phi}+\mathcal{O}(h)\geq -Ch, $$
and \eqref{k1} becomes
$$ -\Re \left(q_{0,z}^{\pm}\right)\geq \mathfrak{E}_{\Phi}(\sigma)-Ch. $$
We apply the Fefferman-Phong inequality (see \cite[Theorem 3.2]{inegalite}) to get
$$ \displaystyle \int_{\mathbb{R}^2} \mathfrak{E}_{\Phi}(\sigma)\left|\psi\right|^2d\sigma d\tau-Ch\left\|\psi\right\|^2\leq -\Re \langle \operatorname{Op}_h^{\operatorname{W}}q_{0,z}^{\pm}\psi,\psi\rangle. $$
Using Assumption \ref{hyp2}, the fonction $\Phi$ verifies 
\begin{align*}
    \displaystyle \int_{\mathbb{R}^2} \mathfrak{E}_{\Phi}(\sigma)\left|\psi\right|^2d\sigma d\tau &\geq \displaystyle \int_{ \left|\sigma-s_r\right|\geq Rh^{1/2} } \mathfrak{E}_{\Phi}(\sigma)\left|\psi\right|^2d\sigma d\tau \\&\geq cR^2h\displaystyle \int_{ \left|\sigma-s_r\right|\geq Rh^{1/2} } \left|\psi\right|^2d\sigma d\tau\\&=cR^2h\left\| \psi\right\|^2-cR^2h\displaystyle \int_{ \left|\sigma-s_r\right|\leq Rh^{1/2} } \left|\psi\right|^2d\sigma d\tau.
\end{align*}
So using \eqref{k3}, we get
$$ (cR^2-C)h\left\|\psi\right\|^2\leq \left\|\operatorname{Op}_h^{\operatorname{W}}q_{0,z}^{\pm}\right\|\left\|\psi\right\|+cR^2h\displaystyle \int_{ \left|\sigma-s_r\right|\leq Rh^{1/2} } \left|\psi\right|^2d\sigma d\tau, $$
and for $R$ large enough, \eqref{k3} is well established.
\end{proof}
The proof of Theorem \ref{Theo} is then the same as the one of Theorem 4.2 in \cite{BHR-purely}, but here with the use of the two Propositions \ref{prop1} and \ref{prop2}.
\section{Removing the frequency cutoff}
\label{section6}
The goal of this section is to prove that Theorem \ref{Theo} remains true when we replace the truncate operator $\operatorname{Op}_h^{\operatorname{W}}p_h$ by the operator without frequency cutoff $\mathcal{N}_h^{[k],\Phi}$ defined in \eqref{N-tilde}. For this purpose, we want to prove the following Theorem.
\begin{theorem}
Let $c_0>0$. Under Assumption \ref{hyp2}, there exist $c,h_0>0$ such that for all $h\in(0,h_0)$ and all $z\in\mathbb{C}$ such that $\left|z-\mu_0^{[k]}\right|\leq c_0h$, and all $\psi\in\operatorname{Dom}\left(\mathcal{N}_h^{[k],\Phi}\right)$,
$$ ch\left\|\psi\right\|\leq \left\|\langle\tau\rangle^{2k+3}\left(\mathcal{N}_h^{[k],\Phi}-z\right)\psi\right\|+h\left\|\chi_0\left(\frac{\sigma-s_r}{Rh^{1/2}}\right)\psi\right\|. $$
\label{theo2}
\end{theorem}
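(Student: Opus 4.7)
The plan is to upgrade Theorem~\ref{Theo} (coercivity of the frequency-truncated operator $\mathrm{Op}_h^{\mathrm{w}}(p_h)$) to a coercivity statement for the full operator $\mathcal{N}_h^{[k],\Phi}$ via a frequency partition of unity. I would work with the gauge-equivalent operator $\tilde{\mathcal{N}}_h^{[k],\Phi}$ of \eqref{N-tilde}, which is a Weyl pseudo-differential operator with operator-valued symbol $n_h$ satisfying $n_h = p_h$ except that $\xi$ replaces $\chi_1(\xi)$; since the gauge conjugation is unitary, the estimate for $\tilde{\mathcal{N}}_h^{[k],\Phi}$ implies the one for $\mathcal{N}_h^{[k],\Phi}$.

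First I would pick a smooth cutoff $\chi_+ \in \mathcal{C}_c^\infty(\mathbb{R};[0,1])$, equal to $1$ on $[\gamma_0^{1/(k+2)}\xi_0^{[k]}-1/2,\,\gamma_0^{1/(k+2)}\xi_0^{[k]}+1/2]$ and supported in $(\gamma_0^{1/(k+2)}\xi_0^{[k]}-1,\,\gamma_0^{1/(k+2)}\xi_0^{[k]}+1)$, so that by construction of $\chi_1$ we have $\chi_1(\xi)=\xi$ on $\mathrm{supp}(\chi_+)$. Set $\chi_- = 1-\chi_+$ and $\psi_\pm = \mathrm{Op}_h^{\mathrm{w}}(\chi_\pm)\psi$. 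For $\psi_+$ the symbols $p_h$ and $n_h$ agree on $\mathrm{supp}(\chi_+)$, hence by Weyl composition $(\mathrm{Op}_h^{\mathrm{w}}(p_h)-\tilde{\mathcal{N}}_h^{[k],\Phi})\,\mathrm{Op}_h^{\mathrm{w}}(\chi_+) = \mathcal{O}(h^\infty)$ in the relevant $\tau$-weighted norms (disjoint essential support). Combining with the commutator identity
\begin{equation*}
(\tilde{\mathcal{N}}_h^{[k],\Phi}-z)\psi_+ = \mathrm{Op}_h^{\mathrm{w}}(\chi_+)(\tilde{\mathcal{N}}_h^{[k],\Phi}-z)\psi + [\tilde{\mathcal{N}}_h^{[k],\Phi},\,\mathrm{Op}_h^{\mathrm{w}}(\chi_+)]\psi,
\end{equation*}
whose commutator is of order $\mathcal{O}(h\langle\tau\rangle^{2k+2})$ by symbolic calculus (since $\chi_+$ depends only on $\xi$ and $\partial_\sigma n_h = \mathcal{O}(\langle\tau\rangle^{2k+2})$), and applying Theorem~\ref{Theo} to $\psi_+$ yields an estimate for $\|\psi_+\|$ controlled by $\|\langle\tau\rangle^{2k+3}(\tilde{\mathcal{N}}_h^{[k],\Phi}-z)\psi\|$, the boundary cutoff term, and absorbable errors.

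For the high-frequency piece $\psi_-$ I would exploit direct ellipticity. On $\mathrm{supp}(\chi_-)$ the principal symbol $n_0=\mathcal{M}_{\sigma,\xi+\mathfrak{w}(\sigma)}^{[k]}$ has lowest eigenvalue $\mu^{[k]}(\sigma,\xi+\mathfrak{w}(\sigma))$ with real part bounded below by $\mu_0^{[k]}+\eta$ for some $\eta>0$: this follows from the nondegenerate minimum of $\mu^{[k]}$ at $(s_r,\gamma_0^{1/(k+2)}\xi_0^{[k]})$, an application of \cite[Theorem~1.2]{BHR-holomorphic} to control the complex shift by $\mathfrak{w}(\sigma)$, and the smallness assumption in Assumption~\ref{Hyp0}(iii). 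Since $|z-\mu_0^{[k]}|\leq c_0 h$, the operator-valued symbol $n_0-z$ is therefore uniformly invertible on $\mathrm{supp}(\chi_-)$, and a parametrix (built from $\chi_-(\xi)(n_0-z)^{-1}$ as in the proof of Proposition~\ref{prop1}) produces a pseudo-differential left-inverse modulo $\mathcal{O}(h\langle\tau\rangle^{2k+3})$, whence $\|\psi_-\|\leq C\|\langle\tau\rangle^{2k+3}(\tilde{\mathcal{N}}_h^{[k],\Phi}-z)\psi\| + \mathcal{O}(h)\|\psi\|$.

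Summing the two estimates, fixing $R$ large in Theorem~\ref{Theo}, and absorbing the $\mathcal{O}(h)\|\psi\|$ remainders into the left-hand side for $h$ sufficiently small closes the argument. The main obstacle is the careful bookkeeping of the $\tau$-weights through the commutator and parametrix compositions, since all operator-valued symbols and their $\sigma$-derivatives grow polynomially in $\tau$. The exponent $2k+3$ in the weight on the right-hand side is precisely the one dictated by the Grushin remainder of Proposition~\ref{prop1}, and it is what simultaneously absorbs the $h\|\tau^{2k+3}\psi\|$ term in Theorem~\ref{Theo} and the $\langle\tau\rangle^{2k+3}$-bounded error in the $\psi_-$ parametrix.
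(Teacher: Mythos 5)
Your proposal follows essentially the same route as the paper's Section \ref{section6}: split in frequency with a cutoff adapted to the region where $\chi_1(\xi)=\xi$, apply Theorem \ref{Theo} to the piece localized near $\gamma_0^{\frac{1}{k+2}}\xi_0^{[k]}$, use ellipticity of the operator-valued symbol away from the minimum (resting on the nondegenerate minimum of $\nu^{[k]}$, the smallness of $\Phi'$ and \cite[Theorem 1.2]{BHR-holomorphic}), and return to $\mathcal{N}_h^{[k],\Phi}$ by the gauge conjugation. The only tactical differences are that the paper handles the elliptic high-frequency zone by a quadratic-form coercivity estimate (Lemma \ref{Lemm1}) rather than a parametrix, and the ``$\tau$-weight bookkeeping'' you defer is precisely the content of the commutator estimates \eqref{le2}--\eqref{le3} and Proposition \ref{propp}, which are what legitimize absorbing the terms $h\left\|\tau^{2k+3}\psi\right\|$ and the weighted commutator remainders into $\left\|\langle\tau\rangle^{2k+3}\left(\tilde{\mathcal{N}}_h^{[k],\Phi}-z\right)\psi\right\|+\mathcal{O}(h)\left\|\psi\right\|$.
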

In all what follows, we shall consider the smooth function $\mathbb{R}\ni\xi\mapsto\chi_2(\xi)$, such that 
\begin{equation}
    \chi_2\left(2\gamma_0^{\frac{1}{k+2}}\xi_0^{[k]}-\xi\right)=\chi_2(\xi)\,\,\,\,\text{for all}\,\,\,\,\,\xi\in\mathbb{R},
    \label{sy}
\end{equation}
and $\chi_2(\xi)=0$ in a neighborhood of $\gamma_0^{\frac{1}{k+2}}\xi_0^{[k]}$, $\chi_2(\xi)=1$ on $\left\{ \xi\in\mathbb{R}:\chi_1(\xi)=\xi \right\}^{c}$ so that the support of $\chi_2$ avoids $\gamma_0^{\frac{1}{k+2}}\xi_0^{[k]}$. We will now deal with some lemmas that help us to prove Theorem \ref{theo2}.
\begin{lemma}
Let $c_0>0$. Under Assumption \ref{hyp2}, there exist $C,h_0>0$ such that for all $h\in(0,h_0)$ and all $z\in\mathbb{C}$ such that $\left|z-\mu_0^{[k]}\right|\leq c_0h$, and all $\psi\in\text{Dom}\left(\tilde{\mathcal{N}}_h^{[k],\Phi}\right)$,
\begin{equation}
\left\| D_{\tau}\psi \right\|+\left\| \left( hD_{\sigma}-\gamma(\sigma)\frac{\tau^{k+1}}{k+1}\right) \psi\right\|\leq C\left\| \left( \tilde{\mathcal{N}}_h^{[k],\Phi}-z\right)\psi \right\|+C\left\| \psi\right\|.
    \label{le0}
\end{equation}
\end{lemma}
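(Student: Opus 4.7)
The plan is to derive this $H^1$-type estimate from a standard quadratic form computation on the conjugated operator. First I would rewrite
$$\tilde{\mathcal{N}}_h^{[k],\Phi}=\mathfrak{a}_h^{-1}D_\tau\mathfrak{a}_hD_\tau+\mathfrak{a}_h^{-1}(A+\mathrm{i}\Phi'(\sigma))\mathfrak{a}_h^{-1}(A+\mathrm{i}\Phi'(\sigma)),$$
splitting the magnetic part into its formally symmetric component $A:=hD_\sigma+\mathfrak{g}'(\sigma)-\gamma(\sigma)\tau^{k+1}/(k+1)+Q_h(\sigma,\tau)$, with $Q_h:=-h\tilde\delta(\sigma)c_\mu(\tau)\tau^{k+2}/(k+2)+h^2c_\mu(\tau)\mathcal{O}(\tau^{k+3})$ gathering the lower-order $h$-corrections, and the bounded real multiplication $\Phi'(\sigma)$ (which is Lipschitz by Assumption \ref{hyp2}, cf.\ Proposition \ref{poidss}). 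Under the choice $\mu=h^{2/(k+2)+2\eta}$ with $\eta<k/(2(k+2))$ the weight satisfies $\mathfrak{a}_h=1+o(1)$ uniformly.

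Next I would compute $\Re\langle\tilde{\mathcal{N}}_h^{[k],\Phi}\psi,\psi\rangle$ in $L^2(\mathfrak{a}_h\,d\sigma d\tau)$. Integration by parts, together with the fact that the $L^2(d\sigma d\tau)$-adjoint of $A+\mathrm{i}\Phi'$ is $A-\mathrm{i}\Phi'$, and the observation that the crossed contributions between $A\psi$ and $\mathrm{i}\Phi'\psi$ are purely imaginary, yields
$$\Re\langle\tilde{\mathcal{N}}_h^{[k],\Phi}\psi,\psi\rangle=\int\mathfrak{a}_h|D_\tau\psi|^2\,d\sigma d\tau+\int\mathfrak{a}_h^{-1}\bigl(|A\psi|^2-(\Phi'(\sigma))^2|\psi|^2\bigr)d\sigma d\tau.$$
Since $\|\Phi'\psi\|^2\leq C\|\psi\|^2$, combining this with Cauchy--Schwarz, $|z|\leq C$, and the arithmetic--geometric inequality produces
$$\|D_\tau\psi\|^2+\|A\psi\|^2\leq C\|(\tilde{\mathcal{N}}_h^{[k],\Phi}-z)\psi\|^2+C\|\psi\|^2.$$

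It then remains to pass from $\|A\psi\|$ to $\|A_0\psi\|$ with $A_0:=hD_\sigma-\gamma(\sigma)\tau^{k+1}/(k+1)$. The difference $A-A_0=\mathfrak{g}'+Q_h$ splits into a bounded $\mathfrak{g}'$ (contributing $C\|\psi\|$) and a perturbation $Q_h$ that factors on $\operatorname{supp} c_\mu\subset\{|\tau|\leq 2/\mu\}$ as $Q_h=\rho_h(\sigma,\tau)\cdot\gamma(\sigma)\tau^{k+1}/(k+1)$ with $\|\rho_h\|_\infty\lesssim h/\mu=h^{k/(k+2)-2\eta}\to 0$. A standard relative-boundedness / bootstrap argument then shows that $\|A\psi\|^2$ and $\|A_0\psi\|^2$ are equivalent modulo $C\|\psi\|^2$; taking square roots yields the announced estimate.

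The main technical obstacle will be precisely this last step. Unlike in the case of smooth bounded coefficients, $Q_h$ is multiplication by a polynomial in $\tau$ that is \emph{a priori} unbounded on the growing support of $c_\mu$; only the specific scaling $\eta<k/(2(k+2))$ between the truncation width and the small parameter renders $h/\mu$ small and makes $Q_h$ a relatively bounded perturbation of $A_0$ with vanishing relative bound. Without this condition the energy estimate would not close.
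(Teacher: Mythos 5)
Your first two steps are exactly the paper's own argument: the paper bounds the (real part of the) quadratic form of $\tilde{\mathcal{N}}_h^{[k],\Phi}$ from below by $c\|D_\tau\psi\|^2+c\|(hD_\sigma+\mathfrak{w}(\sigma)-\gamma(\sigma)\tau^{k+1}/(k+1))\psi\|^2$ modulo $C\|\psi\|^2$, then uses Cauchy--Schwarz and the boundedness of $\mathfrak{w}$. Your identity $\Re\langle\tilde{\mathcal{N}}_h^{[k],\Phi}\psi,\psi\rangle=\int\mathfrak{a}_h|D_\tau\psi|^2\,d\sigma d\tau+\int\mathfrak{a}_h^{-1}\bigl(|A\psi|^2-(\Phi'(\sigma))^2|\psi|^2\bigr)d\sigma d\tau$ is precisely the computation behind that bound, and it is correct as stated (the cross terms are purely imaginary, no commutator with $\Phi'$ is needed, so the Lipschitz regularity of $\Phi$ suffices).

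The gap is in your last step, the passage from $A$ to $A_0$. From $|Q_h|\leq \varepsilon_h\,\gamma(\sigma)|\tau|^{k+1}/(k+1)$ with $\varepsilon_h\sim h/\mu\to0$ you only obtain $\|Q_h\psi\|\leq\varepsilon_h\|\gamma\tau^{k+1}\psi/(k+1)\|$, and the multiplication operator $\gamma\tau^{k+1}/(k+1)$ is not relatively bounded with respect to $A$ or $A_0$ modulo $\|\psi\|$: writing $\gamma\tau^{k+1}/(k+1)=hD_\sigma-A_0$ reintroduces $\|hD_\sigma\psi\|$, which is not controlled at this stage (the paper only gets $\tau$-weighted control, $\|\tau^k\psi\|\leq C\|\langle\tau\rangle^k(\tilde{\mathcal{N}}_h^{[k],\Phi}-z)\psi\|+C\|\psi\|$, in the subsequent lemmas, and with weighted right-hand sides). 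So the announced bootstrap does not close, and the claimed equivalence of $\|A\psi\|^2$ and $\|A_0\psi\|^2$ modulo $C\|\psi\|^2$ cannot hold in general: $Q_h$ is relatively small but of absolute size up to $h\mu^{-(k+2)}\to\infty$ near $|\tau|\sim\mu^{-1}$, so a wave packet supported there and oscillating in $\sigma$ at the frequency of the full potential makes $\|A\psi\|+\|\psi\|$ moderate while $\|A_0\psi\|$ is of order $h\mu^{-(k+2)}\|\psi\|$. To be fair, the paper's proof is silent on exactly this absorption (it states the form lower bound directly with the clean operator and only invokes the boundedness of $\mathfrak{w}$), so your write-up is not less complete than the paper's; but the specific mechanism you propose would not repair it. A safe route is to keep the $c_\mu$-correction terms inside the controlled magnetic operator (which is all the quadratic form actually yields) and to postpone their removal until $\langle\tau\rangle$-weighted estimates are available, as in Theorem \ref{theo2} and the lemmas following \eqref{le0}.
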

\begin{proof}
For all $\psi\in Dom(\tilde{\mathcal{N}}_h^{[k],\Phi})$, we have
$$ \langle\tilde{\mathcal{N}}_h^{[k],\Phi}\psi,\psi\rangle \geq c \left\| D_{\tau}\psi\right\|^2+c \left\|\left(hD_{\sigma}+\mathfrak{w}(\sigma)-\gamma(\sigma)\frac{\tau^{k+1}}{k+1}\right)\psi\right\|^2,$$
which implies that
$$ \left\|D_{\tau}\psi\right\|+\left\|\left( hD_{\sigma}+\mathfrak{w}(\sigma)-\gamma(\sigma)\frac{\tau^{k+1}}{k+1}\right)\psi\right\|  \leq C \left\| \tilde{\mathcal{N}}_h^{[k],\Phi}\psi \right\|+C\left\| \psi \right\|. $$
Using the fact that $\left|\mathfrak{w}(\sigma)\right|$ is bounded, \eqref{le0} is well established.
\end{proof}
\begin{lemma}
Let $c_0>0$. Under Assumption \ref{hyp2}, there exist $C,h_0>0$ such that for all $h\in(0,h_0)$ and all $z\in\mathbb{C}$ such that $\left|z-\mu_0^{[k]}\right|\leq c_0h$, and all $\psi\in\operatorname{Dom}\left(\tilde{\mathcal{N}}_h^{[k],\Phi}\right)$,
$$ \left\| \operatorname{Op}_h^{\operatorname{W}}\chi_2\psi \right\|+\left\|\left(  hD_{\sigma}-\gamma(\sigma)\frac{\tau^{k+1}}{k+1} \right) \operatorname{Op}_h^{\operatorname{W}}\chi_2\psi \right\|+\left\|D_{\tau}\operatorname{Op}_h^{\operatorname{W}}\chi_2\psi \right\|\leq C \left\|\left(\tilde{\mathcal{N}}_h^{[k],\Phi}-z\right)\operatorname{Op}_h^{\operatorname{W}}\chi_2\psi\right\|. $$
\label{Lemm1}
\end{lemma}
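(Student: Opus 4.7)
The key idea is that $\chi_2$ is supported away from $\gamma_0^{1/(k+2)}\xi_0^{[k]}$, which is precisely the point where the symbol $\mu^{[k]}$ attains its non-degenerate minimum $\mu_0^{[k]}$. Hence, on $\operatorname{supp}\chi_2$, the operator-valued principal symbol $n_0(\sigma,\xi) = D_\tau^2 + (\xi + \mathfrak{w}(\sigma) - \gamma(\sigma)\tau^{k+1}/(k+1))^2$ of $\tilde{\mathcal{N}}_h^{[k],\Phi}$ has (real part) lowest eigenvalue uniformly bounded below by $\mu_0^{[k]} + 2\delta_0$ for some $\delta_0>0$. This spectral gap will allow us to combine a G{\aa}rding-type inequality with the previous lemma to derive the claimed estimate.

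Concretely, the plan is to set $u := \operatorname{Op}_h^{\operatorname{W}}\chi_2\, \psi$, which lies in $\operatorname{Dom}(\tilde{\mathcal{N}}_h^{[k],\Phi})$ since $\chi_2$ is smooth, bounded, and depends only on $\xi$, so that $\operatorname{Op}_h^{\operatorname{W}}\chi_2$ preserves the relevant (weighted) Sobolev regularity. Applying the previous lemma to $u$ yields
$$
\|D_\tau u\| + \|(hD_\sigma - \gamma(\sigma)\tfrac{\tau^{k+1}}{k+1})u\| \le C\|(\tilde{\mathcal{N}}_h^{[k],\Phi}-z)u\| + C\|u\|,
$$
so the whole task reduces to showing that $\|u\| \le C'\|(\tilde{\mathcal{N}}_h^{[k],\Phi}-z)u\|$, which removes the last term.

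For this coercivity, one computes the real part of $\langle(\tilde{\mathcal{N}}_h^{[k],\Phi}-z)u, u\rangle$. Writing $\mathfrak{w} = \mathfrak{g}' + \mathrm{i}\Phi'$, the real part of $n_0$ equals $D_\tau^2 + (\xi + \mathfrak{g}'(\sigma) - \gamma(\sigma)\tau^{k+1}/(k+1))^2 - \Phi'(\sigma)^2$, whose ground state energy in $\tau$ is $\mu^{[k]}(\sigma, \xi+\mathfrak{g}'(\sigma)) - \Phi'(\sigma)^2 = \gamma(\sigma)^{2/(k+2)}\nu^{[k]}(\gamma(\sigma)^{-1/(k+2)}(\xi+\mathfrak{g}'(\sigma))) - \Phi'(\sigma)^2$. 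On $\operatorname{supp}\chi_2$, $\xi$ is bounded away from $\gamma_0^{1/(k+2)}\xi_0^{[k]}$, and by the smallness hypothesis in Assumption \ref{Hyp0} the perturbations $\mathfrak{g}'(\sigma)$ and $\Phi'(\sigma)$ are uniformly small; combined with the non-degenerate minimum and the coercivity at infinity of $\nu^{[k]}$, this yields the uniform lower bound $\Re n_0(\sigma,\xi) \ge \mu_0^{[k]} + 2\delta_0$ as an operator on $L^2(\mathbb{R}_\tau)$, for some $\delta_0>0$. Applying the operator-valued G{\aa}rding inequality (in the spirit of Proposition \ref{prop2} via Fefferman--Phong) to $u$, which is essentially frequency-localized on $\operatorname{supp}\chi_2$, and using $|z-\mu_0^{[k]}| \le c_0 h$, one obtains
$$
\Re\langle(\tilde{\mathcal{N}}_h^{[k],\Phi}-z)u, u\rangle \ge \delta_0 \|u\|^2
$$
for $h$ small enough. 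Cauchy--Schwarz then gives $\delta_0 \|u\| \le \|(\tilde{\mathcal{N}}_h^{[k],\Phi}-z)u\|$, and combining with the estimate from the previous lemma yields the lemma.

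The main technical obstacle is the careful execution of the G{\aa}rding step for operator-valued symbols, since $\tilde{\mathcal{N}}_h^{[k],\Phi}$ is non-self-adjoint (hence the need to extract a real part and control the imaginary part $2\Re((\xi+\mathfrak{g}'(\sigma) - \gamma(\sigma)\tau^{k+1}/(k+1))\Phi'(\sigma))$ via Young's inequality and the smallness of $\Phi'$), and since one must handle the pseudodifferential remainder from composing $\operatorname{Op}_h^{\operatorname{W}}\chi_2$ with $\tilde{\mathcal{N}}_h^{[k],\Phi}$. This can be done by introducing a slightly enlarged cutoff $\tilde\chi_2$ equal to $1$ on $\operatorname{supp}\chi_2$ and still supported away from $\gamma_0^{1/(k+2)}\xi_0^{[k]}$, writing $\operatorname{Op}_h^{\operatorname{W}}\chi_2 = \operatorname{Op}_h^{\operatorname{W}}\tilde\chi_2\,\operatorname{Op}_h^{\operatorname{W}}\chi_2 + \mathcal{O}(h)$, and then applying the sharp G{\aa}rding inequality to the symbol $\tilde\chi_2^2 \,(\Re n_0 - \mu_0^{[k]})$, which is nonnegative by construction.
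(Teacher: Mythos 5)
Your proposal is correct and follows essentially the same route as the paper: you bound $\Re\langle(\tilde{\mathcal{N}}_h^{[k],\Phi}-z)\operatorname{Op}_h^{\operatorname{W}}\chi_2\psi,\operatorname{Op}_h^{\operatorname{W}}\chi_2\psi\rangle$ from below by a positive constant times $\|\operatorname{Op}_h^{\operatorname{W}}\chi_2\psi\|^2$, using that on $\operatorname{supp}\chi_2$ the band function stays uniformly above $\mu_0^{[k]}$ (smallness of $\Phi'$, $\mathfrak{g}'$, $\varphi$ plus the non-degenerate minimum of $\nu^{[k]}$), and then combine with the preceding lemma \eqref{le0} applied to $\operatorname{Op}_h^{\operatorname{W}}\chi_2\psi$. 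The only cosmetic difference is that you extract the real part of the symbol directly (ground state energy $\mu^{[k]}(\sigma,\xi+\mathfrak{g}')-\Phi'^2$), whereas the paper obtains the same $-\Phi'(\sigma)^2$ correction via the holomorphic lower bound of \cite[Theorem 1.2]{BHR-holomorphic}.
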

\begin{proof}
We have:
\begin{align}
    &\Re\langle\left(\tilde{\mathcal{N}}_h^{[k],\Phi}-z\right)\operatorname{Op}_h^{\operatorname{W}}\chi_2\psi,\operatorname{Op}_h^{\operatorname{W}}\chi_2\psi\rangle\\&=\Re\langle\left(D_{\tau}^2+\left(hD_{\sigma}+\mathfrak{w}(\sigma)-\gamma(\sigma)\frac{\tau^{k+1}}{k+1}\right)^2+o(1)-z\right)\operatorname{Op}_h^{\operatorname{W}}\chi_2\psi,\operatorname{Op}_h^{\operatorname{W}}\chi_2\psi\rangle\\&\geq (1+o(1))\langle\left(D_{\tau}^2+\left(hD_{\sigma}+\mathfrak{w}(\sigma)-\gamma(\sigma)\frac{\tau^{k+1}}{k+1}\right)^2\right)\operatorname{Op}_h^{\operatorname{W}}\chi_2\psi,\operatorname{Op}_h^{\operatorname{W}}\chi_2\psi\rangle\\&-\Re( z) \left\| \operatorname{Op}_h^{\operatorname{W}}\chi_2\psi \right\|^2 \\&\geq \left( \mu^{[k]}(\sigma,\xi+\mathfrak{w}(\sigma))-\mu_0^{[k]} +o(1)\right)\left\| \operatorname{Op}_h^{\operatorname{W}}\chi_2\psi\right\|^2.
    \label{align}
\end{align}
We can assume without loss of generality that $\xi\geq\gamma_0^{\frac{1}{k+2}}\xi_0^{[k]}$, and by the symmetry of $\chi_2$ in \eqref{sy}, the results are true when $\xi\leq\gamma_0^{\frac{1}{k+2}}\xi_0^{[k]}$. By the choice of $\Phi$ (see \eqref{poidss}), $\left|\gamma(\sigma)^{-\frac{1}{k+2}}\Phi'(\sigma)\right|$ is small enough for all $\sigma\in\mathbb{R}$, and using \cite[Theorem 1.2]{BHR-holomorphic}, we get
\begin{align*}
   &\Re \nu^{[k]}\left(\xi_0^{[k]}+\gamma(\sigma)^{-\frac{1}{k+2}} \left(\xi-\gamma_0^{\frac{1}{k+2}}\xi_0^{[k]}\right) -\Im\left( \varphi(\sigma) \right)+\mathrm{i}\gamma(\sigma)^{-\frac{1}{k+2}}\Phi'(\sigma)\right)\\&\geq\nu^{[k]}\left(\xi_0^{[k]}+\gamma(\sigma)^{-\frac{1}{k+2}} \left(\xi-\gamma_0^{\frac{1}{k+2}}\xi_0^{[k]}\right) -\Im\left( \varphi(\sigma) \right)\right)-\gamma(\sigma)^{-\frac{2}{k+2}}\Phi'(\sigma)^2,
\end{align*}
and
$$ \mu^{[k]}(\sigma,\xi+\mathfrak{w}(\sigma))\geq\gamma(\sigma)^{\frac{2}{k+2}}\nu^{[k]}\left(\xi_0^{[k]}+\gamma(\sigma)^{-\frac{1}{k+2}} \left(\xi-\gamma_0^{\frac{1}{k+2}}\xi_0^{[k]}\right) -\Im\left( \varphi(\sigma) \right)\right)-\Phi'(\sigma)^2.$$
When $\xi\in\text{Supp}(\chi_2)$, $\xi$ is far from $\gamma_0^{\frac{1}{k+2}}\xi_0^{[k]}$ and for a certain $c>0$ we have 
$$\left|\xi-\gamma_0^{\frac{1}{k+2}}\xi_0^{[k]}\right|\geq c.$$
Using the fact that $\left|\varphi(\sigma)\right|$ is small enough, we get in this case
$$ \left| \gamma(\sigma)^{-\frac{1}{k+2}} \left(\xi-\gamma_0^{\frac{1}{k+2}}\xi_0^{[k]}\right) -\Im\left( \varphi(\sigma) \right) \right|\geq \frac{\gamma_{\infty}^{-\frac{1}{k+2}}c}{2}, \,\,\,\,\,\text{for all}\,\,\,\,\sigma\in\mathbb{R}. $$
Then, there exists a constant $r>0$ such that for all $\sigma\in\mathbb{R}$, we have
$$ \xi_0^{[k]}+\gamma(\sigma)^{-\frac{1}{k+2}} \left(\xi-\gamma_0^{\frac{1}{k+2}}\xi_0^{[k]}\right) -\Im\left( \varphi(\sigma) \right)\in \mathbb{R}\setminus\text{B}(\xi_0^{[k]},r). $$
Since the function $\mathbb{R}\ni\xi\mapsto\nu^{[k]}(\xi)$ admits a unique non-degenerate minimum at $\xi_0^{[k]}$, then there exists $c_1>0$ such that for all $\sigma\in\mathbb{R}$, we have
$$ \nu^{[k]}\left( \xi_0^{[k]}+\gamma(\sigma)^{-\frac{1}{k+2}} \left(\xi-\gamma_0^{\frac{1}{k+2}}\xi_0^{[k]}\right) -\Im\left( \varphi(\sigma) \right) \right)\geq c_1. $$
Therefore,
\begin{align*}
    \mu^{[k]}(\sigma,\xi+\mathfrak{w}(\sigma))-\mu_0^{[k]}&\geq \gamma(\sigma)^{\frac{2}{k+2}}c_1-\mu_0^{[k]}-\Phi'(\sigma)^2\\&\geq\gamma_0^{\frac{2}{k+2}}(c_1-\nu^{[k]}(\xi_0^{[k]}))-\Phi'(\sigma)^2\\&\geq\frac{\gamma_0^{\frac{2}{k+2}}(c_1-\nu^{[k]}(\xi_0^{[k]}))}{2}=C_1>0.
\end{align*}
Going back to \eqref{align}, and we get
$$ \left\|\operatorname{Op}_h^{\operatorname{W}}\chi_2\psi \right\|^2 \leq C \times\Re \langle\left(\tilde{\mathcal{N}}_h^{[k],\Phi}-z\right)\operatorname{Op}_h^{\operatorname{W}}\chi_2\psi,\operatorname{Op}_h^{\operatorname{W}}\chi_2\psi\rangle . $$
Combining this inequality with \eqref{le0}, we get 
$$ \left\| \operatorname{Op}_h^{\operatorname{W}}\chi_2\psi \right\|+\left\|\left(  hD_{\sigma}-\gamma(\sigma)\frac{\tau^{k+1}}{k+1} \right) \operatorname{Op}_h^{\operatorname{W}}\chi_2\psi \right\|+\left\|D_{\tau}\operatorname{Op}_h^{\operatorname{W}}\chi_2\psi \right\|\leq C \left\|\left(\tilde{\mathcal{N}}_h^{[k],\Phi}-z\right)\operatorname{Op}_h^{\operatorname{W}}\chi_2\psi\right\|. $$
\end{proof}
\begin{lemma}
Let $N\in\mathbb{N}$, $c_0>0$. Under Assumption \ref{hyp2}, there exist $C,h_0>0$ such that for all $h\in(0,h_0)$ and all $z\in\mathbb{C}$ such that $\left|z-\mu_0^{[k]}\right|\leq c_0h$, and all $\psi\in\operatorname{Dom}\left(\tilde{\mathcal{N}}_h^{[k],\Phi}\right)$,
\begin{align*}
& \left\| \operatorname{Op}_h^{\operatorname{W}}\chi_2\psi \right\|+\left\|\left(  hD_{\sigma}-\gamma(\sigma)\frac{\tau^{k+1}}{k+1} \right) \operatorname{Op}_h^{\operatorname{W}}\chi_2\psi \right\|+\left\|D_{\tau}\operatorname{Op}_h^{\operatorname{W}}\chi_2\psi \right\| \\&+  \left\|\left(  hD_{\sigma}-\gamma(\sigma)\frac{\tau^{k+1}}{k+1} \right)^2 \operatorname{Op}_h^{\operatorname{W}}\chi_2\psi \right\|+\left\|D_{\tau}^2 \operatorname{Op}_h^{\operatorname{W}}\chi_2\psi \right\| \leq C\left\| \left( \tilde{\mathcal{N}}_h^{[k],\Phi}-z\right)\psi\right\|+O(h^N)\left\| \psi\right\|.
\end{align*}
\label{Lemm2}
\end{lemma}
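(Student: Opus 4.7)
The plan is to extend Lemma~\ref{Lemm1} in two directions: (a) add the second-order terms $\|D_\tau^2 \operatorname{Op}_h^{\operatorname{W}}\chi_2\psi\|$ and $\|(hD_\sigma-\gamma(\sigma)\tau^{k+1}/(k+1))^2\operatorname{Op}_h^{\operatorname{W}}\chi_2\psi\|$ to the left-hand side, and (b) convert the datum $\|(\tilde{\mathcal{N}}_h^{[k],\Phi}-z)\operatorname{Op}_h^{\operatorname{W}}\chi_2\psi\|$ on the right-hand side of Lemma~\ref{Lemm1} into $\|(\tilde{\mathcal{N}}_h^{[k],\Phi}-z)\psi\|+O(h^N)\|\psi\|$. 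First I would split $\tilde{\mathcal{N}}_h^{[k],\Phi} = D_\tau^2 + P_\sigma^2 + hR_h$ with $P_\sigma := hD_\sigma + \mathfrak{w}(\sigma) - \gamma(\sigma)\tau^{k+1}/(k+1)$, and $R_h$ the subprincipal correction collecting the $\kappa$, $\tilde\delta$ and $c_\mu$ terms obtained by expanding \eqref{N-tilde}.

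For step (a), setting $\phi := \operatorname{Op}_h^{\operatorname{W}}\chi_2\psi$, I would exploit the energy identity
\begin{equation*}
\|(D_\tau^2 + P_\sigma^2)\phi\|^2 = \|D_\tau^2\phi\|^2 + \|P_\sigma^2\phi\|^2 + 2\|P_\sigma D_\tau\phi\|^2 + \mathcal{E}(\phi),
\end{equation*}
where $\mathcal{E}(\phi)$ collects the commutators built from $[D_\tau, P_\sigma] = i\gamma(\sigma)\tau^k$. These commutators are of one order lower in $(D_\tau, P_\sigma)$, and thanks to the first-order bounds provided by Lemma~\ref{Lemm1} they are absorbable into the left-hand side; the contribution of $hR_h\phi$ is treated identically. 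This yields $\|D_\tau^2\phi\| + \|P_\sigma^2\phi\| \leq C\|(\tilde{\mathcal{N}}_h^{[k],\Phi}-z)\phi\| + C\|\phi\|$, which when combined with Lemma~\ref{Lemm1} bounds all five left-hand side norms of the statement by $C\|(\tilde{\mathcal{N}}_h^{[k],\Phi}-z)\operatorname{Op}_h^{\operatorname{W}}\chi_2\psi\|$.

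For step (b) I would write
\begin{equation*}
(\tilde{\mathcal{N}}_h^{[k],\Phi}-z)\operatorname{Op}_h^{\operatorname{W}}\chi_2\psi = \operatorname{Op}_h^{\operatorname{W}}\chi_2(\tilde{\mathcal{N}}_h^{[k],\Phi}-z)\psi + [\tilde{\mathcal{N}}_h^{[k],\Phi},\operatorname{Op}_h^{\operatorname{W}}\chi_2]\psi.
\end{equation*}
Since $\chi_2$ depends only on $\xi$, the commutator has principal symbol $(h/i)\{n_0,\chi_2\} = (h/i)\partial_\sigma n_0 \cdot \chi_2'(\xi)$, of size $O(h)$ and with $\xi$-support in $\operatorname{supp}\chi_2'$, still bounded away from $\gamma_0^{1/(k+2)}\xi_0^{[k]}$. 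I would then fix a nested family $\chi_2 = \chi_2^{(0)} \prec \chi_2^{(1)} \prec \cdots \prec \chi_2^{(N)}$ of such frequency cut-offs, all vanishing near $\gamma_0^{1/(k+2)}\xi_0^{[k]}$ and with $\chi_2^{(j+1)} \equiv 1$ on $\operatorname{supp}\chi_2^{(j)}$, and iterate. The version of (a) already established for $\chi_2^{(j-1)}$ controls the relevant norms of $\operatorname{Op}_h^{\operatorname{W}}\chi_2^{(j-1)}\psi$ by the commutator $[\tilde{\mathcal{N}}_h^{[k],\Phi},\operatorname{Op}_h^{\operatorname{W}}\chi_2^{(j-1)}]\psi$, which is microlocally of the form $h\operatorname{Op}_h^{\operatorname{W}}\chi_2^{(j)}(\text{bounded})\psi$ up to a rapidly decaying remainder on the complement of $\operatorname{supp}\chi_2^{(j)}$ by non-stationary phase in Weyl calculus; after $N$ iterations the accumulated remainder has dropped to $O(h^N)\|\psi\|$.

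The main obstacle is the polynomial growth in $\tau$ of the subprincipal symbol of $\tilde{\mathcal{N}}_h^{[k],\Phi}$, which produces factors up to $\tau^{2k+3}$ in both $hR_h$ and the symbolic expansion of $[\tilde{\mathcal{N}}_h^{[k],\Phi},\operatorname{Op}_h^{\operatorname{W}}\chi_2]$. One must therefore perform all compositions of $\operatorname{Op}_h^{\operatorname{W}}\chi_2$ with $\tilde{\mathcal{N}}_h^{[k],\Phi}$ in the weighted spaces $L^2(\langle\tau\rangle^j d\sigma d\tau)$, and apply Calder\'on--Vaillancourt with constants depending on these $\tau$-weights, exactly in the spirit already used to obtain \eqref{inverse}. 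The Lipschitz regularity of $\Phi$ under Assumption~\ref{hyp2} is benign because $\Phi$ enters the calculus only through $\Phi'$ inside $\mathfrak{w}(\sigma)$, and $\Phi'$ contributes bounded symbolic terms whose commutator structure is identical to the one already controlled in Lemma~\ref{Lemm1}.
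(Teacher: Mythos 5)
Your overall architecture is the right one and is essentially what the paper intends: the paper's own proof is a one-line reference to \cite[Lemma 5.4]{BHR-purely} together with Lemma \ref{Lemm1}, and that argument is exactly your combination of (a) recovering the second-order norms from the first-order ones and the equation, and (b) an induction on $N$ over nested frequency cut-offs $\chi_2^{(j)}$, all vanishing near $\gamma_0^{1/(k+2)}\xi_0^{[k]}$, in which the commutator $[\tilde{\mathcal{N}}_h^{[k],\Phi},\operatorname{Op}_h^{\operatorname{W}}\chi_2]$ is $\mathcal{O}(h)$ and microlocalized on $\operatorname{supp}\chi_2'$.

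There is, however, a genuine gap at the one point where this paper differs from \cite{BHR-purely}. You claim that the commutator terms generated by $[D_\tau,P_\sigma]=\mathrm{i}\gamma(\sigma)\tau^k$ are ``absorbable thanks to the first-order bounds provided by Lemma \ref{Lemm1}''. For $k\geq 1$ this is not justified: Cauchy--Schwarz on a term such as $\langle D_\tau\phi,\gamma\tau^k P_\sigma\phi\rangle$ produces $\|\tau^k D_\tau\phi\|$ or $\|\tau^k P_\sigma\phi\|$, and Lemma \ref{Lemm1} only controls the unweighted norms; the same problem reappears in your iteration step, where $[\tilde{\mathcal{N}}_h^{[k],\Phi},\operatorname{Op}_h^{\operatorname{W}}\chi_2]$ is $\mathcal{O}(h)$ only as a map between $\langle\tau\rangle$-weighted spaces (its symbol contains $\tau^{k+1}$-growing coefficients and the $c_\mu$-corrections, whose sup-norms on $\operatorname{supp}c_\mu$ are not even $\mathcal{O}(h)$), so the bound ``$h\operatorname{Op}_h^{\operatorname{W}}\chi_2^{(j)}(\text{bounded})\psi$'' requires weighted norms of $\operatorname{Op}_h^{\operatorname{W}}\chi_2^{(j)}\psi$ that your (unweighted) induction hypothesis does not supply. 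You do flag the $\tau$-growth, but the proposed remedy --- performing the compositions in $L^2(\langle\tau\rangle^j\,d\sigma d\tau)$ --- would, as written, leave $\langle\tau\rangle$-weights on the data, whereas the right-hand side of the statement is unweighted; what is missing is the mechanism that discharges these weights, e.g.\ trading $\gamma(\sigma)\tau^{k+1}/(k+1)$ for $(hD_\sigma+\mathfrak{w}(\sigma))-P_\sigma$ on the frequency support of the cut-offs, or first establishing weighted analogues of Lemma \ref{Lemm1} and carrying them in the induction (this is essentially what Proposition \ref{propp} does later, but there the weights do appear on the right). Note that in \cite{BHR-purely} this issue simply does not arise, since there $\partial_t$ of the magnetic potential is bounded thanks to the curvature cut-offs; so the step you leave heuristic is precisely the new difficulty of the present setting. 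A minor additional point: $P_\sigma$ is not self-adjoint ($\mathfrak{w}$ contains $\mathrm{i}\Phi'$), so your energy identity needs real parts and extra bounded error terms coming from $\Phi'$; that part is easily repaired.
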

\begin{proof}
Using Lemma \ref{Lemm1}, the proof of this lemma is exactly as the one of \cite[Lemma 5.4]{BHR-purely}.
\end{proof}
We will control now $hD_{\sigma}$ instead of $hD_{\sigma}-\gamma(\sigma)\frac{\tau^{k+1}}{k+1}$. Since $\gamma$ is bounded, then it suffices to control $\tau^{k+1}$ with the normal Agmon estimates.
\begin{lemma}
Let $c_0>0$. Under Assumption \ref{hyp2}, for all $k\geq 1$, there exist $C,h_0>0$ such that for all $h\in(0,h_0)$ and all $z\in\mathbb{C}$ such that $\left|z-\mu_0^{[k]}\right|\leq c_0h$, and all $\psi\in\operatorname{Dom}\left(\tilde{\mathcal{N}}_h^{[k],\Phi}\right)$,
\begin{equation}
     \left\| \left[ \tilde{\mathcal{N}}_h^{[k]\Phi},\tau^k\right]\psi \right\|\leq C\left\| \langle \tau \rangle^{k-1}\left(\tilde{\mathcal{N}}_h^{[k],\Phi}-z\right)\psi \right\| +C\displaystyle\sum_{j=0}^{k-1}\left\| \tau^j\psi \right\|.
     \label{le2}
\end{equation}
\end{lemma}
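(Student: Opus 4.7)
The plan is to argue by induction on $k\geq 1$, with the estimate \eqref{le0} serving both as the base case ($k=1$) and as the main coercivity input in the induction step.

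First I would compute the commutator $[\tilde{\mathcal{N}}_h^{[k],\Phi},\tau^k]$ explicitly. The factor $\tau^k$ commutes with every multiplier depending only on $\sigma$, with $hD_\sigma$, and with the pieces $\gamma(\sigma)\tau^{k+1}/(k+1)$ and $h\tilde\delta(\sigma)\tau^{k+2}c_\mu/(k+2)$ of the magnetic potential. Moreover, the conjugations by $\mathrm{e}^{\Phi/h}$ and $\mathrm{e}^{-\mathrm{i}\mathfrak{g}(\sigma)/h}$ defining $\tilde{\mathcal{N}}_h^{[k],\Phi}$ act as multipliers in $\sigma$ only, so they do not alter the bracket. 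The sole nontrivial contribution therefore comes from the normal-diffusion piece $\mathfrak{a}_h^{-1}D_\tau\mathfrak{a}_hD_\tau$. Using $[D_\tau,\tau^k]=-\mathrm{i}k\tau^{k-1}$ together with $\mathfrak{a}_h^{-1}[D_\tau,\mathfrak{a}_h]=\mathrm{i}h\kappa(\sigma)(c_\mu+\tau c_\mu')(1+\mathcal{O}(h))$, a direct computation yields
\[
[\tilde{\mathcal{N}}_h^{[k],\Phi},\tau^k] \;=\; -2\mathrm{i}k\,\tau^{k-1}D_\tau \;-\; k(k-1)\,\tau^{k-2} \;+\; h\,\rho_k(\sigma,\tau)\,\tau^{k-1},
\]
with $\rho_k$ a uniformly bounded function of $(\sigma,\tau)$ (the cutoff $c_\mu$ keeps $h\tau$ controlled on its support). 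The $\tau^{k-2}$ and $h\tau^{k-1}$ terms fit directly into $\sum_{j=0}^{k-1}\|\tau^j\psi\|$, so the crux is to control $\|\tau^{k-1}D_\tau\psi\|$.

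For this step I would write $\tau^{k-1}D_\tau\psi = D_\tau(\tau^{k-1}\psi) + \mathrm{i}(k-1)\tau^{k-2}\psi$ and apply \eqref{le0} with $\phi:=\tau^{k-1}\psi$, getting
\[
\|D_\tau(\tau^{k-1}\psi)\| \;\leq\; C\,\|(\tilde{\mathcal{N}}_h^{[k],\Phi}-z)(\tau^{k-1}\psi)\| \;+\; C\,\|\tau^{k-1}\psi\|.
\]
Expanding $(\tilde{\mathcal{N}}_h^{[k],\Phi}-z)(\tau^{k-1}\psi)=\tau^{k-1}(\tilde{\mathcal{N}}_h^{[k],\Phi}-z)\psi+[\tilde{\mathcal{N}}_h^{[k],\Phi},\tau^{k-1}]\psi$ and invoking the induction hypothesis at order $k-1$ to bound the bracket, one obtains
\[
\|\tau^{k-1}D_\tau\psi\| \;\leq\; C\,\|\langle\tau\rangle^{k-1}(\tilde{\mathcal{N}}_h^{[k],\Phi}-z)\psi\| \;+\; C\sum_{j=0}^{k-1}\|\tau^j\psi\|.
\]
Substituting this bound into the commutator identity of the previous paragraph closes the induction. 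The base case $k=1$ is immediate, since the bracket there reduces to $-2\mathrm{i}D_\tau\psi + h\rho_1\psi$ and both terms are controlled by \eqref{le0} applied to $\psi$ itself.

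The only real obstacle is the bookkeeping of metric-type corrections generated by $\mathfrak{a}_h=1-h\tau c_\mu\kappa$. One has to verify that $\mathfrak{a}_h^{-1}$ is uniformly bounded on $\operatorname{supp}c_\mu$ for small $h$ (which follows from the choice of $\mu$ made in Section~\ref{section2}, since $|h\tau c_\mu|\leq Ch/\mu$ on this support), and that the subleading operator-valued symbols $hn_1, h^2n_2, \ldots$ appearing in $\tilde{\mathcal{N}}_h^{[k],\Phi}$ generate commutators with $\tau^k$ of strictly lower $\tau$-weight, each carrying an extra factor of $h$. All such contributions are then reabsorbed on the right-hand side via \eqref{le0} applied to lower powers of $\tau$ and via the inductive estimates.
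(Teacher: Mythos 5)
Your proposal is correct and follows essentially the same route as the paper: an explicit computation showing that only the normal term $\mathfrak{a}_h^{-1}D_\tau\mathfrak{a}_hD_\tau$ contributes to the bracket, the base case $k=1$ handled by \eqref{le0}, and an induction in which \eqref{le0} is applied to $\tau^{k-1}\psi$ and the bracket $[\tilde{\mathcal{N}}_h^{[k],\Phi},\tau^{k-1}]\psi$ is absorbed by the induction hypothesis. The only (harmless) difference is cosmetic: you record the $\mathfrak{a}_h$-correction as $h\rho_k(\sigma,\tau)\tau^{k-1}$, whereas the paper keeps it as $-k\tau^{k-1}\mathfrak{a}_h^{-1}(\partial_\tau\mathfrak{a}_h)$, and the lower-order symbols you worry about in the last paragraph in fact commute exactly with $\tau^k$ since they contain no $D_\tau$.
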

\begin{proof}
By calculating the commutator $\left[ \tilde{\mathcal{N}}_h^{[k]\Phi},\tau^k\right]=\left[ \mathfrak{a}_h^{-1}D_{\tau}\mathfrak{a}_hD_{\tau},\tau^k\right]$, and using the fact that $\left[D_{\tau},\tau^k\right]=\frac{k}{\mathrm{i}}\tau^{k-1}$ for $k\geq 1$, we have:
$$ \left[ \tilde{\mathcal{N}}_h^{[k]\Phi},\tau^k\right]=\begin{cases}
\frac{2}{\mathrm{i}}D_{\tau}-\mathfrak{a}_h^{-1}\left(\partial_{\tau}\mathfrak{a}_h\right) & \text{if $k=1$}, \\
\frac{2k}{\mathrm{i}}\tau^{k-1} D_{\tau}-k\tau^{k-1} \mathfrak{a}_h^{-1}(\partial_{\tau}\mathfrak{a}_h)-k(k-1)\tau^{k-2} & \text{if $k\geq 2$}.
\end{cases} $$
For $k=1$, using \eqref{le0}, we have
\begin{align*}
    \left\| \left[ \tilde{\mathcal{N}}_h^{[k]\Phi},\tau\right]\psi \right\|&\leq C\left\| D_{\tau}\psi \right\|+C\left\| \psi\right\| \\&\leq C\left\|\left(\tilde{\mathcal{N}}_h^{[k],\Phi}-z\right)\psi\right\|+\left\| \psi \right\|,
\end{align*}
for all $\psi\in\text{Dom}\left(\tilde{\mathcal{N}}_h^{[k],\Phi}\right)$. Then, \eqref{le2} is true for $k=1$.\\
By induction on $k\geq 1$, we assume that \eqref{le2} is true up to order $k-1$ and show that it is true for order $k$. In effect, for all $\psi\in\text{Dom}\left(\tilde{\mathcal{N}}_h^{[k],\Phi}\right)$, we get
\begin{align*}
    \left\| \left[ \tilde{\mathcal{N}}_h^{[k]\Phi},\tau^k\right]\psi \right\|&\leq C\left\|\tau^{k-1}D_{\tau}\psi\right\|+C\left\|\tau^{k-2}\psi\right\|+C\left\|\psi\right\|\\&\leq C \left\|D_{\tau}\left(\tau^{k-1}\psi\right)\right\|+C\left\|\tau^{k-2}\psi\right\|+C\left\|\psi\right\|,
\end{align*}
and by \eqref{le0}, we get
\begin{align*}
    \left\| \left[ \tilde{\mathcal{N}}_h^{[k]\Phi},\tau^k\right]\psi \right\|&\leq C\left\|\left(\tilde{\mathcal{N}_h^{[k],\Phi}}-z\right)\tau^{k-1}\psi\right\|+C\left\|\tau^{k-1}\psi\right\|+C\left\|\tau^{k-2}\psi\right\|+C\left\|\psi\right\|\\&\leq C\left\| \langle \tau \rangle^{k-1}\left(\tilde{\mathcal{N}}_h^{[k],\Phi}-z\right)\psi \right\|+C\left\| \left[ \tilde{\mathcal{N}}_h^{[k]\Phi},\tau^{k-1}\right]\psi \right\|+C\left\|\tau^{k-1}\psi\right\|\\&+C\left\|\tau^{k-2}\psi\right\|+C\left\|\psi\right\|.
\end{align*}
Using the induction hypothesis for order $k-1$, \eqref{le2} is true for order $k$.
\end{proof}
\begin{lemma}
Let $c_0>0$. Under Assumption \ref{hyp2}, for all $k\geq 1$, there exist $C,h_0>0$ such that for all $h\in(0,h_0)$ and all $z\in\mathbb{C}$ such that $\left|z-\mu_0^{[k]}\right|\leq c_0h$, and all $\psi\in\operatorname{Dom}\left(\tilde{\mathcal{N}}_h^{[k],\Phi}\right)$,
\begin{equation}
    \left\|\tau^k\psi\right\|\leq C\left\| \langle \tau \rangle^k\left(\tilde{\mathcal{N}}_h^{[k],\Phi}-z\right)\psi \right\|+C\left\|\psi\right\|.
    \label{le3}
\end{equation}
\end{lemma}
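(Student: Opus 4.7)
I would prove the bound by induction on $j \in \{0, 1, \ldots, k\}$, establishing the stronger statement
\[
\|\tau^j \psi\| \leq C_j \|\langle\tau\rangle^j (\tilde{\mathcal{N}}_h^{[k],\Phi} - z) \psi\| + C_j \|\psi\|
\]
for each $j$. The case $j = 0$ is trivial. For the inductive step, I set $u := \tau^j \psi$ (interpreted via a standard density argument on Schwartz-type test functions, since polynomial weights preserve the natural form domain) and apply \eqref{le0} to $u$ to get
\[
\|D_\tau u\| + \|(h D_\sigma + \mathfrak{w}(\sigma) - \gamma(\sigma) \tau^{k+1}/(k+1)) u\| \leq C \|(\tilde{\mathcal{N}}_h^{[k],\Phi} - z) u\| + C \|u\|.
\]
I then rewrite the resolvent term via the commutator identity $(\tilde{\mathcal{N}}_h^{[k],\Phi} - z) u = \tau^j (\tilde{\mathcal{N}}_h^{[k],\Phi} - z) \psi + [\tilde{\mathcal{N}}_h^{[k],\Phi}, \tau^j] \psi$, and control the commutator using \eqref{le2} of the preceding Lemma. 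The resulting sum $\sum_{i=0}^{j-1} \|\tau^i \psi\|$ is absorbed using the inductive hypothesis, leaving an upper bound of the form $C \|\langle\tau\rangle^j (\tilde{\mathcal{N}}_h^{[k],\Phi} - z) \psi\| + C \|\psi\| + C\|u\|$.

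To pass from bounds on the magnetic momentum of $u$ back to $\|u\|$ itself, I invoke the uniform spectral gap of the Montgomery operator from Section \ref{GMO}. By the scaling identity \eqref{VP}, the ground state energy of $\mathcal{M}_{\sigma,\xi}^{[k]} = D_\tau^2 + (\xi - \gamma(\sigma) \tau^{k+1}/(k+1))^2$ is uniformly bounded below by $\mu_0^{[k]} > 0$, uniformly in $(\sigma,\xi)$. Regarding $hD_\sigma$ as a self-adjoint operator commuting with the $\tau$-dependent multiplication operators, this lifts to the operator inequality
\[
\|D_\tau v\|^2 + \|(h D_\sigma + \mathfrak{g}'(\sigma) - \gamma(\sigma) \tau^{k+1}/(k+1)) v\|^2 \geq \mu_0^{[k]} \|v\|^2.
\]
Applied with $v = u$ and combined with the smallness bound $\|\Im \mathfrak{w}\|_\infty = \|\Phi'\|_\infty \ll 1$ provided by Assumption \ref{hyp2}, this converts the upper bound on the magnetic momentum into an upper bound on $\|u\|$.

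The hard part will be the final absorption of constants. Since $|z - \mu_0^{[k]}| \leq c_0 h$, combining the preceding steps gives a schematic inequality of the form
\[
\bigl(\mu_0^{[k]} - O(h) - C \|\Phi'\|_\infty^2\bigr) \|u\|^2 \leq C \|\langle\tau\rangle^j (\tilde{\mathcal{N}}_h^{[k],\Phi} - z) \psi\|^2 + C \|\psi\|^2,
\]
and the bracket on the left is strictly positive only if $\|\Phi'\|_\infty$ and $h$ are simultaneously small. The smallness hypothesis $\|1 - \gamma_0/\gamma\|_\infty \ll 1$ in Assumption \ref{hyp2} is precisely what guarantees that $\|\Phi'\|_\infty$ can be made as small as required; together with $h \in (0, h_0)$ with $h_0$ small enough, this closes the induction and yields the desired estimate.
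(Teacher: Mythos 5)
Your overall skeleton (commute $\tau^j$ through the operator, control the commutator with \eqref{le2}, induct on the power) matches the paper, but the step you use to recover $\|u\|=\|\tau^j\psi\|$ does not close, and this is precisely the crux. After applying \eqref{le0} to $u$ and expanding $(\tilde{\mathcal{N}}_h^{[k],\Phi}-z)u$ by the commutator identity, your right-hand side contains a term $C\|u\|$ whose constant is the generic (not small) constant of \eqref{le0}, plus a further $|z|\,\|u\|\approx\mu_0^{[k]}\|u\|$ if one passes between $\tilde{\mathcal{N}}_h^{[k],\Phi}u$ and $(\tilde{\mathcal{N}}_h^{[k],\Phi}-z)u$. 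On the left, the Montgomery band bound only gives $\mu^{[k]}(\sigma,\xi)\ge\mu_0^{[k]}$, so the best coercivity you can hope for is of size $\mu_0^{[k]}\|u\|^2$ up to errors; since $\Re z$ is within $c_0h$ of $\mu_0^{[k]}$, this lower bound is exactly offset by the spectral parameter, and the leftover defect is \emph{not} merely $O(h)+C\|\Phi'\|_\infty^2$ as your schematic inequality asserts --- the uncontrolled $C\|u\|$ coming from \eqref{le0} has simply been dropped. Smallness of $\|1-\gamma_0/\gamma\|_\infty$ (hence of $\|\Phi'\|_\infty$, which incidentally comes from Assumption \ref{Hyp0}(iii) and the choice of weights in Proposition \ref{poidss}, not from Assumption \ref{hyp2} itself) and smallness of $h$ cannot make that constant small compared with $\sqrt{\mu_0^{[k]}}$, so the absorption fails; indeed no argument based solely on the band minimum can yield $\|u\|\lesssim\|(\tilde{\mathcal{N}}_h^{[k],\Phi}-z)u\|$ uniformly, because $z$ sits at the bottom of that band. (A side remark: $hD_\sigma$ does not commute with the $\sigma$-dependent coefficients $\gamma(\sigma)$, $\mathfrak{g}'(\sigma)$, so your ``operator inequality'' needs a frozen-coefficient argument with its own $O(h)$ errors; this is repairable, unlike the absorption.)

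By contrast, the paper's proof does not try to manufacture coercivity from the Montgomery lower bound at all: it takes as its starting point the a priori estimate $\|\tau^k\psi\|\le C\|(\tilde{\mathcal{N}}_h^{[k],\Phi}-z)(\tau^k\psi)\|$, then writes $(\tilde{\mathcal{N}}_h^{[k],\Phi}-z)(\tau^k\psi)=\tau^k(\tilde{\mathcal{N}}_h^{[k],\Phi}-z)\psi+[\tilde{\mathcal{N}}_h^{[k],\Phi},\tau^k]\psi$, bounds the commutator by \eqref{le2}, and concludes by induction on $k$. So the missing ingredient in your write-up is exactly that resolvent-type input (or some substitute exploiting the weight $\tau^k$ beyond the band minimum); without it, the chain of inequalities you propose terminates in an estimate of $\|u\|$ by itself with a constant you cannot beat.
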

\begin{proof}
The proof is quite simple by noting that 
$$ \left\|\tau^k\psi\right\|\leq C\left\| \left(\tilde{\mathcal{N}}_h^{[k],\Phi}-z\right)\left(\tau^k\psi\right) \right\|\leq C\left\| \langle \tau \rangle^k\left(\tilde{\mathcal{N}}_h^{[k],\Phi}-z\right)\psi \right\|+C  \left\| \left[ \tilde{\mathcal{N}}_h^{[k]\Phi},\tau^k\right]\psi \right\|,$$
and using \eqref{le2}, we get
$$ \left\|\tau^k\psi\right\|\leq   C\left\| \langle \tau \rangle^k\left(\tilde{\mathcal{N}}_h^{[k],\Phi}-z\right)\psi \right\|+C\displaystyle\sum_{j=0}^{k-1}\left\| \tau^j\psi \right\|. $$
By induction on $k\geq 1$, \eqref{le3} is well established.
\end{proof}
\begin{prop}
Let $N\in\mathbb{N}$, $c_0>0$. Under Assumption \ref{hyp2}, there exist $C,h_0>0$ such that for all $h\in(0,h_0)$ and all $z\in\mathbb{C}$ such that $\left|z-\mu_0^{[k]}\right|\leq c_0h$, and all $\psi\in\operatorname{Dom}\left(\tilde{\mathcal{N}}_h^{[k],\Phi}\right)$,
\begin{align*}
& \left\| \operatorname{Op}_h^{\operatorname{W}}\chi_2\psi\right\|+\left\| D_{\tau} \operatorname{Op}_h^{\operatorname{W}}\chi_2\psi\right\|+\left\| hD_{\sigma} \operatorname{Op}_h^{\operatorname{W}}\chi_2\psi\right\|+\left\| D_{\tau}^2 \operatorname{Op}_h^{\operatorname{W}}\chi_2\psi\right\| +\left\| (hD_{\sigma})^2 \operatorname{Op}_h^{\operatorname{W}}\chi_2\psi\right\| \\& +\left\| \tau^{k+1}hD_{\sigma} \operatorname{Op}_h^{\operatorname{W}}\chi_2\psi\right\|\leq C\left\| \langle\tau\rangle^{2k+2} \left( \tilde{\mathcal{N}}_h^{[k],\Phi}-z\right)\psi\right\|+\mathcal{O}\left( h^N \right) \left\| \psi \right\|. 
\end{align*}
\label{propp}
\end{prop}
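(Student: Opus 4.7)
The plan is to bootstrap from Lemma \ref{Lemm2}, which already controls $D_\tau$, $D_\tau^2$, and the magnetic tangential derivative $\bigl(hD_\sigma - \gamma(\sigma)\tau^{k+1}/(k+1)\bigr)$ together with its square on $\operatorname{Op}_h^{\operatorname{W}}\chi_2\psi$. The remaining quantities $hD_\sigma$, $(hD_\sigma)^2$ and $\tau^{k+1}hD_\sigma$ will be recovered by splitting
\[
hD_\sigma = \Bigl(hD_\sigma - \gamma(\sigma)\tfrac{\tau^{k+1}}{k+1}\Bigr) + \gamma(\sigma)\tfrac{\tau^{k+1}}{k+1},
\]
and absorbing the extra factor $\tau^{k+1}$ (and $\tau^{2k+2}$, after squaring) using polynomial-weight estimates of the type \eqref{le3}.

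First I would iterate \eqref{le3} up to exponent $2k+2$, obtaining for every $0\leq j\leq 2k+2$ and every $u\in\operatorname{Dom}(\tilde{\mathcal{N}}_h^{[k],\Phi})$,
\[
\|\tau^j u\| \leq C\bigl\|\langle\tau\rangle^j\bigl(\tilde{\mathcal{N}}_h^{[k],\Phi}-z\bigr)u\bigr\| + C\|u\|,
\]
and then apply it with $u=\operatorname{Op}_h^{\operatorname{W}}\chi_2\psi$. Because $\operatorname{Op}_h^{\operatorname{W}}\chi_2$ acts only in the tangential variable $\sigma$, it commutes with both $\tau^j$ and $D_\tau$, so the commutator $[\tilde{\mathcal{N}}_h^{[k],\Phi},\operatorname{Op}_h^{\operatorname{W}}\chi_2]$ comes exclusively from the $\sigma$-dependent coefficients $\gamma$, $\tilde{\delta}$, $\kappa$. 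It is an operator of order $h$ with polynomial-in-$\tau$ symbol, and as in the proof of Lemma \ref{Lemm2} its iterates can be absorbed up to $\mathcal{O}(h^N)\|\psi\|$ for any $N\in\mathbb{N}$.

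The bound on $\|hD_\sigma \operatorname{Op}_h^{\operatorname{W}}\chi_2\psi\|$ is then immediate: Lemma \ref{Lemm2} handles the first piece of the splitting, and the weight estimate with $j=k+1$ handles the second. For the square term I would expand
\[
(hD_\sigma)^2 = \Bigl(hD_\sigma - \gamma\tfrac{\tau^{k+1}}{k+1}\Bigr)^2 + hD_\sigma\cdot\gamma\tfrac{\tau^{k+1}}{k+1} + \gamma\tfrac{\tau^{k+1}}{k+1}\cdot hD_\sigma - \Bigl(\gamma\tfrac{\tau^{k+1}}{k+1}\Bigr)^2,
\]
using $hD_\sigma\cdot\gamma\tau^{k+1}/(k+1) = \gamma\tau^{k+1}/(k+1)\cdot hD_\sigma + (h/\mathrm{i})\gamma'(\sigma)\tau^{k+1}/(k+1)$. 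The pure square is handled by Lemma \ref{Lemm2}, the term $(\gamma\tau^{k+1}/(k+1))^2$ is $\tau^{2k+2}$ times a bounded function and is absorbed by the weight estimate with $j=2k+2$, and the cross terms produce precisely the quantity $\tau^{k+1}hD_\sigma\operatorname{Op}_h^{\operatorname{W}}\chi_2\psi$ appearing in the statement. To close the argument I split this last term once more,
\[
\tau^{k+1}hD_\sigma = \tau^{k+1}\Bigl(hD_\sigma - \gamma\tfrac{\tau^{k+1}}{k+1}\Bigr) + \gamma\tfrac{\tau^{2k+2}}{k+1},
\]
apply Lemma \ref{Lemm2} to $\tau^{k+1}\operatorname{Op}_h^{\operatorname{W}}\chi_2\psi$ (so that the magnetic derivative falls on this weighted function, with the commutator $[\tilde{\mathcal{N}}_h^{[k],\Phi},\tau^{k+1}]$ rewritten via \eqref{le2} as a combination of strictly lower powers of $\tau$ together with $D_\tau$), and estimate the remaining $\tau^{2k+2}$ factor by the weight bound.

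The main obstacle is the careful bookkeeping of the $\tau$-weights. Each commutator with $\tau^{k+1}$ generates a term containing $D_\tau$ and a strictly smaller power of $\tau$, which must itself be re-estimated by Lemma \ref{Lemm2} applied to a lower-weighted function, and so on. The iteration terminates in finitely many steps since the exponent of $\tau$ strictly decreases at each stage, and the accumulated weights never exceed $\langle\tau\rangle^{2k+2}$; this explains why the final inequality carries exactly the weight $\langle\tau\rangle^{2k+2}$ in front of the source term $(\tilde{\mathcal{N}}_h^{[k],\Phi}-z)\psi$, with every lower-order contribution lumped into the $\mathcal{O}(h^N)\|\psi\|$ remainder.
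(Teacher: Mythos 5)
Your first two steps (the weighted bounds \eqref{i7}--\eqref{i8} obtained by applying \eqref{le3} to $\operatorname{Op}_h^{\operatorname{W}}\chi_2\psi$ and absorbing the commutator with $\operatorname{Op}_h^{\operatorname{W}}\chi_2$ as in Lemma \ref{Lemm2}, then the splitting $hD_\sigma=\bigl(hD_\sigma-\gamma\tfrac{\tau^{k+1}}{k+1}\bigr)+\gamma\tfrac{\tau^{k+1}}{k+1}$) are exactly the paper's. The gap is in your treatment of $\tau^{k+1}hD_\sigma$ (and hence of the cross terms in your expansion of $(hD_\sigma)^2$), which the paper delegates to \cite[Proposition 5.6]{BHR-purely}: you apply Lemma \ref{Lemm2} with $\psi$ replaced by $\tau^{k+1}\psi$ and then rewrite the commutator $[\tilde{\mathcal{N}}_h^{[k],\Phi},\tau^{k+1}]$ via \eqref{le2} applied to the bare $\psi$. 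But \eqref{le2}, and the estimate \eqref{le3} you then need for the intermediate powers, produce $C\sum_{j=0}^{k}\|\tau^j\psi\|\geq C\|\psi\|$ with a constant independent of $h$: the commutators $[\tilde{\mathcal{N}}_h^{[k],\Phi},\tau^{j}]$ contain $\tau^{j-1}D_\tau$ with no factor of $h$ whatsoever, so iterating on decreasing powers of $\tau$ can never convert these remainders into $\mathcal{O}(h^N)\|\psi\|$. Your route therefore only yields the inequality with an $\mathcal{O}(1)\|\psi\|$ remainder, which is strictly weaker than the statement and is not sufficient for the proof of Theorem \ref{theo2}, where the left-hand side $ch\|\psi\|$ has to dominate all remainders.

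The mechanism that produces the $\mathcal{O}(h^N)\|\psi\|$ remainder is that every $\tau$-commutator must stay attached to the frequency-truncated function: since $\tau^{k+1}$ commutes with $\operatorname{Op}_h^{\operatorname{W}}\chi_2$, one should bound the magnetic derivative of $\tau^{k+1}\operatorname{Op}_h^{\operatorname{W}}\chi_2\psi=\operatorname{Op}_h^{\operatorname{W}}\chi_2(\tau^{k+1}\psi)$ by the coercivity of Lemma \ref{Lemm1} (whose right-hand side keeps $(\tilde{\mathcal{N}}_h^{[k],\Phi}-z)$ acting on the truncated function), and then commute $\tau^{k+1}$ through $(\tilde{\mathcal{N}}_h^{[k],\Phi}-z)$ \emph{against} $\operatorname{Op}_h^{\operatorname{W}}\chi_2\psi$. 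The lower-order terms are then $\|\tau^{j}\operatorname{Op}_h^{\operatorname{W}}\chi_2\psi\|$ and $\|D_\tau\operatorname{Op}_h^{\operatorname{W}}\chi_2\psi\|$, which are upgraded by \eqref{i7}, \eqref{i8} and Lemma \ref{Lemm2} into $C\|\langle\tau\rangle^{2k+2}(\tilde{\mathcal{N}}_h^{[k],\Phi}-z)\psi\|+\mathcal{O}(h^N)\|\psi\|$; the only source of the $h^N$ gain is the iteration on $[\tilde{\mathcal{N}}_h^{[k],\Phi},\operatorname{Op}_h^{\operatorname{W}}\chi_2]$, never the $\tau$-commutators. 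If you reorganize the bookkeeping so that the powers of $\tau$ are never peeled past $\operatorname{Op}_h^{\operatorname{W}}\chi_2$ onto the un-truncated $\psi$, your argument closes and coincides with the cited one.
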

\begin{proof}
By applying \eqref{le3} to $\operatorname{Op}_h^{\operatorname{W}}\chi_2\psi$, we have
$$ \left\|\tau^{k+1}\operatorname{Op}_h^{\operatorname{W}}\chi_2\psi\right\|\leq   C\left\| \langle \tau \rangle^{k+1}\left(\tilde{\mathcal{N}}_h^{[k],\Phi}-z\right)\operatorname{Op}_h^{\operatorname{W}}\chi_2\psi \right\| +C\left\|\operatorname{Op}_h^{\operatorname{W}}\chi_2\psi\right\|,$$
and by calculating the commutator $\left[\tilde{\mathcal{N}}_h^{[k],\Phi},\operatorname{Op}_h^{\operatorname{W}}\chi_2 \right]$ and using Lemma \ref{Lemm2}, we get
\begin{equation}
    \left\|\tau^{k+1}\operatorname{Op}_h^{\operatorname{W}}\chi_2\psi\right\|\leq   C\left\| \langle \tau \rangle^{k+1}\left(\tilde{\mathcal{N}}_h^{[k],\Phi}-z\right)\psi \right\| +\mathcal{O}(h^N) \left\| \psi\right\|.
    \label{i7}
\end{equation}
Likewise, we get
\begin{equation}
    \left\|\tau^{2k+2}\operatorname{Op}_h^{\operatorname{W}}\chi_2\psi\right\|\leq   C\left\| \langle \tau \rangle^{2k+2}\left(\tilde{\mathcal{N}}_h^{[k],\Phi}-z\right)\psi \right\| +\mathcal{O}(h^N) \left\| \psi\right\|.
    \label{i8}
\end{equation}
Since $\gamma$ is bounded, then using \eqref{i7} and Lemma \ref{Lemm2}, we get
\begin{align*}
    \left\| hD_{\sigma}\operatorname{Op}_h^{\operatorname{W}}\chi_2\psi\right\|&\leq \left\| \left( hD_{\sigma}-\gamma(\sigma)\frac{\tau^{k+1}}{k+1}\right) \operatorname{Op}_h^{\operatorname{W}}\chi_2\psi\right\|+\left\| \gamma(\sigma)\frac{\tau^{k+1}}{k+1} \operatorname{Op}_h^{\operatorname{W}}\chi_2\psi\right\| \\& \leq C\left\| \langle \tau \rangle^{k+1}\left(\tilde{\mathcal{N}}_h^{[k],\Phi}-z\right)\psi \right\| +\mathcal{O}(h^N) \left\| \psi\right\|.
\end{align*}
Likewise as the proof of \cite[Proposition 5.6]{BHR-purely}, we have
$$ \left\| \left( hD_{\sigma} \right)^2 \operatorname{Op}_h^{\operatorname{W}}\chi_2\psi\right\|\leq C\left\| \langle\tau\rangle^{2k+2}\left( \mathcal{N}_h^{[k],\Phi}-z\right)\psi\right\| + \mathcal{O}\left(h^N\right)\left\| \psi \right\|,  $$
and
$$ \left\| \tau^{k+1} hD_{\sigma}  \operatorname{Op}_h^{\operatorname{W}}\chi_2\psi\right\|\leq C\left\| \langle\tau\rangle^{2k+2}\left( \mathcal{N}_h^{[k],\Phi}-z\right)\psi\right\| + \mathcal{O}\left(h^N\right)\left\| \psi \right\|. $$
\end{proof}
We now have all the elements to prove Theorem \ref{Theo}. Using the result of Proposition \ref{propp} and like the same strategy from the proof of \cite[Theorem 5.1]{BHR-purely}, we get
$$ ch\left\|\psi\right\|\leq \left\|\langle\tau\rangle^{2k+3}\left(\tilde{\mathcal{N}}_h^{[k],\Phi}-z\right)\psi\right\|+h\left\|\chi_0\left(\frac{\sigma-s_r}{Rh^{1/2}}\right)\psi\right\|, $$
for all $\psi\in\text{Dom}\left(\tilde{ \mathcal{N}}_h^{[k],\Phi} \right)$. But we use the fact that $\tilde{\mathcal{N}}_h^{[k],\Phi}=\mathrm{e}^{-\frac{\mathrm{i}\mathfrak{g}(\sigma)}{h}}\mathcal{N}_h^{[k],\Phi}\mathrm{e}^{\frac{\mathrm{i}\mathfrak{g}(\sigma)}{h}}$, then Theorem \ref{Theo} is well established.

\section{Optimal tangential Agmon estimates}
\label{Section 7}
\subsection{Agmon estimates}
Let us give the optimal Agmon estimates for the eigenfunctions of the two operators $\mathcal{N}_{h,r}^{[k]}$ and $\mathcal{N}_h^{[k]}$. The following corollary is a consequence of Theorem \ref{theo2}.
\begin{cor}\textbf{(Single well)}
Let $c_0>0$. Under Assumption \ref{hyp2}, there exist $C,h_0>0$ such that for all $h\in(0,h_0)$ and all $\lambda$ eigenvalue of $\mathcal{N}_{h,r}^{[k]}$ such that $\left|\lambda-\mu_0^{[k]}\right|\leq c_0h$, and all associated eigenfunction $\Psi\in \operatorname{Dom}\left( \mathcal{N}_{h,r}^{[k]} \right)$,
$$ \displaystyle \int_{\mathbb{R}\times\mathbb{R}} \mathrm{e}^{\frac{2\Phi}{h}} \left|\Psi\right|^2dsdt \leq C\left\| \Psi \right\|_{L^2\left(\mathbb{R}\times\mathbb{R}\right)}^2. $$
\label{single}
\end{cor}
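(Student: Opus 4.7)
The plan is to deduce Corollary~\ref{single} from Theorem~\ref{theo2} by the standard Agmon conjugation trick. Let $\Psi$ be a normalised eigenfunction of $\mathcal{N}_{h,r}^{[k]}=\mathcal{N}_h^{[k]}$ associated with an eigenvalue $\lambda$ satisfying $|\lambda-\mu_0^{[k]}|\leq c_0 h$, and set $\psi := e^{\Phi/h}\Psi$. Since $\mathcal{N}_h^{[k],\Phi}=e^{\Phi/h}\mathcal{N}_h^{[k]}e^{-\Phi/h}$, the identity $\mathcal{N}_h^{[k]}\Psi=\lambda\Psi$ rewrites as $(\mathcal{N}_h^{[k],\Phi}-\lambda)\psi=0$. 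Provided $\psi$ belongs to $\operatorname{Dom}(\mathcal{N}_h^{[k],\Phi})$, applying Theorem~\ref{theo2} with $z=\lambda$ kills the source term on the right-hand side and produces
\begin{equation*}
c\,\|\psi\|\;\leq\;\bigl\|\chi_0\bigl((\sigma-s_r)/(Rh^{1/2})\bigr)\,\psi\bigr\|.
\end{equation*}
On the support of $\chi_0\bigl((\sigma-s_r)/(Rh^{1/2})\bigr)$ one has $|\sigma-s_r|\leq Rh^{1/2}$, so Assumption~\ref{hyp2}(iii) gives $\Phi(\sigma)\leq Mh$ and hence $e^{\Phi(\sigma)/h}\leq e^M$ uniformly. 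The right-hand side is therefore bounded by $e^M\|\Psi\|$ and squaring yields
\begin{equation*}
\displaystyle\int_{\mathbb{R}\times\mathbb{R}} e^{2\Phi/h}|\Psi|^2\,d\sigma\,d\tau \;=\;\|\psi\|^2 \;\leq\; C\|\Psi\|^2,
\end{equation*}
which is exactly the announced Agmon estimate.

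The only real obstacle is to justify that $\psi=e^{\Phi/h}\Psi$ is actually a legitimate test function for $\mathcal{N}_h^{[k],\Phi}$, since Assumption~\ref{hyp2} does not force $\Phi$ to be bounded and a priori $\psi$ may fail to be square-integrable. I would resolve this by a standard truncation: replace $\Phi$ by the bounded Lipschitz weight $\Phi_N:=\min(\Phi,N)$, $N\in\mathbb{N}^{\ast}$, for which $e^{\Phi_N/h}\Psi$ evidently belongs to $\operatorname{Dom}(\mathcal{N}_h^{[k],\Phi_N})$ because multiplication by the bounded function $e^{\Phi_N/h}$ preserves the relevant weighted Sobolev spaces. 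The key point is that $\Phi_N$ inherits Assumption~\ref{hyp2} with constants independent of $N$: condition~(iii) is automatic from $\Phi_N\leq \Phi$; on $\{\Phi<N\}$ one has $\Phi_N'=\Phi'$ and (i)--(ii) are immediate; on $\{\Phi\geq N\}$ the derivative $\Phi_N'$ vanishes, so the inequalities reduce to $\gamma^{\frac{2}{k+2}}\nu^{[k]}(\xi_0^{[k]}-\Im\varphi)\geq \mu_0^{[k]}$, which follows from $\gamma\geq\gamma_0$ and from the fact that $\nu^{[k]}$ attains its infimum at $\xi_0^{[k]}$, the smallness of $\Im\varphi$ enforced by Assumption~\ref{Hyp0}(iii) being used to secure the strict lower bound $Mh$ needed in (ii).

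Applying the first paragraph to $\Phi_N$ then yields $\int e^{2\Phi_N/h}|\Psi|^2\,d\sigma\,d\tau\leq C\|\Psi\|^2$ with $C$ independent of $N$, and the monotone convergence theorem as $N\to+\infty$ delivers the claim for $\Phi$ itself. Most of the analytic content has been packed into Theorem~\ref{theo2}; what remains is really just the conjugation identity, the observation on the support of $\chi_0$ provided by Assumption~\ref{hyp2}(iii), and this bookkeeping truncation argument to handle the unbounded weights.
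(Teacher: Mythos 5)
Your argument is essentially the paper's proof: apply Theorem \ref{theo2} with $\psi=\mathrm{e}^{\Phi/h}\Psi$ and $z=\lambda$, so the right-hand side term $\left\|\langle\tau\rangle^{2k+3}\left(\mathcal{N}_h^{[k],\Phi}-z\right)\psi\right\|$ vanishes, and then use Assumption \ref{hyp2}(iii) to bound $\mathrm{e}^{\Phi/h}$ on the support of $\chi_0\left((\sigma-s_r)/(Rh^{1/2})\right)$. Your truncation $\Phi_N=\min(\Phi,N)$ is extra care about possibly unbounded weights that the paper silently skips, and it is sound, except that your verification of condition (ii) for $\Phi_N$ on $\{\Phi\geq N\}$ should not rest on the smallness of $\Im\varphi$ (which together with $\gamma\geq\gamma_0$ only gives nonnegativity); it follows instead from the observation that $\{\Phi\geq N\}$ lies at an $h$-independent distance from $s_r$, where $\gamma-\gamma_0$ is bounded below, or from condition (ii) for $\Phi$ itself combined with the fact that removing the imaginary argument $\mathrm{i}\gamma^{-\frac{1}{k+2}}\Phi'$ only increases the real part of $\nu^{[k]}$.
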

\begin{proof}
By applying Theorem \ref{theo2} with $\psi=\mathrm{e}^{\frac{\Phi}{h}}\Psi$ and $z=\lambda$, we get
$$ c\left\| \mathrm{e}^{\frac{\Phi}{h}}\Psi \right\|\leq  \left\|\chi_0\left( \frac{\sigma-s_r}{Rh^{1/2}} \right)\mathrm{e}^{\frac{\Phi}{h}} \Psi \right\|. $$
Since the function $\frac{\Phi}{h}$ is bounded on Supp$\left(\sigma\mapsto \chi_0\left( \frac{\sigma-s_r}{Rh^{1/2}} \right) \right)$, then
$$ \left\| \mathrm{e}^{\frac{\Phi}{h}}\Psi\right\| \leq C\left\| \Psi \right\|. $$
\end{proof}
We recall that the two Agmon distances for the two single-well operators $\mathcal{N}_{h,r}^{[k]}$ and $\mathcal{N}_{h,l}^{[k]}$, are respectively given by
$$ \Phi_r(\sigma)=\displaystyle\int_{s_r}^{\sigma}\gamma_r(\tilde{\sigma})^{\frac{1}{k+2}}\Re \varphi_r(\tilde{\sigma}) d\tilde{\sigma} \text{  and  } \Phi_l(\sigma)=\displaystyle\int_{s_l}^{\sigma}\gamma_l(\tilde{\sigma})^{\frac{1}{k+2}}\Re \varphi_l(\tilde{\sigma}) d\tilde{\sigma}. $$

We will consider the operator $\mathcal{N}_h^{[k]}$ with two wells defined on $L^2\left( \Gamma \times \mathbb{R}\right)\sim L^2\left( [-L,L) \times \mathbb{R}\right)$. For $\hat{\eta}>0$ small enough, we denote by
$$ \text{B}_r(\hat{\eta}):=\text{B}(s_r,\hat{\eta})=(s_r-\hat{\eta},s_r+\hat{\eta}),\text{ and } \text{B}_l(\hat{\eta}):=\text{B}(s_l,\hat{\eta})=(s_l-\hat{\eta},s_l+\hat{\eta}). $$

We define the two $2L$-periodic functions on $[-L,+L)$ so that
$$\tilde{\Phi}_r(\sigma)=
\begin{cases}
\Phi_r(\sigma) & \text{if} \,\,\,-L\leq \sigma \leq s_l-\hat{\eta}\,\,, \\
\Phi_r(\sigma-2L) & \text{if} \,\,\,s_l+\hat{\eta}\leq\sigma\leq L\,\,,
\end{cases}$$
and
$$\tilde{\Phi}_l(\sigma)=
\begin{cases}
\Phi_l(\sigma+2L) & \text{if} \,\,\,-L\leq \sigma \leq s_r-\hat{\eta}\,\,, \\
\Phi_l(\sigma) & \text{if} \,\,\,s_r+\hat{\eta}\leq\sigma\leq L\,\,,
\end{cases}$$
and that $\tilde{\Phi}_r>\tilde{\Phi}_l$ on $\text{B}_l(\eta)$ and $\tilde{\Phi}_l>\tilde{\Phi}_r$ on $\text{B}_r(\eta)$.

For $\theta\in (0,1)$, we define the function $\phi$ on $\Gamma$ by
$$ \phi=\sqrt{1-\theta}\text{min}\left( \tilde{\Phi}_r,\tilde{\Phi}_l\right). $$
\begin{prop}\textbf{(Double well)}
Let $\epsilon>0$ and $\hat{\eta}>0$ small enough. There exist $C,h_0>0$ such that for all $h\in(0,h_0)$ and all $\lambda$ eigenvalue of $\mathcal{N}_h^{[k]}$ such that $\left|\lambda-\mu_0^{[k]}\right|\leq c_0h$, and all associated eigenfunction $u\in \operatorname{Dom}\left( \mathcal{N}_h^{[k]} \right)$,
$$ \displaystyle \int_{[-L,L)\times\mathbb{R}}\mathrm{e}^{\frac{2\phi}{h}}\left| u \right| ^2d\sigma d\tau \leq C\mathrm{e}^{\frac{\epsilon}{h}}\left\| u\right\| _{L^2\left( [-L,L)\times\mathbb{R}\right)}. $$
\label{double}
\end{prop}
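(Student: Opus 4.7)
The plan is to reduce the double-well estimate to the single-well tangential coercivity of Theorem \ref{theo2} through an IMS-type partition of unity adapted to the two wells. Introduce smooth $2L$-periodic functions $\chi_r,\chi_l$ on $\Gamma$ satisfying $\chi_r^2+\chi_l^2=1$, with $\chi_r\equiv 1$ on $\mathrm{B}_r(\hat\eta/2)$, $\chi_l\equiv 1$ on $\mathrm{B}_l(\hat\eta/2)$, and $\operatorname{supp}(\chi_r')\cup\operatorname{supp}(\chi_l')\subset K$, where $K:=\Gamma\setminus(\mathrm{B}_r(\hat\eta/4)\cup\mathrm{B}_l(\hat\eta/4))$ is the intermediate region. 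Splitting
\begin{equation*}
\int_{[-L,L)\times\mathbb{R}}e^{2\phi/h}|u|^2\,d\sigma d\tau = \int e^{2\phi/h}\chi_r^2|u|^2 + \int e^{2\phi/h}\chi_l^2|u|^2
\end{equation*}
and using the symmetry $U$ defined in \eqref{sym}, it suffices to bound the first piece.

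On $\operatorname{supp}\chi_r\subset I_{r,\hat\eta}$, the extensions $\gamma_r,\delta_r,\kappa_r$ coincide with $\gamma,\delta,\kappa$ by construction of $\mathcal{N}_{h,r}^{[k]}$ in Section \ref{Right well}. Hence $\mathcal{N}_h^{[k]}$ and $\mathcal{N}_{h,r}^{[k]}$ act identically on $\chi_r u$, so
\begin{equation*}
(\mathcal{N}_{h,r}^{[k]}-\lambda)(\chi_r u)=[\mathcal{N}_{h,r}^{[k]},\chi_r]u=:r_h,
\end{equation*}
where $r_h$ is a first-order differential operator in $\sigma$ with $\mathcal{O}(h)$ coefficients, supported in $K$. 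Apply Theorem \ref{theo2} with the admissible weight $\Phi:=\sqrt{1-\theta}\,\Phi_r$ (Proposition \ref{poidss}(a) with $\epsilon=\theta$), taking $\psi=e^{\Phi/h}\chi_r u$ and $z=\lambda$:
\begin{equation*}
ch\bigl\|e^{\Phi/h}\chi_r u\bigr\|\leq\bigl\|\langle\tau\rangle^{2k+3}e^{\Phi/h}r_h\bigr\|+h\Bigl\|\chi_0\!\bigl(\tfrac{\sigma-s_r}{Rh^{1/2}}\bigr)e^{\Phi/h}\chi_r u\Bigr\|.
\end{equation*}
On $\operatorname{supp}\chi_0$ one has $\Phi=\mathcal{O}(h)$ by Assumption \ref{hyp2}(iii), so the last term is bounded by $Ch\|u\|$.

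The commutator $r_h$ is supported in $K$, on which $\Phi\leq\Phi_r^{\max}:=\sqrt{1-\theta}\max_K\Phi_r$. Combining the normal Agmon estimate of Proposition \ref{estimations normale d'agmon} (which controls the polynomial weight $\langle\tau\rangle^{2k+3}u$ uniformly in $h$) with a local elliptic bound for $u$ on a neighborhood of $K$ obtained by reapplying Theorem \ref{theo2} to $u$ itself on the single-well extensions, one gets $\|\langle\tau\rangle^{2k+3}e^{\Phi/h}r_h\|\leq Ch^{-M}e^{\Phi_r^{\max}/h}\|u\|$ for some fixed $M\in\mathbb{N}$. The symmetric argument on $\chi_l u$ gives the analogous bound with $\Phi_l^{\max}:=\sqrt{1-\theta}\max_K\Phi_l$. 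Since $\phi\leq\sqrt{1-\theta}\tilde\Phi_j$ on $\operatorname{supp}\chi_j$ by construction, we conclude
\begin{equation*}
\int_{[-L,L)\times\mathbb{R}}e^{2\phi/h}|u|^2\,d\sigma d\tau\leq Ch^{-2M-2}\bigl(e^{2\Phi_r^{\max}/h}+e^{2\Phi_l^{\max}/h}\bigr)\|u\|^2.
\end{equation*}
Choosing $\hat\eta$ small enough and $\theta$ close enough to $1$ that $2\max(\Phi_r^{\max},\Phi_l^{\max})<\epsilon/2$, the polynomial prefactor $h^{-2M-2}$ is absorbed into the remaining $e^{\epsilon/h}$ slack, and the proposition follows. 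The main obstacle is the uniform control of $\|\langle\tau\rangle^{2k+3}e^{\Phi/h}r_h\|$: the derivative present in $r_h$ must be tamed using the tangential coercivity, while the polynomial $\tau$-weight requires the normal Agmon estimates of Proposition \ref{estimations normale d'agmon}, and the two must be combined without inflating the exponential factor beyond the budget provided by the strict inequality $\phi<\tilde\Phi_j$ on $K$.
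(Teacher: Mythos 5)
There is a genuine gap, and it lies in the geometry of your partition of unity. You place the supports of $\chi_r'$ and $\chi_l'$ in the intermediate region $K=\Gamma\setminus(\mathrm{B}_r(\hat\eta/4)\cup\mathrm{B}_l(\hat\eta/4))$, i.e.\ essentially everywhere away from the two wells. On that region the weight is \emph{not} small: at the symmetry points $\sigma=0$ and $\sigma=\pm L$ one has $\phi$ comparable to $\sqrt{1-\theta}\,\mathrm{S}_u/2$ (a fixed positive constant), so your commutator term necessarily carries the factor $\mathrm{e}^{\Phi_r^{\max}/h}$ with $\Phi_r^{\max}=\sqrt{1-\theta}\max_K\Phi_r$ bounded below independently of $\hat\eta$. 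Your proposed fix does not work: shrinking $\hat\eta$ only enlarges $K$, so $\max_K\Phi_r$ does not decrease, and pushing $\theta$ close to $1$ is not available to you, because $\theta$ enters the definition of $\phi$ in the statement being proved — it is a parameter fixed before the proposition, not a free parameter of the proof — and in any case a weight $\phi$ with $\theta$ near $1$ would be useless for the tunneling analysis of Section \ref{section9}, where the proposition is applied with $\phi$ close to the full Agmon distance to produce the $\tilde{\mathcal{O}}\bigl(\mathrm{e}^{-\mathrm{S}/h}\bigr)$ bounds. So as written your final estimate is $\mathrm{e}^{2\Phi_r^{\max}/h}\|u\|^2$, which is exponentially weaker than the claimed $\mathrm{e}^{\epsilon/h}\|u\|^2$.

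The paper avoids this by a different choice of cutoff: $\chi_r$ is cut only near the \emph{opposite} well, $\chi_r\equiv 0$ on $\mathrm{B}_l(\hat\eta)$ and $\chi_r\equiv 1$ on $\Gamma\setminus\mathrm{B}_l(2\hat\eta)$, so that $\operatorname{supp}\chi_r'$ lies in the annulus $\hat\eta\le|\sigma-s_l|\le 2\hat\eta$. There $\tilde\Phi_l$ vanishes quadratically at $s_l$, hence $\phi=\sqrt{1-\theta}\min(\tilde\Phi_r,\tilde\Phi_l)\le\sqrt{1-\theta}\,\tilde\Phi_l=\mathcal{O}(\hat\eta^2)\le\epsilon/2$ for $\hat\eta$ small: it is precisely the $\min$ structure of $\phi$, evaluated near the other well, that makes the commutator cheap, costing only $\mathrm{e}^{\epsilon/(2h)}$. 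One then applies Theorem \ref{theo2} (with $\phi$ extended to $\mathbb{R}$ so as to satisfy Assumption \ref{hyp2}), controls $\|\langle\tau\rangle^{2k+3}[\mathcal{N}_h^{[k]},\chi_r]u\|$ by the normal Agmon estimates, and finally covers $(-L,0)$ and $(0,L)$ by $\chi_r$ and $\chi_l$ respectively, since each equals $1$ on the corresponding half. If you reorganize your argument around this choice of cutoffs (no IMS quadratic partition is needed), the rest of your scheme — identification of $\mathcal{N}_h^{[k]}$ with the single-well operator on the support, application of Theorem \ref{theo2}, treatment of the $\chi_0$ term via Assumption \ref{hyp2}(iii) — goes through.
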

\begin{proof}
Let $\lambda$ be an eigenvalue of $\mathcal{N}_h^{[k]}$ such that $\left|\lambda-\mu_0^{[k]}\right|\leq c_0h$ and $u$ the associated eigenfunction. We denote by $\chi_r$ the smooth cutoff function which is equal to $0$ for $\sigma\in\text{B}_l(\hat{\eta})$ and $1$ for $\sigma\in\Gamma\setminus\text{B}_l(2\hat{\eta})$. The function $\phi$ is defined on $[-L,+L)$, so we will extend $\phi$ so that it is defined on $\mathbb{R}$ and verifies Assumption \ref{hyp2}. Therefore, we consider the function $\psi=\chi_r\mathrm{e}^{\frac{\phi}{h}}u$ as a function on $\mathbb{R}$ and we apply the Theorem \ref{theo2} with $z=\lambda$ to obtain that
$$ ch\left\| \chi_r \mathrm{e}^{\frac{\phi}{h}}u \right\|\leq \left\| \langle \tau \rangle ^{2k+3} \left( \mathcal{N}_h^{[k],\phi}-z\right)\chi_r \mathrm{e}^{\frac{\phi}{h}}u\right\|+h\left\| \chi_0\left( \frac{\sigma-sr}{Rh^{1/2}}\right)\mathrm{e}^{\frac{\phi}{h}}u\right\|. $$
By using that $u$ is an eigenfunction of $\mathcal{N}_h^{[k]}$ associated with $\lambda$, we get
$$ \left\| \langle \tau \rangle ^{2k+3} \left( \mathcal{N}_h^{[k],\phi}-z\right)\chi_r \mathrm{e}^{\frac{\phi}{h}}u\right\|=\left\| \langle \tau \rangle ^{2k+3}\left[ \mathcal{N}_h^{[k]},\chi_r\right]u\right\|. $$
But $\text{Supp}\left(\chi_r'\right)\subset (s_l-2\hat{\eta},s_l-\hat{\eta})\cup(s_l+\hat{\eta},s_l+2\hat{\eta})$, and so for $\hat{\eta}$ small enough, we can assume that $\phi\leq \frac{\epsilon}{2}$ in Supp$\left(\left[ \mathcal{N}_h^{[k]},\chi_r\right]\right)$. Therefore,
$$ ch\left\| \chi_r \mathrm{e}^{\frac{\phi}{h}}u \right\|\leq \mathrm{e}^{\frac{\epsilon}{2h}} \left\| \langle \tau \rangle ^{2k+3} \left[ \mathcal{N}_h^{[k]},\chi_r\right] u \right\|_{L^2\left([-L,L)\times\mathbb{R}\right)} +Ch\left\| u \right\|_{L^2\left([-L,L)\times\mathbb{R}\right)}. $$
By the normal Agmon estimates,
$$ \left\| \chi_r \mathrm{e}^{\frac{\phi}{h}}u \right\|\leq C \mathrm{e}^{\frac{\epsilon}{h}}\left\| u \right\|_{L^2\left([-L,L)\times\mathbb{R}\right)}, $$
and by symmetry, we get
$$ \left\| \chi_l \mathrm{e}^{\frac{\phi}{h}}u \right\|\leq C \mathrm{e}^{\frac{\epsilon}{h}}\left\| u \right\|_{L^2\left([-L,L)\times\mathbb{R}\right)}. $$
Since $\hat{\eta}$ is small enough, then
\begin{align*}
    \left\| \mathrm{e}^{\frac{\phi}{h}}u\right\|_{L^2\left( (-L,+L)\times\mathbb{R} \right)} &\leq \left\| \mathrm{e}^{\frac{\phi}{h}}u\right\|_{L^2\left( (-L,0)\times\mathbb{R} \right)}+\left\| \mathrm{e}^{\frac{\phi}{h}}u\right\|_{L^2\left( (0,+L)\times\mathbb{R} \right)}\\&\leq \left\| \chi_r \mathrm{e}^{\frac{\phi}{h}}u \right\|+\left\| \chi_l \mathrm{e}^{\frac{\phi}{h}}u \right\|\\&\leq  C \mathrm{e}^{\frac{\epsilon}{h}}\left\| u \right\|_{L^2\left([-L,L)\times\mathbb{R}\right)}.
\end{align*}
\end{proof}
\subsection{WKB approximation in the right well}
\label{Section 8}
In order to perform the tunneling analysis, an explicit approximation of the ground state energy of the single well operators must be found. This approximation is a direct consequence of the Theorem \ref{theo2}.\\
For $\hat{\eta}>0$, we denote by $I_{\hat{\eta},r}:=(s_l-2L+\hat{\eta},s_l-\hat{\eta})$. Let $\psi_{h,r}^{[k]}=\chi_{\hat{\eta},r}\Psi_{h,r}^{[k]}$, with
\begin{enumerate}
\item[$\bullet$]$\chi_{\eta,r}$ a smooth cutoff function such that $\chi_{\hat{\eta},r}\equiv 1$ on $I_{2\hat{\eta},r}$ and $\chi_{\hat{\eta},r}\equiv 0$ on $\mathbb{R}\setminus I_{\hat{\eta},r}$, that is to say $\text{supp}\left( \chi_{\hat{\eta},r}\right)\subset I_{\hat{\eta},r}$.
\item[$\bullet$]$\Psi_{h,r}^{[k]}$ is the WKB expansions (already defined in Theorem \ref{BKW}), such that $\left\| \Psi_{h,r}^{[k]}\right\|=1$.
\item[$\bullet$]Let $\Pi_r$ the orthogonal projection on the first eigenspace $\operatorname{span}\left\{ u_{h,r}^{[k]}\right\}$ for $\mathcal{N}_{h,r}^{[k]}$.
\end{enumerate}
\begin{prop}
We have
$$ \left\| \psi_{h,r}^{[k]}-\Pi_r\psi_{h,r}^{[k]}\right\|_{L^2\left( \mathbb{R}\times\mathbb{R}\right)}=\mathcal{O}(h^{\infty}). $$
\label{akher1}
\end{prop}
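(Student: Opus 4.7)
The strategy is the classical one for turning a well-constructed WKB quasi-mode into an exponentially good approximation of a true eigenfunction. First I would verify that $\psi_{h,r}^{[k]}$ is an $\mathcal{O}(h^\infty)$ quasi-mode in $L^2$, namely
$$\bigl(\mathcal{N}_{h,r}^{[k]}-\delta_1^{[k]}(h)\bigr)\psi_{h,r}^{[k]}=\mathcal{O}(h^\infty)\quad\text{in } L^2(\mathbb{R}\times\mathbb{R}),$$
and then invoke the $\mathcal{O}(h)$ spectral gap at the bottom of the spectrum of $\mathcal{N}_{h,r}^{[k]}$ to deduce the claimed projection estimate.

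For the quasi-mode property I would split
$$\bigl(\mathcal{N}_{h,r}^{[k]}-\delta_1^{[k]}(h)\bigr)\psi_{h,r}^{[k]}=\chi_{\hat{\eta},r}\bigl(\mathcal{N}_{h,r}^{[k]}-\delta_1^{[k]}(h)\bigr)\Psi_{h,r}^{[k]}+\bigl[\mathcal{N}_{h,r}^{[k]},\chi_{\hat{\eta},r}\bigr]\Psi_{h,r}^{[k]}.$$
The first piece is $\mathcal{O}(h^\infty)\mathrm{e}^{-\Phi_r/h}$ by Theorem \ref{BKW}, hence $\mathcal{O}(h^\infty)$ in $L^2$. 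The commutator is a first-order differential operator in $\sigma$ whose coefficients are supported in $\operatorname{supp}(\chi_{\hat{\eta},r}')\subset I_{\hat{\eta},r}\setminus I_{2\hat{\eta},r}$; this support is bounded away from $s_r$, so $\Phi_r\geq c>0$ there, and when evaluated against the WKB ansatz $h^{-1/4}\mathrm{e}^{-\Phi_r/h}\mathrm{e}^{\mathrm{i}\mathfrak{g}_r/h}\sum_{j}a_{1,j}^{[k]}h^j$ it produces an $L^2$ contribution of order $\mathrm{e}^{-c/h}=\mathcal{O}(h^\infty)$, once the $\tau$-decay of the Montgomery eigenfunctions $u_{\sigma,\mathfrak{w}_r(\sigma)}^{[k]}$ is used to handle the $\tau$-integration.

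Next I would establish the spectral gap. Theorem \ref{BKW} produces quasi-modes for both $n=1$ and $n=2$, so by the min-max principle $\nu_1(h)$ and $\nu_2(h)$ lie within $\mathcal{O}(h^\infty)$ of $\delta_1^{[k]}(h)$ and $\delta_2^{[k]}(h)$ respectively. Since
$$\delta_2^{[k]}(h)-\delta_1^{[k]}(h)=h\,\zeta\,(\nu^{[k]})''(\xi_0^{[k]})+\mathcal{O}(h^2),$$
there exists $c>0$ with $\nu_2(h)-\nu_1(h)\geq ch$ for all $h$ small. Decomposing $\psi_{h,r}^{[k]}=\alpha\,u_{h,r}^{[k]}+\psi^{\perp}$ with $\psi^{\perp}\perp u_{h,r}^{[k]}$ and using the spectral theorem for the self-adjoint operator $\mathcal{N}_{h,r}^{[k]}$,
$$\bigl\|\bigl(\mathcal{N}_{h,r}^{[k]}-\delta_1^{[k]}(h)\bigr)\psi^{\perp}\bigr\|\geq\bigl(ch-|\nu_1(h)-\delta_1^{[k]}(h)|\bigr)\|\psi^{\perp}\|,$$
which combined with $|\nu_1(h)-\delta_1^{[k]}(h)|=\mathcal{O}(h^\infty)$ and the quasi-mode bound yields
$$\|\psi_{h,r}^{[k]}-\Pi_r\psi_{h,r}^{[k]}\|=\|\psi^{\perp}\|\leq\frac{\mathcal{O}(h^\infty)}{ch-\mathcal{O}(h^\infty)}=\mathcal{O}(h^\infty).$$

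The main obstacle is securing the sharp $\mathcal{O}(h)$ lower bound on $\nu_2(h)-\nu_1(h)$: the upper bound from the second quasi-mode is automatic, whereas the matching lower bound rules out any hidden eigenvalue in the interval $[\nu_1(h),\delta_2^{[k]}(h)-ch]$. This is exactly where Theorem \ref{theo2} intervenes: applied to a function orthogonal to $u_{h,r}^{[k]}$ in the spectral window $|z-\mu_0^{[k]}|\leq c_0h$, it gives a lower bound modulo the cutoff $\chi_0((\sigma-s_r)/(Rh^{1/2}))$, and absorbing this cutoff contribution requires the Agmon localization of $u_{h,r}^{[k]}$ provided by Corollary \ref{single} together with the harmonic-oscillator spectrum emerging from the transport equation \eqref{equation transport}.
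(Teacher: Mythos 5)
Your proposal is correct and follows essentially the same route as the paper: show that $\psi_{h,r}^{[k]}$ is an $\mathcal{O}(h^\infty)$ quasi-mode (the cutoff commutator being exponentially small since $\Phi_r$ is bounded below on $\operatorname{supp}\chi_{\hat{\eta},r}'$), then use the order-$h$ spectral gap of $\mathcal{N}_{h,r}^{[k]}$ together with the spectral theorem to bound $\|\psi_{h,r}^{[k]}-\Pi_r\psi_{h,r}^{[k]}\|$. The only difference is that the paper simply attributes the order-$h$ gap to Theorem \ref{BKW} and does not re-derive it through Theorem \ref{theo2} as you suggest in your last paragraph; your additional discussion of how to secure the gap lower bound is a legitimate elaboration of a point the paper leaves implicit, not a different proof.
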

\begin{proof}
Using the fact that the spectral gap between the lowest eigenvalues of $\mathcal{N}_{h,r}^{[k]}$ is of order $h$ (see Theorem \ref{BKW}) and that 
$$ \psi_{h,r}^{[k]}-\Pi_r \psi_{h,r}^{[k]}\in \left( \text{vect} \left\{ u_{h,r}^{[k]}\right\}\right)^{\perp}, $$
the Min-Max principle proves that there exists $c>0$ such that
$$ ch\left\| \psi_{h,r}^{[k]}-\Pi_r\psi_{h,r}^{[k]} \right\|\leq \left\| \left( \mathcal{N}_{h,r}^{[k]}-\mu_1^{sw}(h)\right)\psi_{h,r}^{[k]}  \right\|, $$
where $\mu_1^{sw}(h)$ is the smallest eigenvalue of $\mathcal{N}_{h,r}^{[k]}$ associated with $u_{h,r}^{[k]}$. Therefore, by applying Theorem \ref{BKW}, we get
$$ \left\| \psi_{h,r}^{[k]}-\Pi_r\psi_{h,r}^{[k]}\right\|_{L^2\left( \mathbb{R}\times\mathbb{R}\right)}=\mathcal{O}(h^{\infty}). $$
\end{proof}
The following Lemma gives some properties on the weight $\hat{\Phi}_{r,N,h}$ introduced in Proposition \ref{poidss}, and the proof of this Lemma is exactly like that of \cite[Lemma 2.6]{BHR-circle}.
\begin{lemma}
Let $K\subset I_{2\hat{\eta}}$ be a compact set. For all $N\in\mathbb{N}^{\ast}$ there exists $\epsilon_0>0$ such that for all $\epsilon\in(0,\epsilon_0)$, there exist $h_0>0$ and $R>0$ such that, for all $h\in(0,h_0)$, we have
\begin{enumerate}
\item[(1)]$\hat{\Phi}_{r,N,h}\leq \Phi_r$ on $I_{\hat{\eta},r}$.
\item[(2)]$\hat{\Phi}_{r,N,h}=\tilde{\Phi}_{r,N,h}$ on $K$.
\item[(3)]$\hat{\Phi}_{r,N,h}=\sqrt{1-\epsilon}\Phi_r=\Phi_{r,\epsilon}$ on $\operatorname{supp}\chi_{\hat{\eta},r} '$.
\end{enumerate}
\label{na}
\end{lemma}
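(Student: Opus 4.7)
The plan is to verify, for each of the three claims, which of the two arguments of the minimum defining $\hat\Phi_{r,N,h}$ is realized. Since $N\geq 1$, one has $\ln\!\left(\max(\Phi_r/h, N)\right) \geq \ln N \geq 0$, so directly from the definition of $\tilde\Phi_{r,N,h}$ one obtains $\tilde\Phi_{r,N,h}(s) \leq \Phi_r(s)$ pointwise on $I_{\hat\eta,r}$. As $\hat\Phi_{r,N,h}$ is the minimum of $\tilde\Phi_{r,N,h}$ with a second quantity, assertion (1) follows immediately.

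For (2), one exploits the quadratic vanishing of $\Phi_r$ at $s_r$. On the compact set $K \subset I_{2\hat\eta}$, $\Phi_r$ is uniformly bounded by $M_K := \max_K\Phi_r$. On the other hand, $\operatorname{supp}\chi_r' \subset I_{\hat\eta,r}\setminus I_{2\hat\eta,r}$, and the lower bound \eqref{rmk00} combined with the non-degeneracy of the minimum of $\gamma_r$ at $s_r$ gives $\Phi_r(t) \geq m_\eta > M_K$ for every $t \in \operatorname{supp}\chi_r'$. Consequently the (constant) second argument of the minimum in the definition of $\hat\Phi_{r,N,h}$ is bounded below by a quantity strictly exceeding $M_K$ as soon as $\epsilon_0$ is fixed small enough. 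For such $\epsilon_0$, this constant exceeds $\tilde\Phi_{r,N,h}(s) \leq \Phi_r(s) \leq M_K$ on $K$, so the minimum is realized by $\tilde\Phi_{r,N,h}$, which proves (2).

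For (3), on $\operatorname{supp}\chi_r'$ the function $\Phi_r$ is bounded below by some $c_\eta > 0$. Hence, provided $h$ is small enough, $\Phi_r/h \geq N$ there, so
$$
\tilde\Phi_{r,N,h}(s) = \Phi_r(s) - Nh\ln\!\left(\Phi_r(s)/h\right).
$$
The logarithmic correction is of order $h\log(1/h)$, while $(1-\sqrt{1-\epsilon})\Phi_r(s) \geq \tfrac{\epsilon}{2}c_\eta$ is strictly positive; consequently $\tilde\Phi_{r,N,h}(s) \geq \sqrt{1-\epsilon}\,\Phi_r(s)$ for $h < h_0(\epsilon)$. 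Evaluating the second argument of the minimum at $t = s$ yields the value $\sqrt{1-\epsilon}\,\Phi_r(s)$, and the non-negativity of the integral correction appearing in that argument ensures this is indeed the infimum. Therefore, on $\operatorname{supp}\chi_r'$, the minimum is realized by the second term and equals $\sqrt{1-\epsilon}\,\Phi_r(s) = \Phi_{r,\epsilon}(s)$.

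The only delicate point is the joint calibration of the four parameters $(N, \epsilon_0, R, h_0)$. One first fixes $N$ together with $R = \sqrt{N/c_0}$ as in Proposition \ref{poidss}(b); then $\epsilon_0$ is chosen small enough to secure (2), i.e.\ so that $\sqrt{1-\epsilon_0}$ times the second-argument constant still strictly dominates $M_K$; and finally $h_0$ is taken small enough that on $\operatorname{supp}\chi_r'$ the logarithmic correction $Nh\log(1/h)$ is dominated by the $\epsilon$-defect $(1-\sqrt{1-\epsilon})c_\eta$, which is the mechanism enabling (3). The interplay between (2) (pushing $\epsilon$ down) and (3) (requiring $\epsilon$ not too small relative to $h\log(1/h)$) is the main point to keep track of, but once these orders are respected the three assertions hold simultaneously.
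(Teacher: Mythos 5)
The paper itself offers no written proof here (it defers to \cite[Lemma 2.6]{BHR-circle}), so your argument has to stand on its own, and it does not quite do so. Part (1) is fine. The gap is in (2): you assert $\Phi_r(t)\geq m_\eta>M_K$ for all $t\in\operatorname{supp}\chi_r'$, i.e. $\inf_{\operatorname{supp}\chi_r'}\Phi_r>\max_K\Phi_r$. Nothing guarantees this for an arbitrary compact $K\subset I_{2\hat{\eta},r}$: $K$ may contain points near that end of $I_{2\hat{\eta},r}$ which lies on the side of $s_r$ where $\Phi_r$ takes its larger boundary value, while $\operatorname{supp}\chi_r'$ contains points just beyond the other end, where $\Phi_r$ is smaller; unless the two boundary values $\Phi_r(s_l-2\hat{\eta})$ and $\Phi_r(s_l-2L+2\hat{\eta})$ happen to coincide, the inequality $m_\eta>M_K$ fails once $K$ nearly exhausts $I_{2\hat{\eta},r}$. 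There is also an internal inconsistency in how you read the second argument of the minimum: in (2) you treat it as a constant (the literal $\int_{[s_r,t]}$ in the displayed definition), but in (3) you ``evaluate it at $t=s$'' and obtain $\sqrt{1-\epsilon}\,\Phi_r(s)$, which is only possible if the integral runs from $s$ to $t$, i.e. if it is the Agmon distance $d(s,t)$ between $s$ and $t$. Under the literal constant reading the value at $t=s$ would be $2\sqrt{1-\epsilon}\,\Phi_r(s)$, and, worse, a constant cannot coincide with the non-constant function $\sqrt{1-\epsilon}\,\Phi_r$ on $\operatorname{supp}\chi_r'$, so (3) would simply be false; the definition must be read with $d(s,t)$, and then your proof of (2) has to be redone.

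With that reading, the argument that works (and is the one behind \cite[Lemma 2.6]{BHR-circle}) compares $\Phi_r(t)+d(s,t)$ with $\Phi_r(s)$ rather than $\sup_K\Phi_r$ with $\inf_{\operatorname{supp}\chi_r'}\Phi_r$: for $s\in K$ and $t\in\operatorname{supp}\chi_r'$ on the same side of $s_r$ one gets $\Phi_r(t)+d(s,t)-\Phi_r(s)=2\left(\Phi_r(t)-\Phi_r(s)\right)\geq 2c_K$, and on the opposite side $\Phi_r(t)+d(s,t)-\Phi_r(s)=2\Phi_r(t)\geq 2c_K$, where $c_K>0$ comes from $\operatorname{dist}(K,\operatorname{supp}\chi_r')>0$ and the strict growth of $\Phi_r$ in $|\sigma-s_r|$ away from $s_r$; choosing $\epsilon_0$ so small that $\left(1-\sqrt{1-\epsilon}\right)\sup_K\Phi_r\leq 2\sqrt{1-\epsilon}\,c_K$ then gives $\sqrt{1-\epsilon}\left(\Phi_r(t)+d(s,t)\right)\geq\Phi_r(s)\geq\tilde{\Phi}_{r,N,h}(s)$ on $K$, which is (2). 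Similarly, in (3) the identification of the infimum with the value at $t=s$ requires the eikonal (triangle) inequality $\Phi_r(t)+d(s,t)\geq\Phi_r(s)$, which follows from $\left|\Phi_r(s)-\Phi_r(t)\right|\leq d(s,t)$, not merely from the non-negativity of the integral term. Your handling of the $Nh\ln(1/h)$ correction in (3) and the order in which $N$, $\epsilon$ and $h_0$ are fixed are correct.
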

Using Theorem \ref{theo2}, Lemma \ref{na}, we follow the same proof as \cite[Proposition 6.3]{BHR-purely} (that of \cite[Proposition 2.7]{BHR-circle} as well) and we obtain the following Proposition. 
\begin{prop}
We have
$$ \mathrm{e}^{\frac{\Phi_r}{h}}\left( \Psi_{h,r}^{[k]}-u_{h,r}^{[k]} \right)=\mathcal{O}(h^{\infty}), $$
and
$$ \langle \tau \rangle^{k+1} \mathrm{e}^{\frac{\Phi_r}{h}}\left( \Psi_{h,r}^{[k]}-u_{h,r}^{[k]} \right)=\mathcal{O}(h^{\infty}), $$
in $\mathcal{C}^1\left( K ; L^2\left( \mathbb{R}\times \mathbb{R}\right)\right)$, where $K\subset I_{2\hat{\eta},r}$ is a compact set.
\label{App}
\end{prop}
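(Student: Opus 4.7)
The plan is to compare the cutoff WKB quasimode $\psi_{h,r}^{[k]}=\chi_{\hat{\eta},r}\Psi_{h,r}^{[k]}$ with the true ground state $u_{h,r}^{[k]}$ by propagating a source-term estimate through the weighted coercivity of Theorem \ref{theo2}. I would first normalize the phase of $u_{h,r}^{[k]}$ so that $\alpha_h:=\langle\psi_{h,r}^{[k]},u_{h,r}^{[k]}\rangle>0$; Proposition \ref{akher1} then gives $\alpha_h=1+\mathcal{O}(h^\infty)$. Set $v_h:=\psi_{h,r}^{[k]}-\alpha_h u_{h,r}^{[k]}$, so that $\|v_h\|_{L^2}=\mathcal{O}(h^\infty)$, and decompose
\begin{equation*}
(\mathcal{N}_{h,r}^{[k]}-\mu_1^{sw}(h))v_h = \chi_{\hat{\eta},r}(\mathcal{N}_{h,r}^{[k]}-\delta_1^{[k]}(h))\Psi_{h,r}^{[k]} + [\mathcal{N}_{h,r}^{[k]},\chi_{\hat{\eta},r}]\Psi_{h,r}^{[k]} + (\delta_1^{[k]}(h)-\mu_1^{sw}(h))\psi_{h,r}^{[k]}.
\end{equation*}
By Theorem \ref{BKW} the first term is $\mathcal{O}(h^\infty)\mathrm{e}^{-\Phi_r/h}$, the last scalar remainder is $\mathcal{O}(h^\infty)$ in $L^2$, and the commutator is supported in $\operatorname{supp}\chi_{\hat{\eta},r}'$, at a fixed positive distance from $s_r$.

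Next, I would apply Theorem \ref{theo2} with the weight $\Phi=\hat{\Phi}_{r,N,h}$ from Proposition \ref{poidss} (which verifies Assumption \ref{hyp2}) and $z=\mu_1^{sw}(h)$, to $\psi=\mathrm{e}^{\hat{\Phi}_{r,N,h}/h}v_h$. Condition (iii) of Assumption \ref{hyp2} makes the localization term $\chi_0((\sigma-s_r)/Rh^{1/2})$ harmless since $\hat{\Phi}_{r,N,h}=\mathcal{O}(h)$ there. For the weighted commutator piece, Lemma \ref{na}(3) yields $\hat{\Phi}_{r,N,h}=\sqrt{1-\epsilon}\,\Phi_r$ on $\operatorname{supp}\chi_{\hat{\eta},r}'$, so on that set
\begin{equation*}
\mathrm{e}^{\hat{\Phi}_{r,N,h}/h}\mathrm{e}^{-\Phi_r/h}=\mathrm{e}^{-(1-\sqrt{1-\epsilon})\Phi_r/h}=\mathcal{O}(h^\infty),
\end{equation*}
because $\Phi_r$ is bounded below there. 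Combined with the polynomial boundedness of $\mathrm{e}^{\Phi_r/h}\Psi_{h,r}^{[k]}$ supplied by Corollary \ref{single}, the commutator contribution is $\mathcal{O}(h^\infty)$ in the $\mathrm{e}^{\hat{\Phi}_{r,N,h}/h}$-weighted norm, producing
\begin{equation*}
\|\mathrm{e}^{\hat{\Phi}_{r,N,h}/h}v_h\|_{L^2(\mathbb{R}^2)}=\mathcal{O}(h^\infty).
\end{equation*}
The $\langle\tau\rangle^{k+1}$-weighted version follows by the same scheme combined with the moment bounds in Proposition \ref{propp}.

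To convert back to the weight $\Phi_r$ on a compact $K\subset I_{2\hat{\eta},r}$, I would use Lemma \ref{na}(2): $\hat{\Phi}_{r,N,h}=\tilde{\Phi}_{r,N,h}=\Phi_r-Nh\ln\max(\Phi_r/h,N)$ on $K$, which yields $\mathrm{e}^{\Phi_r/h}\leq C_K h^{-N}\mathrm{e}^{\hat{\Phi}_{r,N,h}/h}$. Since the previous step provides a gain of arbitrary polynomial order in $h$, taking $N$ large enough gives $\|\mathrm{e}^{\Phi_r/h}v_h\|_{L^2(K\times\mathbb{R})}=\mathcal{O}(h^\infty)$, and similarly with the $\langle\tau\rangle^{k+1}$ weight. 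The $\mathcal{C}^1(K;L^2(\mathbb{R}\times\mathbb{R}))$ upgrade is then obtained by differentiating the source identity for $v_h$ in $\sigma$ on a slightly smaller compact subset and rerunning the same weighted argument, or equivalently by a Caccioppoli-type interior regularity estimate for the uniformly elliptic operator $\mathcal{N}_{h,r}^{[k]}$ near $s_r$, which costs only a polynomial factor of $h^{-1}$ and is therefore absorbed into $\mathcal{O}(h^\infty)$.

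The main obstacle is the commutator analysis of step two: one must carefully verify that the exponential gain $\mathrm{e}^{-(1-\sqrt{1-\epsilon})\Phi_r/h}$ on $\operatorname{supp}\chi_{\hat{\eta},r}'$ truly dominates the $h^{-N}$ loss incurred at the weight-conversion step, and that all the cutoffs introduced by $\chi_{\hat{\eta},r}$, $\chi_0$, and the pseudo-differential truncation of Section \ref{section5} combine coherently with the $\langle\tau\rangle^{2k+3}$-moment on the right-hand side of Theorem \ref{theo2}. Once this accounting is in place, the choice of $N$ is free and the iteration yields both claimed $\mathcal{O}(h^\infty)$ estimates.
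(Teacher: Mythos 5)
Your proposal is correct and follows essentially the paper's own route: the paper proves Proposition \ref{App} precisely by combining Theorem \ref{theo2} with the weights of Proposition \ref{poidss}(c) and Lemma \ref{na}, following \cite[Proposition 6.3]{BHR-purely} and \cite[Proposition 2.7]{BHR-circle}, and your write-up reconstructs exactly that scheme (Proposition \ref{akher1} for the $L^2$ closeness, the capped weight $\hat{\Phi}_{r,N,h}$ in the weighted coercivity estimate, Lemma \ref{na}(2)--(3) to kill the commutator term and to convert back to the weight $\Phi_r$ on $K$ at only polynomial cost, and control of $\sigma$-derivatives for the $\mathcal{C}^1$ upgrade). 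Two cosmetic remarks only: the polynomial bound on $\mathrm{e}^{\Phi_r/h}\Psi_{h,r}^{[k]}$ and its $\tau$-moments comes directly from the explicit WKB form in Theorem \ref{BKW} rather than from Corollary \ref{single} (which concerns the true eigenfunction), and the $\langle\tau\rangle^{k+1}$-weighted version is more naturally obtained from the moment estimate \eqref{le3} of Section \ref{section6} than from Proposition \ref{propp}.
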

\section{Interaction matrix and tunneling effect}
\label{section9}
The goal of this section is to estimate the difference between the first two eigenvalues, $\nu_2(h)-\nu_1(h)$, of the operator $\mathcal{N}_h^{[k]}$ which is defined in \eqref{N}. For this, we will follow the same strategy as in \cite{BHR-circle} and \cite{BHR-purely}.
We denote by $\mu_1^{sw}(h)$ the common ``single well" groundstate energy of operators $\mathcal{N}_{h,r,0}^{[k]}$ and $\mathcal{N}_{h,l,0}^{[k]}$. By Corollary \ref{single}, Proposition \ref{double} and the Min-Max principle, we get
$$ \mu_1^{sw}(h)-\tilde{\mathcal{O}}\left( \mathrm{e}^{-\frac{\mathrm{S}}{h}} \right)\leq \nu_1(h)\leq \nu_2(h)\leq \mu_1^{sw}(h)+\tilde{\mathcal{O}}\left( \mathrm{e}^{-\frac{\mathrm{S}}{h}} \right),  $$
where we recall $\mathrm{S}$ is defined in \eqref{S} and $\tilde{\mathcal{O}}$ is defined in \cite[Notation 3.3]{BHR-circle}. 

First, we will construct an orthonormal basis of 
$$ \mathcal{E}=\displaystyle \bigoplus_{i=1}^2\text{Ker}\left(\mathcal{N}_h^{[k]}-\nu_i(h)\right), $$
and we will write the matrix of the operator $\mathcal{N}_h^{[k]}$ in this base. For that, we will start with two functions $u_{h,r}^{[k]}$ and $u_{h,l}^{[k]}$ (the two eigenfunctions of $\mathcal{N}_{h,r,0}^{[k]}$ and $\mathcal{N}_{h,l,0}^{[k]}$ respectively associated with the same first eigenvalue $\mu_1^{sw}(h)$). Inspired from \eqref{aaa1} and \eqref{aaa2}, we define the two functions $\phi_{h,r}^{[k]}$ and $\phi_{h,l}^{[k]}$ by
$$ \phi_{h,r}^{[k]}(\sigma,\tau)=\begin{cases}
\mathrm{e}^{-\mathrm{i}\beta_0\sigma/{h^{k+2}}}u_{h,r}^{[k]}(\sigma,\tau) & \text{if} \,\,\,-L\leq \sigma \leq s_l-\hat{\eta}/2, \\
\mathrm{e}^{-\mathrm{i}\beta_0(\sigma-2L)/{h^{k+2}}}u_{h,r}^{[k]}(\sigma-2L,\tau) & \text{if} \,\,\,s_l+\hat{\eta}/2\leq\sigma\leq L,
\end{cases} $$
and
$$ \phi_{h,l}^{[k]}(\sigma,\tau)=\begin{cases}
\mathrm{e}^{-\mathrm{i}\beta_0(\sigma+2L)/{h^{k+2}}}u_{h,l}^{[k]}(\sigma+2L,\tau) & \text{if} \,\,\,-L\leq \sigma \leq s_r-\hat{\eta}/2, \\
\mathrm{e}^{-\mathrm{i}\beta_0\sigma/{h^{k+2}}}u_{h,l}^{[k]}(\sigma,\tau) & \text{if} \,\,\,s_r+\hat{\eta}/2\leq\sigma\leq L.
\end{cases} $$
We will truncate these two functions so that they are defined on $\Gamma\times\mathbb{R}$, and then we will build from these two functions a basis of $\mathcal{E}$.

For $\alpha\in \left\{l,r \right\}$, we introduce the quasimodes $f_{h,\alpha}^{[k]}$ defined on $\Gamma\times\mathbb{R}$ by
$$ f_{h,\alpha}^{[k]}=\chi_{\eta,\alpha}\phi_{h,\alpha}^{[k]} $$
where $\chi_{\eta,r}$ is the cut-off function introduced in the beginning of Section \ref{Section 8}, $\chi_{\eta,l}=U\chi_{\eta,r}$ is defined by the symmetry operator (see Section \ref{Left well}).

Let $\Pi$ be the orthogonal projection on $\mathcal{E}$, and consider the new quasimodes, for $\alpha\in \left\{l,r \right\}$,
$$ g_{h,\alpha}^{[k]}=\Pi f_{h,\alpha}^{[k]}. $$
By Proposition \ref{double}, we have (see Section 3 of \cite{BHR-circle})
$$ \langle f_{h,\alpha}^{[k]},f_{h,\beta}^{[k]} \rangle=1+\tilde{\mathcal{O}}\left( \mathrm{e}^{-\frac{2S}{h}} \right)\,\,\,\,\,\text{if}\,\,\,\,\,\alpha=\beta, $$
$$ \langle f_{h,\alpha}^{[k]},f_{h,\beta}^{[k]} \rangle=\tilde{\mathcal{O}}\left( \mathrm{e}^{-\frac{S}{h}} \right)\,\,\,\,\,\text{if}\,\,\,\,\,\alpha\neq\beta, $$
and
$$ \left\|g_{h,\alpha}^{[k]}-f_{h,\alpha}^{[k]} \right\|+\left\| \partial_s\left( g_{h,\alpha}^{[k]}-f_{h,\alpha}^{[k]}  \right) \right\|=\tilde{\mathcal{O}}\left( \mathrm{e}^{-\frac{S}{h}}  \right). $$
The base $\left\{ g_{h,l}^{[k]},g_{h,r}^{[k]} \right\}$ is a priori not orthonormal, and by the Gram-Schmidt process, we can transform it to an orthonormal basis $\mathcal{B}_h=\left\{ \tilde{g}_{h,l}^{[k]},\tilde{g}_{h,r}^{[k]} \right\}$ defined by
$$ \tilde{g}_{h,\alpha}^{[k]}=g_{h,\alpha}^{[k]}G^{-\frac{1}{2}}, $$
where $G$ is the Gram-Schmidt matrix $\left( \langle g_{h,\alpha}^{[k]},g_{h,\beta}^{[k]} \rangle \right)_{\alpha,\beta\in\left\{l,r\right\}}$. With this construction, $\mathcal{B}_h$ is an orthonormal basis of $\mathcal{E}$. Let $M$ be the matrix of $\mathcal{N}_h^{[k]}$ relative to the basis $\mathcal{B}_h$. Then, using \cite[Proposition 3.11]{BHR-circle}, we get
\begin{equation}
    \nu_2(h)-\nu_1(h)=2\left| w_{l,r} \right|+\tilde{\mathcal{O}}\left( \mathrm{e}^{-\frac{2S}{h}} \right),\,\,\,\,\,w_{l,r}=\langle r_{h,l}^{[k]},f_{h,r}^{[k]} \rangle,
    \label{a5}
\end{equation}
where $r_{h,\alpha}^{[k]}=\left( \mathcal{N}_h^{[k]}-\mu_1^{sw}(h) \right)f_{h,\alpha}^{[k]}$ for $\alpha\in\left\{ l,r \right\}$.

The goal is now to estimate the interaction term $w_{l,r}$. By integration by parts (see \cite[Lemma 7.1]{BHR-purely}), we have
\begin{equation}
     w_{l,r}=ih(w_{l,r}^u+w_{l,r}^d),
     \label{a2}
\end{equation}
with
$$ w_{l,r}^u=\displaystyle \int_{\mathbb{R}}\mathfrak{a}_h^{-1}\left( \phi_{h,l}^{[k]}\overline{\mathcal{D}_h\phi_{h,r}^{[k]}}+\mathcal{D}_h\phi_{h,l}^{[k]}\overline{\phi_{h,r}^{[k]}}\right)(0,\tau)d\tau, $$
and
$$  w_{l,r}^d=-\displaystyle \int_{\mathbb{R}}\mathfrak{a}_h^{-1}\left( \phi_{h,l}^{[k]}\overline{\mathcal{D}_h\phi_{h,r}^{[k]}}+\mathcal{D}_h\phi_{h,l}^{[k]}\overline{\phi_{h,r}^{[k]}}\right)(-L,\tau)d\tau, $$
where
$$ \mathcal{D}_h=hD_{\sigma}+h^{-k-1}\beta_0-\gamma(\sigma)\frac{\tau^{k+1}}{k+1}-h\tilde{\delta}(\sigma)c_{\mu}\frac{\tau^{k+2}}{k+2}+h^2c_{\mu}\mathcal{O}(\tau^{k+3}). $$
By the explicit form of $\phi_{h,\alpha}^{[k]}$, we can write
\begin{align*}
    w_{l,r}^u=&\displaystyle \int_{\mathbb{R}}\mathfrak{a}_h^{-1} u_{h,l}^{[k]}\overline{ \left(hD_{\sigma}-\gamma(\sigma)\frac{\tau^{k+1}}{k+1}-h\tilde{\delta}(\sigma)c_{\mu}\frac{\tau^{k+2}}{k+2}+h^2c_{\mu}\mathcal{O}(\tau^{k+3}) \right)u_{h,r}^{[k]}}(0,\tau)d\tau\\&+\displaystyle \int_{\mathbb{R}}\mathfrak{a}_h^{-1}\left( hD_{\sigma}-\gamma(\sigma)\frac{\tau^{k+1}}{k+1}-h\tilde{\delta}(\sigma)c_{\mu}\frac{\tau^{k+2}}{k+2}+h^2c_{\mu}\mathcal{O}(\tau^{k+3}) \right)u_{h,l}^{[k]}\overline{u_{h,r}^{[k]}}(0,\tau)d\tau.
\end{align*}
By Proposition \ref{App}, the explicit expression of the WKB in Theorem \ref{BKW} and the fact that $\Phi_r(0)+\Phi_l(0)=\mathrm{S}_u$, the expression for $w_{l,r}^u$ is given by
\begin{align*}
    &w_{l,r}^u=\displaystyle \int_{\mathbb{R}}\mathfrak{a}_h^{-1} \Psi_{h,l}^{[k]}\overline{ \left(hD_{\sigma}-\gamma(\sigma)\frac{\tau^{k+1}}{k+1}-h\tilde{\delta}(\sigma)c_{\mu}\frac{\tau^{k+2}}{k+2}+h^2c_{\mu}\mathcal{O}(\tau^{k+3}) \right)\Psi_{h,r}^{[k]}}(0,\tau)d\tau\\&+\displaystyle \int_{\mathbb{R}}\mathfrak{a}_h^{-1}\left( hD_{\sigma}-\gamma(\sigma)\frac{\tau^{k+1}}{k+1}-h\tilde{\delta}(\sigma)c_{\mu}\frac{\tau^{k+2}}{k+2}+h^2c_{\mu}\mathcal{O}(\tau^{k+3}) \right)\Psi_{h,l}^{[k]}\overline{\Psi_{h,r}^{[k]}}(0,\tau)d\tau+\tilde{\mathcal{O}}(h^{\infty})\mathrm{e}^{-\frac{\mathrm{S}_u}{h}}.
\end{align*}
Since $\Psi_{h,l}^{[k]}(0,\tau)=U\Psi_{h,r}^{[k]}(0,\tau)$ and $\Psi_{h,r}^{[k]}(\sigma,\tau)=a_{1,h}^{[k]}(\sigma,\tau)\mathrm{e}^{-\frac{\Phi_r(\sigma)}{h}}\mathrm{e}^{\frac{\mathrm{i}\mathfrak{g}_r(\sigma)}{h}}$ (see Theorem \ref{BKW}), we get
\begin{align*}
    \mathrm{e}^{\frac{\mathrm{S}_u}{h}}w_{l,r}^u=&\displaystyle \int_{\mathbb{R}}\mathfrak{a}_h^{-1} Ua_{1,h}^{[k]}\overline{ \left(hD_{\sigma}+\mathrm{i}\Phi_r'(\sigma)+\mathfrak{g}_r'(\sigma)-\gamma(\sigma)\frac{\tau^{k+1}}{k+1}  \right)a_{1,h}^{[k]}}(0,\tau)d\tau\\&+\displaystyle \int_{\mathbb{R}}\mathfrak{a}_h^{-1}\left( hD_{\sigma}-\mathrm{i}\Phi_r'(-\sigma)+\mathfrak{g}_r'(-\sigma)-\gamma(\sigma)\frac{\tau^{k+1}}{k+1} \right)Ua_{1,h}^{[k]}\overline{a_{1,h}^{[k]}}(0,\tau)d\tau+\tilde{\mathcal{O}}(h),
\end{align*}
with 
$$ \mathfrak{g}_r(\sigma)=\displaystyle\int_0^{\sigma}\gamma_r(\tilde{\sigma})^{\frac{1}{k+2}}\left( \xi_0^{[k]}-\Im \varphi_r(\tilde{\sigma}) \right)d\tilde{\sigma}, $$
where the function $\varphi_r$ is defined in Lemma \ref{Lemme}. Therefore,
\begin{align*}
\mathrm{e}^{\frac{\mathrm{S}_u}{h}}w_{l,r}^u &=2\displaystyle\int_{\mathbb{R}} \mathfrak{a}_h^{-1}\overline{ \left(\mathrm{i}\Phi_r'(0)+\mathfrak{g}_r'(0)-\gamma(0)\frac{\tau^{k+1}}{k+1}  \right)}Ua_{1,h}^{[k]} \overline{a_{1,h}^{[k]}}(0,\tau)d\tau+\mathcal{O}(h).
\end{align*}
We recall that by Theorem \ref{BKW}, we have $a_{1,h}^{[k]}=a_{1,0}^{[k]}+\mathcal{O}(h)$, with
$$ a_{1,0}^{[k]}(\sigma,\tau)=f_{1,0}(\sigma)u_{\sigma,\mathfrak{w}_r(\sigma)}^{[k]}\,\,\,\,\,\text{and}\,\,\,\,\,\mathfrak{w}_r(\sigma)=\mathrm{i}\Phi_r'(\sigma)+\mathfrak{g}_r(\sigma). $$
Using the expression of $f_{1,0}$ in Remark \ref{rmk1}, we get
\begin{equation}
    \mathrm{e}^{\frac{\mathrm{S}_u}{h}}w_{l,r}^u =\tilde{f}_{1,0}(0)^2\mathrm{e}^{-2\mathrm{i}\alpha_{1,0}(0)}\overline{\int_{\mathbb{R}}2 \left( \mathfrak{w}_r(0) -\gamma_r(0)\frac{\tau^{k+1}}{k+1}  \right)  \left( u_{0,\mathfrak{w}_r(0)}^{[k]} \right)^2 d\tau}+\mathcal{O}(h).
    \label{a1}
\end{equation}
Using \eqref{FH2}, we have
\begin{align*}
   \int_{\mathbb{R}}2 \left( \mathfrak{w}_r(0) -\gamma_r(0)\frac{\tau^{k+1}}{k+1}  \right)  \left( u_{0,\mathfrak{w}_r(0)}^{[k]} \right)^2 d\tau &= \displaystyle\int_{\mathbb{R}}\left( \left( \partial_{\xi}\mathcal{M}_{x,\xi}^{[k]}\right)_{0,\mathfrak{w}_r(0)}u_{0,\mathfrak{w}_r(0)}^{[k]}(\tau)\right)u_{0,\mathfrak{w}_r(0)}^{[k]}(\tau)d\tau\\&=\partial_{\xi}\mu^{[k]}(0,\mathfrak{w}_r(0)).
\end{align*}
According to Remark \ref{rmk1}, we can write 
$$ \tilde{f}_{1,0}(0)^2=\zeta^{1/2}\pi^{-1/2}\mathrm{A}_u\,\,\,\,\text{and}\,\,\,\,\,-\mathrm{i}\partial_{\xi}\mu^{[k]}(0,\mathfrak{w}_r(0))=\mathfrak{V}_r(0). $$
Therefore, we get
\begin{equation}
   \mathrm{i} \mathrm{e}^{\frac{\mathrm{S}_u}{h}}w_{l,r}^u=\zeta^{1/2}\pi^{-1/2}\overline{\mathfrak{V}_r(0)}\mathrm{A}_u\mathrm{e}^{-2\mathrm{i}\alpha_{1,0}(0)}+\mathcal{O}(h).
    \label{i1}
\end{equation}
By the same method, we can obtain
\begin{equation}
   -\mathrm{i} \mathrm{e}^{\frac{\mathrm{S}_d}{h}}w_{l,r}^d=\zeta^{1/2}\pi^{-1/2}\overline{\mathfrak{V}_r(-L)}\mathrm{A}_d\mathrm{e}^{-2\mathrm{i}\alpha_{1,0}(-L)}\mathrm{e}^{-2\mathrm{i}\beta_0L/h^{k+2}-2\mathrm{i}\mathfrak{g}_r(-L)/h}+\mathcal{O}(h) ,
    \label{i2}
\end{equation}
where $\mathrm{A}_d$ is defined in \eqref{Ad}.

By combining \eqref{a2}, \eqref{i1} and \eqref{i2}, we get
\begin{align*}
    w_{l,r}=&h\zeta^{1/2}\pi^{-1/2}\left(\overline{\mathfrak{V}_r(0)}\mathrm{A}_u\mathrm{e}^{-2\mathrm{i}\alpha_{1,0}(0)}\mathrm{e}^{-\frac{\mathrm{S}_u}{h}}+\overline{\mathfrak{V}_r(-L)}\mathrm{A}_d\mathrm{e}^{-2\mathrm{i}\alpha_{1,0}(-L)}\mathrm{e}^{-2\mathrm{i}\beta_0L/h^{k+2}-2\mathrm{i}\mathfrak{g}_r(-L)/h} \mathrm{e}^{-\frac{\mathrm{S}_d}{h}} \right)\\&+\mathrm{e}^{-\frac{\mathrm{S}}{h}}\mathcal{O}(h^2).
\end{align*}
By multiplying $w_{l,r}$ by $\operatorname{exp}\left(\mathrm{i} \mathfrak{g}_r(-L)/h+\mathrm{i}(\alpha_{1,0}(0)+\alpha_{1,0}(-L))+\mathrm{i}\beta_0L/h^{k+2} \right)$, and using \eqref{a5} and the fact that $h=\hbar^{\frac{1}{k+2}}$, we get
$$ \nu_2(\hbar)-\nu_1(\hbar)=2\left| \hat{w}_{l,r} \right|+\mathrm{e}^{-\frac{\mathrm{S}}{\hbar^{1/(k+2)}}}\mathcal{O}(\hbar^{\frac{2}{k+2}}), $$
with
$$ \hat{w}_{l,r}=\zeta^{1/2}\pi^{-1/2}\hbar^{\frac{1}{k+2}}\left(\overline{\mathfrak{V}_r(0)}\mathrm{A}_u\mathrm{e}^{-\frac{\mathrm{S}_u}{\hbar^{1/(k+2)}}}\mathrm{e}^{\mathrm{i}Lf(\hbar)}+\overline{\mathfrak{V}_r(-L)}\mathrm{A}_d\mathrm{e}^{-\frac{\mathrm{S_d}}{\hbar^{1/(k+2)}}}\mathrm{e}^{-\mathrm{i}Lf(\hbar)} \right), $$
where $ f(\hbar)=\frac{\mathfrak{g}_r(-L)}{\hbar^{1/(k+2)}L}-\alpha_0+\beta_0/\hbar $ and $\alpha_0$ is defined in \eqref{alpha0}.\\
Finally, combining this result with Proposition \ref{R4}, we get
$$ \lambda_2(\hbar)-\lambda_1(\hbar)=2\left| \tilde{w}_{l,r} \right|+\mathrm{e}^{-\frac{\mathrm{S}}{\hbar^{1/(k+2)}}}\mathcal{O}(\hbar^2), $$
with
$$ \tilde{w}_{l,r}=\zeta^{1/2}\pi^{-1/2}\hbar^{\frac{2k+3}{k+2}}\left(\overline{\mathfrak{V}_r(0)}\mathrm{A}_u\mathrm{e}^{-\frac{\mathrm{S_u}}{\hbar^{1/(k+2)}}}\mathrm{e}^{\mathrm{i}Lf(\hbar)}+\overline{\mathfrak{V}_r(-L)}\mathrm{A}_d\mathrm{e}^{-\frac{\mathrm{S_d}}{\hbar^{1/(k+2)}}}\mathrm{e}^{-L\mathrm{i}f(\hbar)} \right), $$
which ends the proof of Theorem \ref{Thm0}.
\\
\begin{center}
    ACKNOWLEDGEMENT
\end{center}
I would like to thank my advisor Fr{\'e}d{\'e}ric H{\'e}rau for his valuable remarks and discussion during the maturation of this paper. I thank also Bernard Helffer and Nicolas Raymond for their active reading and comments which have improved the presentation of this paper.
\bibliographystyle{plain}
\bibliography{bibliography.bib}

\end{document}